\DeclareMathOperator*{\argmax}{arg\,max}
\title{A Model of Justification}
\author{Sarah Ridout\thanks{ridout@g.harvard.edu. For helpful comments and suggestions, I am indebted to Christine Exley, Jerry Green, Shengwu Li, and especially Matthew Rabin and Tomasz Strzalecki.}}
\date{\today}
\newtheorem{axiom}{Axiom}
\newtheorem{definition}{Definition}[section]
\newtheorem{lemma}{Lemma}[section]
\newtheorem{proposition}{Proposition}[section]
\newtheorem{corollary}{Corollary}[section]
\newtheorem{theorem}{Theorem}[section]
\newtheorem{example}{Example}[section]
\begin{document}

\onehalfspacing

\maketitle

\begin{abstract}
I consider decision-making constrained by considerations of morality, rationality, or other virtues. The decision maker (DM) has a true preference over outcomes, but feels compelled to choose among outcomes that are top-ranked by some preference that he considers “justifiable.” This model unites a broad class of empirical work on distributional preferences, charitable donations, prejudice/discrimination, and corruption/bribery. I provide a behavioral characterization of the model. I also show that the set of justifications can be identified from choice behavior when the true preference is known, and that choice behavior substantially restricts both the true preference and justifications when neither is known. I argue that the justifiability model represents an advancement over existing models of rationalization because the structure it places on possible “rationales” improves tractability, interpretation and identification. 
\end{abstract}

\section{Introduction}

When a decision-maker believes that his choices reflect on his character---his virtues and values---his behavior may be inconsistent with maximization of a stable preference relation. This is not because he lacks a stable preference relation, but because he fears that aspects of his preferences are unjustifiable. To mask these unjustifiable preferences, he restricts himself to choices that would be made by good and reasonable people. Apparent inconsistencies in his preferences arise when these ``good and reasonable people'' disagree, and the decision maker breaks ties in accordance with his true preference. Thus, he appeals to different rationales in different situations.

\citet{exley2016excusing} provides a motivating example. The domain of choice consists of prizes paid to the subject and donations to a charity. Each subject is asked to make three types of choices: between a random prize and a sure prize, between a random donation and a sure donation, and between a random prize and a random donation. \citeauthor{exley2016excusing} finds that subjects penalize risk when the donation is risky relative to the prize, and favor risk when the payment is risky relative to the donation. This behavior is inconsistent with any stable risk preference, but it is well explained by the model of justification studied in this paper. Intuitively, subjects believe that it is wrong to refuse when offered the opportunity to donate on favorable terms, but good and reasonable people may disagree on which terms are ``favorable'' when risk is involved. This ambiguity allows subjects who are not particularly interested in donating to charity to reject donation opportunities without feeling, or looking, too selfish. Section \ref{sec:empirics} reviews several other examples in different domains, including sharing with others, discrimination in hiring, and bribery. 

The main results in this paper fall into two categories. The first category takes the true preference as a primitive of the model, and shows how to identify the preferences that the decision-maker (henceforth DM) considers justifiable. There are two reasons for assuming that the true preference is observable. First, subjects may be willing to reveal their true preference when their choices are anonymous, or implemented by someone else. Some experiments lend support to this idea. For instance, \citet{hamman2010self} found that subjects made blatantly selfish decisions when those decisions were carried out by an agent. Second, it is easier to understand the results for the unknown-preference case after covering the known-preference case, because the results for the more complicated case build on the results for the simpler one. 

Theorem \ref{thm:main} is the first representation theorem in the paper. Given the true preference, it provides necessary and sufficient conditions for behavior to be consistent with the justifiability model. Here, a ``justification'' is a complete and transitive relation on the same domain as the true preference. The key axiom is Irrelevance of Unjustifiable Alternatives. An alternative $a$ is unjustifiable in the presence of a set $A$ if $a$ is weakly preferred to everything selected from $\{a\} \cup A$, but $a$ is not itself selected. The axiom says that $a$ is irrelevant whenever $A$ is present: adding or removing $a$ has no effect on choice. Intuitively, this is because every justifiable preference strictly prefers at least one element of $A$ to $a$. The proof of the theorem shows that the (maximal) set of justifiable preferences is the set of complete and transitive relations consistent with all these restrictions. Proposition \ref{prop:dominance} extends Theorem \ref{thm:main} to account for an exogenous dominance relation on the domain. It provides necessary and sufficient conditions for a representation in which all the justifications respect dominance. The relevant notion of ``dominance'' is very inclusive: it can capture impartiality or anti-discrimination requirements as well as stochastic dominance. 

Theorem \ref{thm:main} does not require the domain to have any particular structure, or restrict its cardinality. It is completely general. Although this is convenient from a theoretical point of view, applications may demand more structure. To this end, Theorem \ref{thm:EU} axiomatizes the expected-utility special case of the justifiability model. Given that the true preference has an expected-utility representation, Theorem \ref{thm:EU} provides necessary and sufficient conditions for the justifiable preferences to have expected-utility representations as well. The proof of Theorem \ref{thm:EU} explains how to identify the set of justifiable utilities. This version of the model is quite easy to work with because of the linear structure. As shown in Proposition \ref{cor:comp_stat_EU}, it also admits nice comparative statics. To determine whether one DM has stricter morals than another, it is only necessary to compare their choices on binary menus with one element fixed. This may be useful to experimenters who run different treatments designed to strengthen or weaken moral considerations in subjects' minds. If the justifiability model is correct, then the sets of justifiable preferences backed out from a fixed subject's behavior across treatments should be nested. 

Section \ref{sec:unobserved} dispenses with the assumption that the true preference is observed. It begins with the EU case, in which identification of the true preference is particularly simple. Typically, the true preference will be uniquely pinned down.\footnote{The ``untypical'' case is a justifiability model with only one justifiable preference. Although the true preference is not uniquely pinned down, it is still possible to pin down a set of candidates. If the axioms fail conditional on any of these candidates, the model is falsified.} Proposition \ref{prop:unobserved_EU} provides a procedure for generating a candidate true preference. The EU justifiability model fits the data if and only if this procedure yields a unique candidate and, conditional on this candidate, the data satisfies the axioms in Theorem \ref{thm:EU}. Thus, Theorem \ref{thm:EU} is useful even if the true preference is not observed.

The rest of Section \ref{sec:unobserved} covers the general model. It provides two different behavioral characterizations, Corollary \ref{cor:unknown_true_pref} and Theorem \ref{thm:unknown_true_pref}. As these characterizations demonstrate, the model imposes substantial restrictions even if the true preference is a ``free parameter;'' the model can be falsified with as few as three elements. The two characterizations complement one another. Corollary \ref{cor:unknown_true_pref} clarifies the restrictions behavior imposes on the true preference, and Theorem \ref{thm:unknown_true_pref} clarifies the restrictions on the justifications. Remarkably, two simple and easily spotted patterns of behavior deliver a full understanding of the justifications available to the DM. Section \ref{sec:double} extends the model to account for multiple decision environments, ranked by the pressure on the DM to find a good justification. Proposition \ref{prop:double} extends Theorem \ref{thm:unknown_true_pref} this case. It provides conditions for the DM's behavior to be consistent with a stable true preference, and for the set of justifications to shrink as the pressure rises. 

Section \ref{sec:theory} relates the justifiability model to existing theoretical work. Two strands of literature deserve particular emphasis. The first strand studies two-tiered models in which one preference is used to break ties in another. \citet{gul2005revealed} and \citet{manzini2007sequentially} are notable examples. Although these models have different motivations and interpretations than the justifiability model, they formally correspond to the special case in which the set of justifiable preferences is a singleton. The second strand discusses rationalization. \citet{cherepanov2013rationalization} is the most closely related paper in this class. That paper and this one have similar motivations, but the ``rationales'' there have much less structure than the ``justifications'' here. They are not required to be complete or transitive, to respect any dominance relation, or to satisfy independence. Section \ref{sec:theory} shows that this additional structure leads to stronger predictions and identification as well as more tractability and easier interpretation. 

\section{Empirical Evidence}
\label{sec:empirics}

The key goal of this paper is to unify and provide a deeper understanding of a variety of empirical papers. This section applies the justifiability model to the key results from these papers.

We begin with a classic paper in the moral-wiggle-room literature, \citet{snyder1979avoidance}. Subjects were asked to choose between two rooms with movie projectors. A person in a wheelchair was present in one room, and the other room was empty. Subjects typically chose the empty room when different movies were playing in the two rooms, but not when the same movie was playing in both. To formalize this, let
\begin{align*}
a_1 &= \text{movie 1 alone} \\
b_1 &= \text{movie 1 with disabled stranger} \\
a_2 &= \text{movie 2 alone}
\end{align*}
Consider a subject who chooses as follows:
\[c(\{a_1, a_2\}) = a_1 \quad c(\{a_1, b_1\}) = b_1 \quad c(\{b_1, a_2\}) = a_2.\]
These choices are cyclic, so are inconsistent with maximization of any stable preference relation. The justifiability model explains these as follows. The DM does have a stable preference, given by $a_1 \succ a_2 \succ b_1$. He maximizes this preference over the subset of alternatives that (in his opinion) might be selected by a good and reasonable person. Formally, let $\mathcal{M}$ be the set of preferences that (in his opinion) a good and reasonable person might have. Since the DM has $a_1 \succ b_1$ but $b_1 = c(\{a_1, b_1\})$, he must believe that every $\succsim_m \in \mathcal{M}$ has $b_1 \succ_m a_1$. Taking $\mathcal{M}$ to be the set of preferences that satisfy this constraint, we correctly predict the DM's behavior. When the disabled stranger is not present, the DM can choose his favorite movie because the preference given by $b_1 \succ_m a_1 \succ_m a_2$ is justifiable. When his favorite movie is playing in the room with the stranger, but not the other room, the DM can avoid the stranger because the preference given by $a_2 \succ_m b_1 \succ_m a_1$ is justifiable. But when the same movie is playing in both rooms, the DM feels compelled to socialize with the stranger because all justifiable preferences have $b_1 \succ_m a_1$. We return to this example in Section \ref{sec:unobserved}.

\citet{norton2004casuistry} is a more recent paper in a similar tradition. Subjects were asked to choose between a male and a female candidate for a traditionally male role. In one treatment, the male had more education and the female had more experience; in the other, these attributes were flipped. Subjects exhibited a preference for the male candidate in both cases. However, they denied that gender played a role in their decisions, instead citing the attribute (education or experience) in which the male candidate was superior. These results fit into the same framework as above. Slightly modifying the experiment to break ties between equally qualified candidates, we have
\begin{align*}
a_1 &= \text{male, high education, low experience} \\
b_1 &= \text{female, high education + }\epsilon, \text{ low experience} \\
a_2 &= \text{male, low education, high experience}
\end{align*}
A subject who believes that education is more important than experience, but above all prefers a male candidate, will choose as above. The same model can be used to explain observed choice. Only the interpretation is different. Here, $\mathcal{M}$ is the set of preferences that prefer a more-qualified candidate over a less-qualified candidate, regardless of gender.

Now we proceed to more familiar economic domains. In all of the following experiments, subjects' choices affected others' monetary payoffs as well as their own. All of the results can be explained with a true preference that cares only (or mostly) about own payoff, and a set of ``impartial'' justifiable preferences. The justifiable preferences are discussed in more detail below. 

\citet{gneezy2016motivated} conducted two experiments in which subjects were offered bribes for particular choices. In the first experiment, subjects were told to select the funniest joke from a set of jokes written by other participants. Each joke writer could attempt to bribe the chooser by offering to pay if his joke was selected. In the second experiment, subjects were placed in the role of investment advisors. They were asked to make an investment recommendation to another participant, who was not informed about the assets in the choice set. They were also offered a bribe to recommend a particular asset. \citet{gneezy2016motivated} found that the bribes mattered in both cases, but only if the subjects were shown the bribes \textit{before} evaluating their options. Apparently, subjects felt able to choose the option that came with the larger bribe only if they were able to convince themselves that it was genuinely superior. To explain this within the justifiability model, take $\mathcal{M}$ to be a set of preferences that are independent of the bribe, but exhibit different senses of humor (for experiment 1) or risk attitudes (for experiment 2). 

\citet{rodriguez2012self} presented subjects with an allocation problem in which the money to be allocated was ``earned'' in an earlier phase of the experiment. Subjects earned money by completing a multiple-choice test. After the test was complete, each subject's correct answers were converted into money at a random ``wage.'' The wage was fixed for each subject, but could differ across subjects. Then, subjects were paired up, and one member of each pair was asked to allocate the money earned by the pair.\citeauthor{rodriguez2012self} found that subjects who had low wages and accuracy tended to favor an equal division of money, while subjects who had high wages and accuracy favored division on the basis of earnings. Finally, subjects who had low wages but high accuracy favored division on the basis of accuracy alone (correcting for the unequal ``wages'' assigned by the experimenter). To explain this within the justifiability model, let $\mathcal{M}$ be a set of preferences that treat self and other the same way, but disagree on the way accuracy and earnings should be rewarded. 

As mentioned in the introduction, \citet{exley2016excusing} investigated the effect of risk on allocation choices. Subjects kept more for themselves when the charitable donation was risky and their own payoff was certain \textit{and} when the donation was certain and their own payoff was risky. This is inconsistent with maximization of any preference, let alone an expected-utility preference. It is easily explained within the expected-utility version of the justifiability model, in which both the true preference and all the justifiable preferences take an expected-utility form. (This model is particularly neat and tractable, so Section \ref{sec:EU} is devoted to it.) All the justifiable preferences are generous---they weakly prefer $\$1$ to charity over $\$1$ to self---but they embody different risk attitudes.

\citet{haisley2010self} conducted a similar experiment, but considered uncertainty as well as risk. Subjects chose a relatively equitable allocation that gave moderate prizes to the subject and a stranger over a relatively unequitable one that gave a large prize to the subject and a lottery between small prizes to the stranger. This pattern was reversed when the risk in the lottery over small prizes was replaced with uncertainty (even though the prizes themselves remained unattractive). \citeauthor{haisley2010self} rejected ambiguity loving as an explanation for this reversal. When subjects were asked to choose between the risky lottery and the ambiguous prospect, they chose the former. Moreover, subjects who were confronted with this choice immediately before the choice in the main experiment tended \textit{not} to exhibit the pattern above. To explain these results within the justifiability model, take $\mathcal{M}$ to be a set of preferences that treat self and other the same way, but disagree on ambiguity attitude. A subject confronted with the choice in the main experiment can keep more for himself by pretending to be ambiguity-loving. This option goes away when he has already revealed that he is ambiguity-averse.

In both \citet{haisley2010self} and \citet{gneezy2016motivated}, subjects' choices were affected by choices they made previously. Unless subjects are distracted between each choice, or choices are anonymized, this is to be expected. We are modeling a DM who wants to pool with more virtuous people. If he pools with completely different groups in rapid succession, his pretense is not very convincing. Thus, he will probably want to maintain some consistency between successive choices. To avoid consistency concerns, this paper will primarily consider individual decisions made in isolation. $c(A)$ should be interpreted as the DM's selection from menu $A$ when he does not have to make any other choices on the same domain. We return to this point in Section \ref{sec:EU}.\footnote{To preview: the model can handle bundles of decisions, but the data has to be interpreted differently. For instance, if the DM has to choose concurrently from pairs of menus, the objects of choice are of the form (item from menu 1, item from menu 2). Now, the implicit assumption is not that the DM handles each menu separately, but that he handles each pair of menus separately.}

We conclude with a note on the scope of the model. The reader may note that all the experiments above relate to morality or fairness. However, neither the formal machinery nor the interpretation of the justifiability model restricts it to these domains. It applies whenever the decision maker needs ``excuses'' for pursuing his ends (and is sometimes unable to find one). These excuses could be appeals to rationality or prudence as well as moral principles. Examples from these broader domains have not been provided here because the author is not aware of empirical papers in this area. 

\section{General model}
\subsection{Setup and notation}
\label{sec:main}
Formally, the domain $\mathcal{A}$ is a completely arbitrary set. The model is intended to apply to situations in which people care about justifying their decisions to external observers or to their ``better selves.'' Depending on the situation, a justification may be an appeal to a moral philosophy, a rational argument, or something else. The model will deliver interesting predictions whenever there is some ambiguity in what is justifiable (there are multiple acceptable choices in some situations), but not too much (some choices are outright unacceptable). Here are some examples of $\mathcal{A}$ to keep in mind: a set of plans for managing a nation's natural resources; a set of new graduates who have applied for a job; a set of funding levels (and associated bribes) for various public projects. Continuous choice variables are not a problem: there is no restriction on the cardinality of $\mathcal{A}$. 

To start, the model has two primitives. The first is the true preference $\succsim$, which is a complete and transitive relation on $\mathcal{A}$. Intuitively, it is what the DM would choose in the absence of a need to justify his decision. Formally (as the representation theorem will show), it breaks ties between different justifications. The paper does not rest on the assumption that $\succsim$ is observable: Section \ref{sec:unobserved} dispenses with $\succsim$ as a primitive and recovers it as a component of the representation. 

The second primitive is a choice correspondence $c$ that maps each non-empty finite set of alternatives to a non-empty subset. To formalize this, let $\mathcal{F}(\mathcal{A})$ be the set of non-empty finite subsets of $\mathcal{A}$. Then we have $c: \mathcal{F}(\mathcal{A}) \rightrightarrows \mathcal{F}(\mathcal{A})$ such that $c(A) \subseteq A$ for all $A \in \mathcal{F}(\mathcal{A})$.

Definition \ref{def:main} presents the most general justifiability representation. The representation is identified with a set $\mathcal{M}$ of complete, transitive and antisymmetric orders on $\mathcal{A}$.\footnote{There is no loss of generality in taking the justifications to be antisymmetric.} $\mathcal{M}$ is the set of justifications, i.e. the set of preferences that the DM considers acceptable. These preferences are not required to be continuous, so they are not guaranteed to have utility functions. However, readers who prefer to think in terms of utility functions will not lose anything by doing so. 

\begin{definition}
\label{def:main}
A justifiability representation for $(\succsim, c)$ is a nonempty set $\mathcal{M}$ of total orders (complete, transitive, antisymmetric) such that, for all $A \in \mathcal{F}(\mathcal{A})$,
\begin{align*}
&c(A) = \argmax\left(M(A), \succsim\right) \\
&\text{where } M(A) = \bigcup_{\succ_m \in \mathcal{M}}\argmax\left(A, \succ_m\right).
\end{align*}
\end{definition}

\subsection{Axioms}

Now we proceed to the axioms. The first one, Optimization, says that the DM is truly indifferent between all the items he actually selects. Although this is a standard assumption, it is possible to imagine morally-motivated DMs who violate it. For instance, a DM might feel that it is acceptable to select a selfish alternative as long as he also selects an unselfish one. This behavior is not captured by the justifiability model. In any case, Optimization has no bite when $c$ is a choice function rather than a choice correspondence. 

\begin{axiom}[Optimization]
For any $A \in \mathcal{F}(\mathcal{A})$, for any $a, b \in c(A)$: $a \sim b$. 
\end{axiom}

The second axiom, Irrelevance of Unjustifiable Alternatives (IUA), is the heart of the model. To understand it, suppose that it is unjustifiable to prefer $a$ over everything in $A$: every justifiable preference ranks at least one item in $A$ above $a$. It is obvious that $a$ will never be chosen when everything in $A$ is present. It may be less obvious that the set of justifiable alternatives---and, by extension, choice---is totally unaffected by $a$. We claim that, for any set $B$ containing $A$ and $a$, $M(B) = M(B \setminus \{a\})$. It is obvious that $M(B) \subseteq M(B \setminus \{a\})$, so we only need to show that $M(B \setminus \{a\}) \subseteq M(B)$. If $b \in M(B \setminus \{a\})$, there must be a justifiable preference that ranks $b$ above everything else in $B \setminus \{a\}$, including everything in $A$. Since every justifiable preference ranks at least one item in $A$ above $a$, this preference must rank $b$ above $a$. We conclude that there is a justifiable preference that ranks $b$ above everything else in $B$, so $b \in M(B)$. 

\begin{axiom}[Irrelevance of Unjustifiable Alternatives (IUA)]
For any $a \in \mathcal{A}$ and $A \in \mathcal{F}(\mathcal{A})$ such that $a \in A$: if $a \succsim c(A)$ and $a \notin c(A)$, then for all $B \supset A$, $c(B \setminus \{a\}) = c(B)$. 
\end{axiom}

$a \succsim c(A)$ means that, for every $b \in c(A)$, $a \succsim b$. 

\begin{example}
\label{ex:charities}
Recall the discussion of \citet{rodriguez2012self} from the introduction. As in that experiment, suppose that the DM must divide $\$30$ earned by himself and two other subjects. Suppose that subject 1 earned more than the DM, and subject 2 earned less. Let
\begin{align*}
a &= (12 \text{ to self}, 12 \text{ to }1, 6 \text{ to }2) \\
b &= (10 \text{ to self}, 14 \text{ to }1, 6 \text{ to }2) \\
d &= (10 \text{ to self}, 10 \text{ to }1, 10 \text{ to }2).
\end{align*}
Suppose that the DM's preferences are $a \succ b \succ d$: he believes in rewarding good performance, but above all wants to keep more for himself. Suppose that he chooses $a$ from $\{a, b\}$ and $\{a, d\}$, but $b$ from $\{a, b, d\}$. Intuitively, the DM can choose $a$ over $b$ because it is more equitable, and $a$ over $d$ because performance should be rewarded. But he cannot choose $a$ over both $b$ and $d$: someone who really cared about equity would choose $d$, and someone who really cared about performance would choose $b$. IUA says the the DM cannot flip from $b$ to $d$ if $a$ is removed. Moreover, if a third option $e$ is added to $b$ and $d$, the DM must make the same choice whether $a$ is available or not.
\end{example}

Theorem \ref{thm:main} is the most general representation theorem in the paper. It says that Optimization and IUA are sufficient as well as necessary for a justifiability representation. The proof proceeds in two parts. For the first part, say that $A$ ``excludes'' $a$ if $a \succsim c(A)$ and $a \notin c(A)$. We know that each acceptable preference must rank at least one item in $A$ above $a$. Thus, $\mathcal{M}$ cannot include any preference that ranks $a$ above everything in $A$. We define $\mathcal{M}$ to be the set of strict preferences on the domain that respect all these exclusion conditions. The second, and more involved, part of the proof establishes that $\mathcal{M}$ is big enough: for any $b \in c(B)$, we can find some member of $\mathcal{M}$ that ranks $b$ above everything else in $B$. This step is challenging because the exclusion conditions interact with one another, so building up a preference that satisfies them (as well as $b \succ B$) is not straightforward. 

\begin{theorem}
\label{thm:main}
$(\succsim, c)$ has a justifiability representation if and only if it satisfies IUA and Optimization. 
\end{theorem}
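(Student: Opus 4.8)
The plan is to work throughout with the explicit set $\mathcal{M}$ that the text foreshadows. Following the text, say that $A$ \emph{excludes} $a$ when $a\in A$, $a\succsim c(A)$, and $a\notin c(A)$, and let $\mathcal{M}$ be the set of \emph{all} total orders $\succ_m$ on $\mathcal{A}$ that respect every exclusion, i.e.\ those for which some $w\in A$ has $w\succ_m a$ whenever $A$ excludes $a$ (equivalently, $\succ_m$ never makes an excluded $a$ the $\succ_m$-maximum of the set that excludes it). Necessity is the easy half. Optimization is immediate, since $c(A)=\argmax(M(A),\succsim)$ consists of mutually $\succsim$-indifferent elements. For IUA, if $a\in A$, $a\succsim c(A)$, $a\notin c(A)$, then $a\in M(A)$ would force $a\in c(A)$ (as $c(A)$ is the $\succsim$-top of $M(A)$); hence $a\notin M(A)$, so every justification ranks some element of $A$ above $a$, and the computation displayed in the text gives $M(B)=M(B\setminus\{a\})$ for all $B\supseteq A$ containing $a$, whence $c(B)=c(B\setminus\{a\})$.

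For sufficiency I would first reduce to a single hard claim. Write (C): no $x\in M(B)$ has $x\succ b$ when $b\in c(B)$ --- for if such $x$ existed, then $x\in B$, and by Optimization $x\succ c(B)$ and $x\notin c(B)$, so $(B,x)$ is an exclusion and no member of $\mathcal{M}$ can top $B$ at $x$, contradicting $x\in M(B)$. Write (A): for every $b\in c(B)$ there is $\succ_m\in\mathcal{M}$ with $b=\argmax(B,\succ_m)$, i.e.\ $c(B)\subseteq M(B)$. Granting (A) and (C), the representation follows: (A) and (C) give $c(B)\subseteq\argmax(M(B),\succsim)$; conversely, any $x\in\argmax(M(B),\succsim)$ satisfies $x\succsim M(B)\supseteq c(B)$ by (A), and if $x\notin c(B)$ then $(B,x)$ is an exclusion, contradicting $x\in M(B)$, so $x\in c(B)$. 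Nonemptiness of $\mathcal{M}$ falls out of (A).

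The whole difficulty is (A), which I would prove by building $\succ_m$ greedily from the top. Maintain the set $R$ of not-yet-placed alternatives and repeatedly append an \emph{admissible} $y\in R$, meaning (i) no exclusion $(A,y)$ has $A\subseteq R$ (placing $y$ now would make it the maximum of $A$), and (ii) $y\notin B\setminus\{b\}$ while $b\in R$. Condition (ii) forces $b=\argmax(B,\succ_m)$, and condition (i), checked as each $a$ is placed, guarantees every exclusion is respected, so the order lies in $\mathcal{M}$. The key observation is that IUA is precisely a Sen-$\alpha$-type contraction property: its contrapositive says that if $A$ excludes $y$ then $y\notin c(R)$ for every finite $R\supseteq A$, so any $y\in c(R)$ is admissible on count (i). Thus $c(R)$ always furnishes an element satisfying (i), and the procedure can only stall on the interaction with (ii).

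The main obstacle is showing the procedure never gets stuck, and this is exactly the interaction the text warns about. When $\succsim$ is antisymmetric it is short: any $\succsim$-minimal element of $R$ is admissible, because an exclusion $(A,y)$ with $y$ minimal would force $c(A)\sim y$ and hence (no ties) $y\in c(A)$. With indifferences this fails: two $\succsim$-indifferent alternatives can exclude each other (a two-cycle, $z_0\in c(A_0)\setminus\{z_1\}$ and $z_1\in c(A_1)\setminus\{z_0\}$ with $z_0\sim z_1$), so a fixed choice of ``witnesses'' need not be acyclic and naive Szpilrajn extension breaks down. The greedy procedure escapes such cycles because placing a \emph{third} element of $A_0$ above $z_1$ discharges the exclusion without committing to $z_0\succ_m z_1$; making this precise is the crux. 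Concretely, I expect to show that while $b\in R$ (so $B\subseteq R$) one can find an admissible element inside $(R\setminus B)\cup\{b\}$: for finite $R$, any $x\in c(R)$ is admissible for (i) by IUA-as-Sen-$\alpha$, and a short argument comparing $c(R)$ with $c(B)$ (via $B\subseteq R$, Optimization, and (C)) rules out the bad case $c(R)\subseteq B\setminus\{b\}$ and delivers an element compatible with (ii). The remaining task is to run this as a transfinite recursion, or to extend the finite-stage partial orders by Szpilrajn, so that the construction covers an arbitrary, possibly uncountable $\mathcal{A}$; I expect this set-theoretic packaging to be routine relative to the stall-freeness argument, which is the real content.
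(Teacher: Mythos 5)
Your necessity argument and your reduction of sufficiency to claims (A) and (C) are correct and match the paper's outline, and your observation that IUA acts as a Sen-$\alpha$ contraction property (so that, for finite $R$, every element of $c(R)$ is admissible on count (i)) is also right. But stall-freeness---which you correctly identify as the crux---is exactly where your argument breaks, and it breaks on a false claim. You assert that a short argument ``rules out the bad case $c(R)\subseteq B\setminus\{b\}$.'' This case cannot be ruled out: it occurs in data satisfying the axioms. Take the paper's Example \ref{ex:charities}, completed with $c(\{b,d\})=b$ (which IUA itself forces): the true preference is $a\succ b\succ d$, with $c(\{a,b\})=c(\{a,d\})=a$ and $c(\{a,b,d\})=b$; the only exclusion is that $\{a,b,d\}$ excludes $a$, and one checks IUA and Optimization hold. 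Writing $b^*$ for your target element, take $B=\{a,b\}$, $b^*=a\in c(B)$, and $R=\{a,b,d\}$: then $c(R)=\{b\}=B\setminus\{b^*\}$, so the bad case occurs even though a representation exists. What is true is that in the bad case an admissible element must be found \emph{outside} $c(R)\cup(B\setminus\{b^*\})$---here $d$, which is admissible only because no subset of $R$ excludes it, a fact that has nothing to do with $c(R)$ or $c(B)$. Proving that such an element always exists---equivalently, that under the axioms it is impossible for every element of $(R\setminus B)\cup\{b^*\}$ to be excluded by some subset of $R$ while $b^*\in c(B)$---is precisely the hard interaction of exclusion conditions: the excluding sets of the elements you still need to place may contain one another, so sequential removal via IUA is blocked, and the two-cycles among $\succsim$-indifferent elements that you flag defeat any naive induction. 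The paper resolves this with dedicated machinery: the exclusion-from-below relation $\triangleright$ and Lemma \ref{lem:exc_from_below}, its transitivity as a menu-item relation, and the properness-preservation (tree-pruning) argument of Lemmas \ref{lem:closure} and \ref{lem:no_cycle} feeding a Zorn extension. Your proposal contains no substitute for this; the sentence ``making this precise is the crux'' is where the proof actually is.

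A secondary gap: the transfinite/Szpilrajn packaging for infinite $\mathcal{A}$ is not routine as you describe it, because $c(R)$ is undefined when $R$ is infinite ($c$ acts only on finite menus), so your only supply of admissible elements vanishes at infinite stages. This particular issue is repairable without the paper's machinery---every constraint involved (totality and transitivity pair by pair and triple by triple, each exclusion as a finite disjunction, and $b^*\succ_m B\setminus\{b^*\}$) is finitary, so propositional compactness reduces claim (A) to finite domains---but that repair only returns you to the finite-domain stall-freeness problem above, which remains unproven.
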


Since Optimization has no bite when $c$ is a choice function, IUA alone is necessary and sufficient in that case. 

\subsection{Adding dominance}
\label{sec:dominance}

In some settings, there are obvious restrictions on the preferences that a good and reasonable person might have. For instance, if the domain is a set of lotteries, a reasonable DM cannot prefer a first-order stochastically dominated lottery. If the domain is a set of payments to the DM and others, a good and reasonable DM cannot prefer a Pareto-dominated set of payments. Some more interesting, but more involved, examples are discussed at the end of this section.

First, we extend Theorem \ref{thm:main} to account for a general dominance relation. Let $\succ_D$ be a transitive and asymmetric relation on $\mathcal{A}$. $\succ_D$ is observable; it captures the analyst's existing convictions about which preferences could possibly count as ``justified.'' Definition \ref{def:D-mono} clarifies what it means to respect $\succ_D$.

\begin{definition}[Strict $D$-monotonicity]
\label{def:D-mono}
A relation $\succsim_R$ on $\mathcal{A}$ is strictly $D$-monotone if, for any $a, b \in \mathcal{A}$: $a \succ_D b$ implies $a \succ_R b$. 
\end{definition}

\begin{definition}[Monotone justifiability representation]
\label{def:D-mono_rep}
A justifiability representation $\mathcal{M}$ is $D$-monotone if each $\succ_m \in \mathcal{M}$ is strictly $D$-monotone. 
\end{definition}

The concept of exclusion was introduced in Section \ref{sec:main}. To simplify presentation of the next axiom, we now define it formally. 

\begin{definition}[Exclusion]
For any $A \in \mathcal{F}(\mathcal{A})$ and $b \notin A$, $A$ excludes $b$ if $b \succsim c(A \cup \{b\})$ and $b \notin c(A \cup \{b\})$. 
\end{definition}

Unsurprisingly, the key axiom for the augmented model is a strengthening of IUA. It says that both dominatated items and unjustifiable items are irrelevant. Formally, for any set $B$, let $S(B)$ be the set of items that are strictly dominated by something in $B$ or excluded by some subset of $B$:
\[S(B) := \{b \in B: \exists b' \in B \text{ s.t. } b' \succ_D b, \text{ or } \exists B' \subset B \text{ s.t. } B' \text{ excludes } b\}.\]
Irrelevance of Submaximal Alternatives (ISA) says that choice is unchanged when any subset of $S(B)$ is removed. 

\begin{axiom}[Irrelevance of Submaximal Alternatives (ISA)]
For any $B \in \mathcal{F}(\mathcal{A})$, for any $A \subseteq S(B)$: $c(B) = c(B \setminus A)$. 
\end{axiom}

The reader may wonder why ISA allows removal of several items, while IUA only allows removal of one item. Intuitively, this is because a variant of exclusion satisfies a nice transitivity property, which allows us to remove unjustifiable items sequentially rather than all at once. This transitivity property doesn't hold once we introduce dominance, so we aren't always able to remove submaximal items sequentially. We have to explicitly allow removing multiple items at once. 

Proposition \ref{prop:dominance} says that replacing IUA with ISA delivers a justifiability representation in which all the acceptable preferences respect dominance. The proof is along the same lines as that of Theorem \ref{thm:main}, but the construction must now keep track of dominance conditions as well as exclusion conditions. 

\begin{proposition}
\label{prop:dominance}
$(\succsim, c)$ has a $D$-monotone justifiability representation if and only if $c$ satisfies ISA and Optimization.
\end{proposition}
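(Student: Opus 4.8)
The plan is to mirror the proof of Theorem~\ref{thm:main}, enlarging the candidate set of justifications with the dominance constraints. Define $\mathcal{M}$ to be the set of all total orders $\succ_m$ on $\mathcal{A}$ that are strictly $D$-monotone (i.e.\ $\succ_D\subseteq\,\succ_m$) \emph{and} respect every exclusion condition, meaning: whenever a set $A'$ excludes $a'$, there is some $y\in A'$ with $y\succ_m a'$. Writing $U(B):=B\setminus S(B)$ for the set of items of $B$ that are neither dominated within $B$ nor excluded by a subset of $B$, the whole argument reduces to two facts: (i) $c(B)=\argmax(U(B),\succsim)$ for every $B$, and (ii) $M(B)=U(B)$. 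Granting these, $c(B)=\argmax(U(B),\succsim)=\argmax(M(B),\succsim)$, which is exactly the representation; every $\succ_m\in\mathcal M$ is $D$-monotone by construction, and $\mathcal M$ is nonempty since (ii) applied to a singleton menu produces at least one order.

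For necessity, suppose $\mathcal M$ is a $D$-monotone representation. Optimization is immediate, since the elements of $c(A)=\argmax(M(A),\succsim)$ are mutually $\succsim$-indifferent. For ISA I first show $M(B)\cap S(B)=\emptyset$: a dominated item cannot top $B$ under any $D$-monotone $\succ_m$, and if $A'\subseteq B$ excludes $a'$ then the representation forces $a'\notin M(A'\cup\{a'\})$, so every $\succ_m$ ranks some element of $A'$ (hence of $B$) above $a'$ and $a'\notin M(B)$. Given this, for $A\subseteq S(B)$ one checks $M(B)=M(B\setminus A)$: the inclusion $\subseteq$ is clear, and for $\supseteq$ the $\succ_m$-maximal element $a^\ast$ of $B$ lies in $M(B)$, hence outside $S(B)\supseteq A$, so $a^\ast\in B\setminus A$; comparing $a^\ast$ with the top of $B\setminus A$ forces them equal. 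Thus $c(B)=c(B\setminus A)$.

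For sufficiency, assume ISA and Optimization. Fact (i) is proved directly from the axioms. First, ISA with a single removal gives $c(B)=c(B\setminus\{b'\})$ for any $b'\in S(B)$, so $b'\notin c(B)$; hence $c(B)\subseteq U(B)$, and multi-removal gives $c(B)=c(U(B))$. If some $b'\in c(B)$ were not $\succsim$-maximal in $U(B)$, say $u\in U(B)$ with $u\succ b'$, then by Optimization every element of $c(B)$ is $\sim b'$, so $u\succsim c(B)$ and $u\notin c(B)$; thus $B\setminus\{u\}$ excludes $u$ and $u\in S(B)$, contradicting $u\in U(B)$. This gives $c(B)\subseteq\argmax(U(B),\succsim)$. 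Conversely, if $v\in\argmax(U(B),\succsim)$ but $v\notin c(B)=c(U(B))$, then $v\succsim c(U(B))$ and $v\notin c(U(B))$, so $U(B)\setminus\{v\}$ excludes $v$ and again $v\in S(B)$, a contradiction. This proves (i). The easy half of (ii), $M(B)\subseteq U(B)$, follows exactly as in the necessity argument.

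The main obstacle is the remaining inclusion $U(B)\subseteq M(B)$: for each $b\in U(B)$ I must exhibit a single $\succ_m\in\mathcal M$ placing $b$ atop $B$. Consider the transitive closure $P$ of $\succ_D\cup\{(b,x):x\in B\setminus\{b\}\}$; because $b\in U(B)$ is undominated within $B$, no added edge $b\to x$ can be closed into a cycle by $\succ_D$, so $P$ is a strict partial order, and I seek a linear extension of $P$ satisfying every exclusion constraint. Since each exclusion constraint is a finite disjunction of order-atoms, hence a closed condition on the compact space of linear extensions of $P$, the finite-intersection property reduces matters to satisfying any finite family of constraints on a finite relevant universe; there I build the order greedily from the top, at each stage appending a ``placeable'' element---one that is $P$-maximal among the unplaced items and such that every exclusion naming it already has a witness placed above it. Verifying that a placeable element always exists is the crux, and it is exactly here that dominance bites: the sequential ``peeling'' argument behind Theorem~\ref{thm:main} relies on a transitivity property of exclusion that fails once $\succ_D$ is present, so I expect to invoke the simultaneous-removal strength of ISA (rather than the one-at-a-time IUA) to certify, menu by menu, that the $P$-maximal unplaced items cannot all be blocked at once. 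Once the extension is obtained it lies in $\mathcal M$ by construction and tops $b$ over $B$, completing (ii).
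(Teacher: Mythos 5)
Your necessity argument and your step (i) are correct; indeed, packaging the axioms as $c(B)=\argmax(U(B),\succsim)$ is a clean reformulation. The fatal problem is step (ii): the inclusion $U(B)\subseteq M(B)$ is \emph{false}, so the goal you set in the ``main obstacle'' paragraph cannot be achieved. The reason is that exclusion constraints and dominance constraints \emph{chain together}: since each $\succ_m\in\mathcal{M}$ is transitive, an exclusion forcing $a'\succ_m b$ combined with a dominance $x\succ_D a'$ forces $x\succ_m b$ for every $\succ_m\in\mathcal{M}$, even though $b$ is neither dominated by $x$ nor excluded by any subset of the menu at hand; such forced rankings are invisible to $S(B)$. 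Concretely, take $\mathcal{A}=\{a,b,x\}$, let $x\succ_D a$ be the only dominance, let the true preference be $x\succ b\succ a$, and let $c(\{a,b\})=\{a\}$, $c(\{a,x\})=c(\{b,x\})=c(\{a,b,x\})=\{x\}$. One checks that Optimization and ISA hold (the only exclusion is that $\{a\}$ excludes $b$, since $b\succ a$ but $a=c(\{a,b\})$; and $S(\{a,b,x\})=\{a,b\}$), and a $D$-monotone representation exists. But your $\mathcal{M}$ (all strictly $D$-monotone total orders respecting exclusions) consists of the single order $x\succ_m a\succ_m b$, because $x\succ_m a$ and $a\succ_m b$ are forced and transitivity does the rest. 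Meanwhile $S(\{b,x\})=\emptyset$, so $U(\{b,x\})=\{b,x\}$, yet $M(\{b,x\})=\{x\}$: no admissible order places $b$ atop $\{b,x\}$. Your own construction registers the failure: the unique linear extension of your $P$ (the transitive closure of $\succ_D\cup\{(b,x)\}$) is $b\succ x\succ a$, which violates the exclusion constraint $a\succ_m b$, so the set of admissible extensions is empty and no compactness or greedy argument can produce one.

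The representation does not require $U(B)\subseteq M(B)$; it requires only $c(B)\subseteq M(B)$, which together with $M(B)\subseteq U(B)$ and your (i) yields $\argmax(M(B),\succsim)=c(B)$ (note that in the example above, $b$ is strictly worse than $c(\{b,x\})$, so its absence from $M(\{b,x\})$ is harmless). So the hard construction must be carried out only for \emph{chosen} items, and chosenness is exactly what rules out blocking chains of the kind above. This is how the paper proceeds: it forms the $D$-transitive closure $R_0$ of $\triangleright_D\cup\succ_D$, thereby recording precisely the chained constraints your $P$ ignores; it uses ISA and a tree argument to show $R_0$ is proper, the point being that a chain of exclusions and dominances blocking a chosen $b$ would, via ISA, contradict $b\in c(B\cup\{b\})$; and it then completes $R_0$ to a total order by a Szpilrajn-style Zorn's lemma extension. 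Your sketch defers exactly this step (``verifying that a placeable element always exists is the crux''), and since the statement you would need it for is false, the gap is not a technicality but the heart of the proof.
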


As promised, Example \ref{ex:dominated} shows how notions of disinterestedness or impartiality can be captured by a dominance relation. Readers less interested in extended examples may skip to Section \ref{sec:unobserved_EU} without loss of continuity. 

\begin{example} \
\label{ex:dominated}
\begin{enumerate}
\item As in \citet{gneezy2016motivated}, let the DM be an investment advisor, and $\mathcal{A}$ be a set of investments. Each investment is characterized by a distribution over payoffs $p$ and a real number $b$. The real number is the bribe the DM will receive if he recommends that investment to his client. The DM is not supposed to take bribes into account, but he can use his own judgment to determine the most attractive investment. Each justifiable preference $\succ_m$ must have $(p, b) \succ_m (p', b')$ if $p >_{FOSD} p'$, regardless of $b$ and $b'$. 
\item As in \citet{rodriguez2012self}, let $\mathcal{A}$ be a set of payments to the DM and other subjects. Each payment $p$ is a vector in $\mathbb{R}^n_+$, where the first entry $p(1)$ is the payment to the DM. Suppose that each subject $i$ is associated with a real number $a(i)$, which reflects performance in an earlier part of the experiment. Suppose the DM feels that it would be wrong to favor any particular person, himself included, but not to reward high-performing subjects. Say that a permutation of subjects $\pi$ is ability-preserving if $a(i) = a(\pi(i))$ for each $i$. Each justifiable preference $\succ_m$ must have $p \succ_m q$ if there is an ability-preserving permutation $\pi$ such that 
\[(p(1), \ldots, p(n)) > (q(\pi(1)), \ldots, q(\pi(n))).\]
To understand this, suppose that there are three subjects. Suppose that the DM and subject A performed equally well, and subject B performed worse. If the DM faces a choice between (5 to self, 0 to others) and (6 to A, 0 to others), we know he must choose the latter. ISA says more than that: the former option is totally irrelevant when the latter is present. For instance, suppose the DM chooses (6 to A, 0 to others) over the ``compromise option'' that gives 1 to everyone. ISA says he cannot flip to the compromise option when (5 to self, 0 to others) is added. More generally, adding an attractive but unjustifiable option that the DM feels compelled to forego cannot have any effect on choice. 
\end{enumerate}
\end{example}

\section{Expected-utility case}
\label{sec:EU}

\subsection{Setup and notation}

This section covers the expected-utility version of the justifiability model, in which both the true preference and the acceptable preferences have expected-utility representations, and the set of acceptable Bernoulli utilities is compact and convex. Obviously, a larger set of axioms is needed to achieve this additional structure, but the result is a tractable model suited to application. Additionally, the EU version of the model makes it easier to see where the justifiable preferences come from and how they relate two the true preference. They can easily be visualized in two or three dimensions. 

I assume that the set of prizes $Z$ is finite, although the domain $\mathcal{A} := \Delta(Z)$ is not. It is convenient to assume that there is a dominance relation $\succ_D$ on $Z$. This relation should be asymmetric and transitive, but need not be complete; it is enough to have two prizes ranked by dominance. For instance, the domain could contain two monetary prizes paid to the DM as well as prizes paid to others, or donations to various charities. A version of the representation theorem is available without this assumption, but the axioms are slightly more involved.\footnote{In the absence of dominance, the justifiable set may contain a preference exactly opposite the true preference. This case is inconvenient and needs to be handled separately.}

We will require the true preference to be strictly $D$-monotone (recall Definition \ref{def:D-mono}) and the justifiable preferences to be weakly $D$-monotone. The main justification here is convenience: weak monotonicity of the justifiable preferences falls out of the neatest set of axioms. Note that the DM will never actually \textit{choose} a strictly dominated alternative. If the dominated alternative is justifiable, the dominating alternative will be too, and the DM will choose the latter but not the former. 

\begin{definition}[Weak $D$-monotonicity]
A relation $\succsim_R$ on $Z$ is weakly $D$-monotone if, for any $a, b \in \mathcal{A}$: $a \succ_D b$ implies $a \succsim_R b$. A function $u$ on $Z$ is weakly (strictly) $D$-monotone if $a \succ_D b$ implies $u(a) \geq (>) u(b)$.
\end{definition}

We are now ready to formally define an EU justifiability representation. 

\begin{definition}[EU Justifiability Representation]
An EU justifiability representation consists of a strictly $D$-monotone Bernoulli utility $u$ and a compact, convex set $\mathcal{M}^{EU}$ of weakly $D$-monotone Bernoulli utilities such that $p \mapsto \mathbb{E}u(p)$ represents $\succsim$, and
\begin{align*}
&c(A) = \argmax_{a \in M(A)} \mathbb{E}_a u \\
&\text{where } M(A) := \bigcup_{m \in \mathcal{M}^{EU}} \argmax_{a \in A} \mathbb{E}_a m.
\end{align*}
\end{definition}

One more piece of notation is needed before proceeding to the axioms. For any $p \in \Delta(Z)$, let $B(p)$ be the set of items that are no better than $p$, but defeat $p$ in pairwise comparison. Formally, 
\begin{equation}
\label{eq:Bp}
B(p) := \{q \in \Delta(Z): p \succsim q \text{ and } \{q\} = c(\{p, q\})\}.
\end{equation}
Let $NB(p)$ be the complement of $p$ in $\{q \in \Delta(Z): p \succsim q\}$:
\begin{equation}
NB(p) := \{q \in \Delta(Z): p \succsim q \text{ and } p \in c(\{p, q\})\}.
\end{equation}
$B(p)$ is important because its common boundary with $NB(p)$ pins down the set of acceptable preferences. It is like a ``sufficient statistic'' for the representation. 

\subsection{Axioms}

Each axiom in this section has two parts. For all axioms but Convexity, the first part is a condition on the true preference, and the second part is a condition on choice behavior that ultimately translates into a condition on justifiable preferences. Since the conditions for a given preference to have a FOSD-monotone EU representation are well known, the first part is completely standard. 

\begin{axiom}[Independence] \
\begin{enumerate}
\item For all $p, q, r \in \Delta(Z)$ and $\alpha \in (0, 1)$, $p \succsim q$ implies $\alpha p + (1-\alpha) r \succsim \alpha q + (1-\alpha) r$. 
\item For any $A \in \mathcal{F}(\Delta(Z))$ and $p \in \Delta(Z)$, 
\[c(\alpha A + (1-\alpha)\{p\}) = \alpha c(A) + (1-\alpha)\{p\}.\]
\end{enumerate}
\end{axiom}

The second part of Independence says that the DM's preference does not flip when every option he faces is mixed with a fixed lottery in a fixed proportion. The necessity of this axiom for an EU justifiability representation is obvious. If $p$ is the best item in $A$ that is top-ranked by some acceptable preference, then mixing everything in $A$ (including $p$) with $q$ will not change this. Formally, this axiom ensures that $B(p)$ is a convex cone. The acceptable preferences are among the supporting hyperplanes of this cone. 

\begin{axiom}[Continuity] \
\begin{enumerate}
\item For any $p \in \Delta(Z)$, $\{q \in \Delta(Z): q \succsim p\}$ and $\{q \in \Delta(Z): q \precsim p\}$ are closed.
\item For any $p \in \Delta(Z)$, $NB(p)$ is closed. 
\end{enumerate}
\end{axiom}

There is nothing unusual about the continuity axiom. If $NB(p)$ fails to be closed, the set of justifiable preferences constructed in the proof will be slightly too permissive. Some preferences will be indifferent between pairs of items that they should strictly rank. 

For the monotonicity condition, we need to extend first-order stochastic dominance (FOSD) to allow for an incomplete ranking over prizes. Notice that the definition provided here reduces to the usual one when $\succ_D$ is complete. 

\begin{definition}[FOSD]
$p >_{FOSD} q$ if $p = \sum_i \alpha_i \delta_{p_i}$, $q = \sum_i \alpha_i \delta_{q_i}$, $p_i \succ_D q_i$ or $p_i = q_i$ for all $i$, and $p_i \succ_D q_i$ for some $i$.
\end{definition}

\begin{axiom}[Monotonicity] \
\begin{enumerate}
\item $\succsim$ is strictly FOSD-monotone. 
\item For any $p, q \in \Delta(Z)$ such that $p >_{FOSD} q$ and any $A \in \mathcal{F}(\Delta(Z))$ containing $p, q$, $c(A) = c(A \setminus \{q\})$. 
\end{enumerate}
\end{axiom}

The second part of Monotonicity is really Irrelevance of Dominated Alternatives (IDA). It says that any lottery is irrelevant in the presence of a lottery that strictly dominates it. This type of condition should be familiar from Section \ref{sec:dominance}. 

The final axiom, Convexity, is probably the least familiar. The first part says that a menu $A$ excludes a lottery $p$ only if there is some mixture of lotteries in $A$ that is weakly worse than $p$ but beats $p$ in pairwise comparison. The second part is a partial converse of the first. If there is some mixture of lotteries in $A$ that beats $p$ in pairwise comparison, then $p$ cannot be chosen from $\{p\} \cup A$. This might be because $p$ is unjustifiable when $A$ is present, or simply because something better than $p$ is available. 

The word ``mixture'' is important. Convexity does not say that $A$ excludes $p$ only if some member of $A$ is weakly worse than $p$ but beats $p$ in pairwise comparison. This is typically not the case. Unless the set of justifiable preferences is a singleton, we can find a lottery $p$ and menu $A$ such that $p$ beats each member of $A$, but $A$ beats $p$. Indeed, the inconsistency between choices in binary menus and choices in larger menus is what makes the justifiability model interesting. 

The name Convexity arises because necessity of the first part follows from convexity of the justifiable set. To see what role Convexity plays in the proof of sufficiency, suppose that $p$ is the DM's favorite item in a given menu.\footnote{This is not the only case in which Convexity is applied, but it is the most straightforward one.} The first part of the axiom implies that the DM will choose $p$ provided $B(p)$ can be separated from $A$ by a single, appropriately chosen hyperplane. This hyperplane will turn out to be the indifference curve of a justifiable preference. The second part of the axiom implies that the DM will not choose $p$ if it cannot be separated from $A$ in this way. Thus, the DM does not need to appeal to nonlinear preferences to justify any of his choices. 

\begin{axiom}[Convexity]
For any $A \in \mathcal{F}(\Delta(Z))$ and $p \notin A$:
\begin{enumerate}
    \item If $A$ excludes $p$, then $\text{co}(A) \cap B(p) \neq \emptyset$.
    \item If $\text{co}(A) \cap B(p) \neq \emptyset$, then $p \notin c(A \cup \{p\})$. 
\end{enumerate}
\end{axiom}

Theorem \ref{thm:EU} is the representation theorem for the EU case. The proof is quite different from that of Theorem \ref{thm:main}; it uses mostly geometric rather than order-theoretic arguments. First, Independence and Continuity are used to show that $B(p)$ is always a convex cone, open in $\{q \in \Delta(Z): p \succsim q\}$. Monotonicity is used to establish the relationship between $B(p)$ and the true preference. $B(p)$ is used to identify a candidate set of justifiable preferences. Each candidate preference has an indifference curve that is a supporting hyperplane of $B(p)$. Finally, Convexity and the axioms from Theorem \ref{thm:main} are used to show that the candidate set is neither too large nor too small. No candidate preference would allow the DM to justify a choice he would have liked to make, but did not; and for each choice the DM actually makes, some candidate preference justifies it. 

\begin{theorem}
\label{thm:EU}
Suppose that $Z$ is finite and that at least two prizes are dominance-ranked. The following are equivalent:
\begin{enumerate}
\item $(\succsim, c)$ satisfies IUA, Optimization, Independence, Continuity, Monotonicity and Convexity. 
\item $(\succsim, c)$ has an EU justifiability representation. 
\end{enumerate}
\end{theorem}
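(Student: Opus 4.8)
\emph{Necessity} ($2 \Rightarrow 1$) is routine. The first part of each axiom holds because $u$ is a strictly $D$-monotone expected-utility representation of $\succsim$, which is classical. Optimization and IUA follow from the necessity direction of Theorem \ref{thm:main}, since the representation is a special case of Definition \ref{def:main}. Independence (part 2) holds because each $m \in \mathcal{M}^{EU}$ is linear, so the set of $m$-maximizers of a menu is unchanged when the whole menu is mixed with a fixed lottery; hence $M(\cdot)$, and then $c(\cdot)$, commute with such mixing. Monotonicity (IDA) holds because weak $D$-monotonicity makes any FOSD-dominating lottery weakly $m$-preferred for every $m$, so a dominated $q$ is never uniquely $m$-top while its dominator is present, and strict monotonicity of $u$ keeps $q$ out of $c$; thus $q$ is irrelevant. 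The only part needing real work is Convexity part 1: if $A$ excludes $p$ then $p \notin M(A \cup \{p\})$ (otherwise $p$, being $\succsim$-weakly best, would be chosen), so $\max_{a \in A} \mathbb{E}_a m > \mathbb{E}_p m$ for every $m \in \mathcal{M}^{EU}$; applying the minimax theorem to the zero-sum game in which one player picks $m$ from the compact convex set $\mathcal{M}^{EU}$ and the other picks a mixture $q \in \mathrm{co}(A)$ yields a single $q^* \in \mathrm{co}(A)$ with $\mathbb{E}_{q^*} m > \mathbb{E}_p m$ for all $m$, i.e.\ $q^* \in B(p)$. Part 2 is the easy converse.

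\emph{Sufficiency} ($1 \Rightarrow 2$), setup. Part 1 of Independence, Continuity and Monotonicity, with completeness and transitivity of $\succsim$, are exactly the von Neumann--Morgenstern axioms plus strict FOSD-monotonicity, so $\succsim$ has a strictly $D$-monotone Bernoulli representation $u$; fix it. The geometric heart is the analysis of $B(p)$. Using Independence (part 2) I would show that pairwise defeat is translation invariant: whether $q$ defeats $p$ depends only on the direction $q - p$, so there is a fixed cone $K$ with $B(p) = (p + K) \cap \{q : p \succsim q\}$ for every $p$. Independence makes $K$ convex, Continuity (closedness of $NB(p)$) makes $K$ open, and Monotonicity forces every FOSD-improvement direction into $\overline{K}$. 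I then define $\mathcal{M}^{EU}$ as the polar object: the weakly $D$-monotone Bernoulli utilities $m$ with $\mathbb{E}_v m \ge 0$ for all $v \in \overline{K}$, normalized by fixing $m$ on the two dominance-ranked prizes $z_1 \succ_D z_0$. This is an intersection of closed half-spaces cut down by a normalization, hence closed and convex; the dominance ranking bounds it, giving compactness; and it is nonempty because $\overline{K}$ contains the monotone cone. Continuity is what guarantees the strict/weak duality $K = \{v : \mathbb{E}_v m > 0 \ \forall m \in \mathcal{M}^{EU}\}$, so that $\mathcal{M}^{EU}$ recovers exactly the observed defeats.

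\emph{Sufficiency}, verifying the representation. With $M(A) = \bigcup_{m} \argmax_{a \in A} \mathbb{E}_a m$, I would prove $c(A) = \argmax_{a \in M(A)} \mathbb{E}_a u$ in three moves. \emph{Not too small}: for $a \in c(A)$, the contrapositive of Convexity part 2 gives $\mathrm{co}(A \setminus \{a\}) \cap B(a) = \emptyset$; separating the compact convex set $\mathrm{co}(A \setminus \{a\})$ from the open convex cone $B(a)$ produces a utility $m$ that weakly prefers $a$ to all of $A \setminus \{a\}$ and supports $K$, hence, by basepoint-independence of $K$, is a genuine element of $\mathcal{M}^{EU}$ with $a \in \argmax_{x \in A} \mathbb{E}_x m$; so $c(A) \subseteq M(A)$. \emph{Not too large}: if $b \in M(A)$ with $\mathbb{E}_b u \ge \max_{c(A)} \mathbb{E} u$ but $b \notin c(A)$, then $A \setminus \{b\}$ excludes $b$, so Convexity part 1 gives $q^* \in \mathrm{co}(A \setminus \{b\}) \cap B(b)$, and the strict duality yields $\max_{x \in A} \mathbb{E}_x m > \mathbb{E}_b m$ for every $m$, contradicting $b \in M(A)$. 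Combining these, the $u$-maximizers of $M(A)$ are exactly the chosen elements, and Optimization identifies $c(A)$ with the whole $\argmax$; IUA (from the Theorem \ref{thm:main} axioms) is used to reduce choices on large menus to this cone geometry by discarding excluded alternatives without changing $c$.

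The main obstacle is getting the convex duality exactly right in the sufficiency direction. Three points demand care: (i) deriving from Independence that a single basepoint-independent cone $K$ governs all pairwise defeats, which is precisely what lets a hyperplane separating one menu double as a globally valid justification; (ii) aligning strict and weak inequalities so that the reconstructed $\mathcal{M}^{EU}$ has $K$ as exactly its strict-positivity cone---this is where Continuity (openness of $B(p)$, closedness of $NB(p)$) and compactness of $\mathcal{M}^{EU}$ must be combined, and where a careless construction leaves the justifiable set slightly too permissive; and (iii) ensuring each separating hyperplane can be taken weakly $D$-monotone and correctly normalized so that it lands inside the compact set $\mathcal{M}^{EU}$. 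I expect (ii) to be the crux.
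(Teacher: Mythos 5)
Your overall architecture matches the paper's: necessity via separation/duality arguments, sufficiency via the translation-invariant defeat cone, supporting hyperplanes as candidate justifications, and strict/weak duality to recover $B(p)$. But there is a genuine gap in the one necessity step you singled out as needing real work. In your proof of Convexity part 1, the minimax argument over $\text{co}(A)$ yields $q^*$ with $\mathbb{E}_{q^*} m > \mathbb{E}_p m$ for all $m \in \mathcal{M}^{EU}$, which gives $\{q^*\} = c(\{p, q^*\})$, but membership in $B(p)$ also requires $p \succsim q^*$, and your $q^*$ need not satisfy this: $A$ may contain alternatives strictly $\succsim$-better than $p$ (they are merely unjustifiable in context), and the max-min mixture can put weight on them. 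The fix is to restrict the minimax to $\underline{A} := \{a \in A : p \succsim a\}$, observing first that every $m$'s maximizer over $A \cup \{p\}$ must lie in $\underline{A}$ --- because $c(A \cup \{p\})$ is the $\succsim$-argmax over $M(A \cup \{p\})$ and $p \succsim c(A \cup \{p\})$, no justified element can be strictly $\succsim$-better than $p$. Then linearity of $u$ gives $p \succsim q^*$ for any mixture over $\underline{A}$. This restriction-to-$\underline{A}$ step is precisely the role of the paper's Lemma \ref{lem:exc_from_below}, which it invokes before separating $\{(\mathbb{E}_a m)_{a \in \underline{A}} : m \in \mathcal{M}^{EU}\}$ from the negative orthant; your argument as written does not establish the axiom.

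On sufficiency, you follow the same route as the paper (vNM representation for $\succsim$, the cone $K$ via Independence-part-2 translation invariance, hyperplane normals as justifications, separation for ``not too small,'' Convexity part 1 plus duality for ``not too large''), but the two steps you flag as obstacles (ii) and (iii) are exactly where the paper's proof does its hardest work, and your proposal does not execute them: the identity $B(p) = \bigcap_{m \in \mathcal{M}^{EU}}\{q : m'q > m'p\} \cap \{q : p \succsim q\}$ is proved in the paper by ruling out two failure modes (a boundary point of $B(p)$ indifferent to $p$, and supporting hyperplanes degenerating to the true indifference plane), each via a perturbation mixing with an FOSD-dominated lottery $r$ and then applying Convexity, IDA and Continuity; and weak $D$-monotonicity of every correctly oriented supporting normal is proved by a separate IDA perturbation argument, not assumed. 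Two further details in your construction would need repair: normalizing $m$ by its values on the dominance-ranked pair $z_1 \succ_D z_0$ is ill-defined for weakly monotone justifications with $m(z_1) = m(z_0)$ and does not obviously yield a bounded set (the paper instead takes the closed convex hull of unit-norm normals, which is compact automatically); and the separation in your ``not too small'' step needs the case $B(a) = \emptyset$ handled and needs the resulting normal shown to be weakly $D$-monotone before it can be placed in $\mathcal{M}^{EU}$.
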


The EU justifiability model is subject to the usual uniqueness issues for EU representations: both the true preference and the justifiable preferences can be arbitrarily (and independently) shifted and scaled. There is also a more subtle uniqueness issue: some EU preferences that are \textit{not} positive affine transformations of other preferences can be added to or removed from the set of justifiable preferences without changing anything. Intuitively, this is because these preferences are too far from the true preference to be of much use. Any choice they justify is also justified by some preference that is closer to $\succsim$. 

To get around all the uniqueness issues at once, we restrict attention to the maximal and minimal sets of justifiable preferences. For the maximal set, we include all the preferences not ruled out by the DM's behavior, or dominance; for the minimal set, we include only those preferences needed to explain the DM's behavior. In both cases, focusing on preferences rather than utility functions gets rid of shifting and scaling. Formally, fix any EU justifiability representation $(u, \mathcal{M}^{EU})$ of $(\succsim, c)$. Let $\mathcal{M}^{EU}_\succ$ be the set of preferences that have EU representations with Bernoulli utilities in $\mathcal{M}^{EU}$:
\[\mathcal{M}^{EU}_\succ := \left\{P \subset \mathcal{A}^2: \exists m \in \mathcal{M}^{EU} \text{ s.t. } p \mapsto \mathbb{E}_p m \text{ represents } P\right\}.\]
We call $\mathcal{M}^{EU}_\succ$ a set of justifiable EU preferences for $(\succsim, c)$. 

\begin{corollary}
\label{cor:EU_unique}
If $(\succsim, c)$ has an EU justifiability representation, it has a unique maximal and a unique minimal set of justifiable EU preferences.
\end{corollary}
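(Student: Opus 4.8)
The plan is to pass to the space of normalized Bernoulli utilities, quotienting out positive affine transformations (e.g.\ center each utility to be mean-zero and intersect its ray with a fixed compact affine slice), so that each EU preference becomes a single point. This disposes of the shifting-and-scaling non-uniqueness at the outset and lets me identify a ``set of justifiable EU preferences'' with a subset of this normalized space; compactness and convexity of $\mathcal{M}^{EU}$ transfer to the set of directions because weak $D$-monotonicity and all the exclusion constraints are homogeneous and linear in the utility. Write $K$ for the candidate set of preferences that the proof of Theorem \ref{thm:EU} constructs from the sufficient statistic $\{B(p)\}$, and for a menu $A$ and $b \in c(A)$ write $T(A,b) := \{P : b \in \argmax\left(A,P\right)\}$ for the (closed, convex) set of normalized preferences that top-rank $b$ in $A$.

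The workhorse is a reformulation of validity. For any compact convex $K' \subseteq K$, the representation equation $c(A) = \argmax\left(M(A),\succsim\right)$ holds for all $A$ if and only if $K' \cap T(A,b) \neq \emptyset$ for every $A$ and every $b \in c(A)$. The ``only if'' direction is immediate; for ``if,'' note $c(A) \subseteq M'(A) \subseteq M(A)$ (as $K' \subseteq K$), so since the elements of $c(A)$ are already $\succsim$-maximal in the larger set $M(A)$ and re-enter $M'(A)$ via their top-ranking $P \in K'$, Optimization forces $\argmax\left(M'(A),\succsim\right) = c(A)$. After the reduction, the corollary becomes: among compact convex subsets $K' \subseteq K$ meeting every $T(A,b)$, there is a unique largest and a unique smallest.

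For the maximal set I take $K$ itself. It is valid (this is exactly the ``candidate set is not too small'' half of Theorem \ref{thm:EU}) and compact convex, so only containment of every valid set remains. This is the necessity argument: if $S$ is the preference set of any representation and $A$ excludes $b$, then $b \notin M(A \cup \{b\})$, so no $P \in S$ ranks $b$ above all of $A$; since (by Convexity, part 1) the family $\{B(p)\}$ records exactly these exclusion constraints, every $P \in S$ respects them and hence lies in $K$. Thus $K$ contains every valid set and is itself valid, so it is the unique maximal set. Equivalently, the argmax reformulation shows that the union of any family of valid sets is again valid, and that this union is generated by a compact convex utility set because its defining constraints are linear.

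The minimal set is the crux. Using the hitting-set reformulation, I would argue that any \emph{minimal} valid $K'$ equals the closed convex hull of its extreme points (Krein--Milman), and that each extreme point $P$ of a minimal $K'$ must be a \emph{sole justifier}: there is some $(A,b)$ with $T(A,b) \cap K' = \{P\}$, since otherwise a small convex contraction of $K'$ away from $P$ would still meet every $T(A,b)$, contradicting minimality. Because the sets $T(A,b)$ and the ambient $K$ are intrinsic to $(\succsim,c)$, these sole-justifier extreme points coincide across all minimal $K'$, and the unique minimal set is their closed convex hull. The main obstacle lives precisely here: I must show this hull meets \emph{every} $T(A,b)$, including choices justified by a whole face of $K$ rather than a single preference. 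The engine is the convex geometry of the $B(p)$ cones; the one-dimensional intuition is that once the required direction-sets $T(A,b)$ fail to share a common point, the smallest convex set meeting all of them has \emph{both} of its extreme directions forced, each being the sole justifier of an extreme choice. The degenerate case, in which a single preference rationalizes all behavior (the ``untypical'' singleton model of the introduction's footnote), must be carved out and handled separately, using the fact that such behavior is ordinary $\argmax$-maximization on finite menus and so pins the generating preference down uniquely.
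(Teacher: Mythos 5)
There is a genuine gap, and it starts with a misidentification. The set $K$ you take as the maximal set --- the closed convex hull of the preferences constructed in the proof of Theorem \ref{thm:EU}, whose indifference hyperplanes support $B(p)$ \emph{at boundary points strictly worse than} $p$ --- is precisely the paper's \emph{minimal} set, not its maximal one. Your containment argument (``every $P$ in a valid set respects the exclusion constraints recorded by $\{B(p)\}$ and hence lies in $K$'') is a non sequitur: respecting the exclusion constraints only forces $P$'s indifference hyperplane through $p$ to support $B(p)$ and to place $B(p)$ above $p$; it does not place $P$ in the hull of the particular hyperplanes that touch strictly-worse boundary points. The paper's maximal set is exactly the set of all weakly $D$-monotone EU preferences whose indifference hyperplane through $p$ supports $B(p)$ and which prefer $B(p)$ to $p$, and in general this strictly contains $K$: when $\overline{B(p)}$ meets the $\succsim$-indifference plane through $p$, there are supporting hyperplanes touching $\overline{B(p)}$ only at points indifferent to $p$; the corresponding preferences can be added to any representation without changing predicted choice (this is the ``too far from the true preference to be of much use'' phenomenon the paper flags, and the reason the remark after Proposition \ref{cor:comp_stat_EU} insists maximal and minimal differ). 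So the paper's maximal set is a valid set not contained in your $K$, and your claimed maximality fails. Relatedly, your hitting-set reformulation quantifies only over $K' \subseteq K$, which presupposes the answer to the maximality question rather than proving it.

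The minimal half is also incomplete, as you partly concede. Two steps are missing: (i) sole-justifier extreme points need not coincide across distinct minimal valid sets --- two valid sets can meet the same $T(A,b)$ at different points, so ``the $T(A,b)$ are intrinsic'' does not by itself transfer extreme points from one minimal set to another; and (ii) even granted a common family of sole justifiers, you must show their closed convex hull meets every $T(A,b)$, which you leave as an obstacle with only one-dimensional intuition. (Your shrinking argument also has a compactness issue: with infinitely many constraints, $T(A_n,b_n) \cap K'$ may shrink to $\{P\}$ only in the limit, so no uniform contraction away from $P$ is available.) The paper closes both gaps with a concrete geometric argument tied to binary menus: for each boundary point $b$ of $B(p)$ with $p \succ b$ and $b \notin B(p)$, we have $p \in c(\{p,b\})$, so any valid set must contain some $m$ with $m\cdot p \geq m\cdot b$; validity also forces $m \cdot q > m \cdot p$ for every $q \in B(p)$, and taking limits of points of $B(p)$ near $b$ gives $m\cdot b = m \cdot p$. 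Hence $m$'s indifference hyperplane passes through $b$ and supports $B(p)$, i.e., every valid set contains the supporting-hyperplane preferences generating $K$, and by convexity and closedness contains all of $K$; since $K$ itself is valid by Theorem \ref{thm:EU}, it is the unique minimal set. In short, the objects you call sole justifiers are, in the paper, pinned down explicitly as the supporting hyperplanes at strictly-worse boundary points of $B(p)$, and that identification --- which your proposal lacks --- is what makes both the coincidence across representations and the validity of the hull provable.
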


We conclude with a note on multiple decisions made simultaneously, or in immediate succession. In many experiments, subjects are asked to make multiple decisions, and told that one will be randomly selected to be implemented. As observed by \citet{holt1986preference} and \citet{karni1987preference}, this bundling will not affect choice when subjects have EU preferences. This is not true in the EU version of the justifiability model because subjects are not maximizing a single EU preference, but choosing according to the justifiable EU preference that gets them closest to their true preference. When they must make multiple decisions simultaneously, they may feel compelled to use \textit{the same} justifiable preference for all decisions. Thus, it is important to separate decisions as much as possible when collecting data for the model.\footnote{This could be done by introducing a distraction phase between questions. Reducing pressure to appear consistent may also help, e.g. by assuring subjects that no one but the experimenter will observe choices not selected to be implemented.}

This is not to say that the justifiability model is silent on simultaneous decisions. Once the DM's true preference and set of justifiable preferences have been identified, the model makes precise predictions about the DM's behavior. To see why, suppose the DM must choose simultaneously from menu $A$ and menu $B$, and that each choice will be implemented with equal probability. He should choose lottery $a$ from $A$ and lottery $b$ from $B$ in the double-decision case if and only if he would choose lottery $(1/2)a + (1/2)b$ from menu $(1/2)A + (1/2)B$ in the single-decision case. 

\subsection{Comparative Statics}

We might like to determine whether one DM has a more expansive or restrictive view of morality (or some other constraint) than another. These two DMs may be different people whose true preferences happen to coincide, or they may be the same person in different situations. Intuitively, DM's true preference should not change with the situation, but the standard for a ``justification'' may. Thus, the DM may have access to a larger or smaller set of justifications in different situations. We return to this idea in Section \ref{sec:double}, once we have dealt with the unobserved-true-preference case. For now, we remain agnostic about the precise interpretation of  $(\succsim, c_1)$ and $(\succsim, c_2)$.

Since the proof of Theorem \ref{thm:EU} uses $B(p)$ to obtain a set of justifiable EU preferences, we may expect a connection between the size of $B(p)$ and the strictness of the DM's morals. There is indeed a connection. If $B_1(p) and B_2(p)$ are nested, then we can find representations in which the sets of justifiable Bernoulli utilities are also nested. If $B_2(p)$ is the smaller set, then DM 2 can appeal to the larger set of justifiable utilities. This is convenient because we can figure out each DM's view of morality just by looking at his choices on a small set of menus: binary menus that contain a fixed item $p$. If DM 1 chooses a weakly inferior item $q$ over $p$ whenever DM 2 does, we can assume that DM 1's morals are at least as strict as DM 2's. 

\begin{corollary}
\label{cor:comp_stat_EU}
Suppose that $(\succsim, c_1)$ and $(\succsim, c_2)$ have EU justifiability representations. The following are equivalent:
\begin{enumerate}
\item $B_1(p) \supset B_2(p)$ for some $p \in \text{int}(\Delta(Z))$.
\item The maximal set of justifiable EU preferences for $(\succsim, c_1)$ is strictly smaller than the maximal set of justifiable EU preferences for $(\succsim, c_2)$.
\end{enumerate} 
\end{corollary}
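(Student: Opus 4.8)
The plan is to pass to direction space and show that both sides of the equivalence are governed by a single convex cone determined by $B(p)$. Throughout, write $\mathcal{M}_i$ for the maximal set of justifiable EU preferences of $(\succsim,c_i)$, which is well defined by Corollary \ref{cor:EU_unique}; identify each Bernoulli utility with a vector in $\mathbb{R}^Z$, and let $T=\{d\in\mathbb{R}^Z:\sum_z d(z)=0\}$ be the tangent space of the simplex. Since $\mathbb{E}_q m-\mathbb{E}_p m=(q-p)\cdot m$ with $q-p\in T$, every relevant comparison depends only on the direction $d=q-p$, and an EU preference is pinned down by the direction of its utility in $T$ up to positive scaling (so the cone conditions below are genuinely statements about preferences). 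Unwinding \eqref{eq:Bp} at an interior $p$ shows $q\in B_i(p)$ exactly when $d=q-p$ satisfies $d\cdot u\le 0$ (this is $p\succsim q$) and $d\cdot m>0$ for every $m$ representing a preference in $\mathcal{M}_i$ (this is $\{q\}=c_i(\{p,q\})$, since a uniquely chosen, weakly inferior item must be strictly top-ranked by \emph{every} justification). Because these conditions are linear and homogeneous in $d$, I would define the base-point-independent cone
\[K_i:=\{d\in T:\ d\cdot u\le 0 \text{ and } d\cdot m>0 \text{ for all } m\in\mathcal{M}_i\},\]
and record that $B_i(p)=p+\bigl(K_i\cap(\Delta(Z)-p)\bigr)$ for every interior $p$. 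This is exactly where Independence does its work: it is what makes $B(p)$ a convex cone and makes $K_i$ independent of the base point (cf.\ the proof of Theorem \ref{thm:EU}).

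The first key step, and where I expect the main difficulty, is a polarity characterization of the maximal set: $\mathcal{M}_i$ consists of precisely those weakly $D$-monotone Bernoulli utilities $m$ with $d\cdot m>0$ for all $d\in K_i$. Necessity is immediate, since $q\in B_i(p)$ forces every justification to strictly prefer $q$ to $p$. Sufficiency is the substantive part: I would invoke the construction in the proof of Theorem \ref{thm:EU}, which builds the maximal candidate set from the supporting hyperplanes of the sets $B(p)$, together with Convexity, which collapses exclusion in arbitrary menus to the pairwise conditions encoded by $B(p)$. Concretely, if $m$ is weakly $D$-monotone and positive on $K_i$, and a menu $A$ excludes some $p$, then by the first part of Convexity there is a mixture $\bar q\in\mathrm{co}(A)\cap B_i(p)$; linearity of $\mathbb{E}_{\cdot}\,m$ gives $\mathbb{E}_{\bar q}m>\mathbb{E}_p m$, so some element of $A$ beats $p$ under $m$ and $m$ never top-ranks an excluded item. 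Hence adding $m$ contradicts no choice and $m\in\mathcal{M}_i$. The delicate points are the strict-versus-weak inequalities (open versus closed cones, governed by Continuity) and checking that the common weak-$D$-monotonicity constraint enters symmetrically for both DMs.

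With these two facts, the equivalence is a clean exercise in cone duality. From $B_i(p)=p+\bigl(K_i\cap(\Delta(Z)-p)\bigr)$ at an interior $p$ it follows that $B_1(p)\supset B_2(p)$ (strictly) for some interior $p$ is equivalent to $K_1\supsetneq K_2$: a point $q\in B_1(p)\setminus B_2(p)$ yields a direction $d=q-p\in K_1\setminus K_2$, and conversely a direction $d^*\in K_1\setminus K_2$, scaled small enough to keep $p+d^*\in\Delta(Z)$, yields such a $q$. The polarity characterization makes $\mathcal{M}_i$ antitone in $K_i$, so $K_1\supseteq K_2$ gives $\mathcal{M}_1\subseteq\mathcal{M}_2$. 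For strictness in both directions I would produce explicit witnesses rather than rely on order-reversal alone: given $d^*\in K_1\setminus K_2$, the inequality $d^*\cdot m^*\le 0$ for some $m^*\in\mathcal{M}_2$ yields, via the characterization, an $m^*\in\mathcal{M}_2\setminus\mathcal{M}_1$; and given $m^*\in\mathcal{M}_2\setminus\mathcal{M}_1$, the constraint $m^*$ must violate for DM~1 yields a $d^*\in K_1\setminus K_2$. Chaining these gives statement~(1) $\Leftrightarrow K_1\supsetneq K_2 \Leftrightarrow$ statement~(2), which is the claim.
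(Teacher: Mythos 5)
Your proposal is correct and follows essentially the same route as the paper: your ``polarity characterization'' of the maximal set is exactly the characterization established in the proof of Corollary \ref{cor:EU_unique} (EU preferences whose indifference hyperplane through $p$ supports $B(p)$, strictly prefer $B(p)$ to $p$, and are weakly $D$-monotone), and your cone-duality step with explicit witnesses is the paper's supporting-hyperplane argument, just carried out in direction space. The only difference is one of explicitness---the paper cites Corollary \ref{cor:EU_unique} and dismisses strictness and the converse as obvious, whereas you re-derive the characterization via Convexity and spell out the witnesses---so this is a more careful rendering of the same proof, not a different one.
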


The result is not true if ``maximal'' is replaced with ``minimal.'' This is because there may be justifications that the more liberal DM never feels the need to use. He may agree that these are acceptable justifications, but he doesn't appeal to them because something better is always available. Thus, his minimal set of justifiable EU preferences doesn't include them.

\section{What if the true preference is unobserved?}
\label{sec:unobserved}

Since the axioms for both the general and EU versions of the model make extensive use of the true preference, one might think that Theorems \ref{thm:main} and \ref{thm:main} are useless when the true preference are unobservable. This is not the case. In both versions of the model, choice behavior places substantial restrictions on the true preference, and sometimes, it even fully identifies the true preference. Where it does not, it still identifies a set of candidates. Any of these candidates can be used to verify the axioms. There is no need to check the axioms for each candidate in turn: there is a justifiability representation if and only if the axioms are satisfied conditional on an arbitrarily chosen candidate. This section explains how to obtain the candidate(s) from choice behavior.

Throughout, I use ``$c$ has a justifiability representation'' as shorthand for ``there is some preference $\succsim$ for which $(\succsim, c)$ has a justifiability representation.''

\subsection{Expected-utility case}
\label{sec:unobserved_EU}

We start with the EU version of the model because it is more straightforward. Typically, there will be a unique candidate for the true preference. The ``untypical'' case corresponds to an EU justifiability representation in which $\mathcal{M}^{EU}_\succ$ is a singleton, and the true preference is used only for tiebreaking. This case can be spotted in the data because $\{q \in \Delta(Z): p \in c(\{p, q\})\}$ is a half-space if and only if $\mathcal{M}^{EU}_\succ$ is a singleton. 

In the typical case, the unique candidate for the true preference is obtained in two steps. First, determine the true indifference curves. This is easy because the DM only chooses both items from a binary set if he is indifferent between them. Once the indifference curves are pinned down, there are only two candidates, distinguished by the direction of preference. The correct candidate is the one that makes $B(p)$ convex. Proposition \ref{prop:unobserved_EU} states this formally. 

\begin{proposition}
\label{prop:unobserved_EU}
Fix $c$ and $p \in \text{int}(\Delta(Z))$. Suppose that $\{q \in \Delta(Z): p \in c(\{p, q\})\}$ is not a half-space (restricted to the simplex). 
\begin{enumerate}
\item There is at most one preference $\succsim$ for which $(\succsim, c)$ has an EU justifiability representation.
\item If there is such a preference, it is the unique EU preference such that $\{p, q\} = c(\{p, q\})$ implies $p \sim q$, and $B(p)$ is convex.
\end{enumerate}
\end{proposition}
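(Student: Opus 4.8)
The plan is to establish both parts by exploiting the geometry of the EU structure, using the fact (from the proof of Theorem \ref{thm:EU}) that the justifiable preferences are recovered as supporting hyperplanes of the sets $B(p)$. The key observation is that, under an EU representation, the true indifference surfaces are affine hyperplanes in the simplex, and a binary menu reveals indifference exactly when both items are chosen. So I would first argue that the binary choice data $\{q : \{p,q\} = c(\{p,q\})\}$ pins down the \emph{indifference curve} through $p$, namely the set $\{q : p \sim q\}$, up to the ambiguity of which side is ``better.'' Since any two distinct EU preferences with the same indifference structure differ only by the direction of preference, there are at most two candidate true preferences consistent with the observed indifferences, and they are opposites of one another across the common indifference hyperplane through $p$.

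Next I would show that the convexity of $B(p)$ distinguishes these two candidates, which yields both the uniqueness in part (1) and the characterization in part (2). Recall $B(p) = \{q : p \succsim q \text{ and } \{q\} = c(\{p,q\})\}$. The set $\{q : p \in c(\{p,q\})\}$ decomposes as $\{p\} \cup NB(p)$, and its complement (within $\{q : p \succsim q\}$, under the correct orientation) is $B(p)$. By the necessity direction of Theorem \ref{thm:EU} (specifically the Convexity and Independence axioms, which force $B(p)$ to be a convex cone), the correct orientation of $\succsim$ makes $B(p)$ convex. I would then argue that the \emph{wrong} orientation fails this: with the direction of preference reversed, the roles of the ``better'' and ``worse'' half-spaces swap, and the set that plays the role of $B(p)$ is essentially the relative complement of a convex set, hence not convex—unless the original $\{q : p \in c(\{p,q\})\}$ were itself a half-space. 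This is precisely where the hypothesis that $\{q : p \in c(\{p,q\})\}$ is \emph{not} a half-space does its work: it rules out the degenerate singleton-$\mathcal{M}^{EU}_\succ$ case in which both orientations are consistent, guaranteeing that convexity of $B(p)$ selects a unique orientation.

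For part (2), once uniqueness is in hand, I would verify that the candidate identified—the unique EU preference whose indifferences match the data and for which $B(p)$ is convex—is well-defined and is exactly the one produced by this procedure. The indifference condition ``$\{p,q\} = c(\{p,q\})$ implies $p \sim q$'' fixes the affine family of indifference hyperplanes (by Continuity and Monotonicity the preference is a genuine FOSD-monotone EU preference), and the convexity-of-$B(p)$ condition then fixes orientation. I would note that monotonicity (FOSD) also orients the preference and must be consistent with the convexity selection, which provides a useful consistency check but is not by itself enough when $\succ_D$ is incomplete—hence the reliance on convexity.

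The main obstacle I anticipate is the careful bookkeeping around orientation and the ``half-space'' hypothesis. I would need to show rigorously that reversing the direction of preference turns a convex $B(p)$ into a non-convex set precisely when $\{q : p \in c(\{p,q\})\}$ is not a half-space; the delicate point is that $B(p)$ is only the part of the ``worse-than-$p$'' region that loses to $p$ pairwise, so I must track how $B(p)$, $NB(p)$, and the indifference set interact when the orientation flips, and confirm that the non-half-space hypothesis is exactly the condition under which the two candidate orientations give genuinely different (one convex, one not) versions of $B(p)$. Establishing this equivalence cleanly—rather than merely the one-directional implication—is the crux, and it ties directly to the earlier remark that $\{q : p \in c(\{p,q\})\}$ is a half-space if and only if $\mathcal{M}^{EU}_\succ$ is a singleton.
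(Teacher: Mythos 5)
Your overall architecture is the same as the paper's---recover the indifference hyperplane through $p$ from binary choices, then use convexity of $B(p)$ to select the orientation, with the non-half-space hypothesis ruling out the singleton-$\mathcal{M}^{EU}_\succ$ degeneracy---but there is a genuine gap at your first step. The claim that ``a binary menu reveals indifference exactly when both items are chosen'' is false in the direction you need. Optimization gives: both chosen $\Rightarrow p \sim q$. The converse fails in this model: a lottery $q$ with $q \sim p$ can still have $\{q\} = c(\{p,q\})$ (if every justifiable utility strictly prefers $q$ to $p$, i.e.\ $q \in B(p)$) or $\{p\} = c(\{p,q\})$ (if every justifiable utility strictly prefers $p$, i.e.\ $q \in W(p)$). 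Hence the revealed-indifference set $\{q : \{p,q\} = c(\{p,q\})\}$ is only the portion of the hyperplane $\{q : q \sim p\}$ lying outside $B(p) \cup W(p)$, and it is not automatic that this portion determines the hyperplane. Indeed, in the excluded singleton case $\mathcal{M}^{EU} = \{m\}$ the revealed-indifference set is $\{q : \mathbb{E}_q u = \mathbb{E}_p u \text{ and } \mathbb{E}_q m = \mathbb{E}_p m\}$, a codimension-two affine set that pins down no hyperplane---so the issue is real, not cosmetic.

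Closing this gap is exactly the first half of the paper's proof, and it is where the non-half-space hypothesis does work you have not assigned to it. The paper shows that when the minimal set of justifiable preferences is not a singleton, there exists a revealed-indifferent $q^*$ together with $m_1, m_2 \in \mathcal{M}^{EU}$ that strictly disagree ($m_1'q^* > m_1'p$ and $m_2'q^* < m_2'p$); strict disagreement is an open condition, so every sufficiently small perturbation of $q^*$ along the true indifference curve remains revealed indifferent. This produces a relatively open subset of the hyperplane, which does pin it down. So the hypothesis is needed twice---once to identify the indifference curve and once to orient it---whereas your proposal invokes it only for orientation. Your orientation step itself is sound in outline and matches the paper's: the reversed-orientation analogue of $B(p)$ is, on the strictly-better side, the complement of the convex cone $W(p)$, and the paper verifies non-convexity by taking two boundary points of $W(p)$ on distinct faces, which exist precisely because the minimal justifiable set is not a singleton. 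If you repair the first step with the perturbation argument, the rest of your plan goes through along the paper's lines.
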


The case excluded by Proposition \ref{prop:unobserved_EU} is equally simple, but it is relegated to the Appendix for brevity. A result similar to Proposition \ref{prop:unobserved_EU}, minus uniqueness, is still available. Rather than reading off a unique candidate for the true preference, one can read off a set of candidates. It is only necessary to check one of them. Either they all work ($(\succsim, c)$ has an EU justifiability representation for each candidate $\succsim$), or none of them does (there is no preference $\succsim$ for which $(\succsim, c)$ has an EU justifiability representation).

\subsection{General case}

Now we return to the general case of the model. For simplicity, we take $c$ to be a choice function rather than a choice correspondence. We are looking for one or more strict preferences $\succ$ such that $(\succ, c)$ has a justifiability representation. 

The analysis in this section proceeds in two steps. In Section \ref{sec:ax_1}, we show that a straightforward modification of Theorem \ref{thm:main} delivers a representation result for the unknown-$\succ$ case. The corresponding axiom is intuitive and easy to check. It also provides insight into the DM's true preference: it tells the researcher exactly which preferences are consistent with the DM's behavior, and which are not. However, it does not provide much (direct) insight into the preferences the DM considers justifiable. Section \ref{sec:ax_2} corrects that deficiency. Through an alternative axiomatization, it shows how to identify the preferences that the DM may consider justifiable, and how to rule out the rest. Conveniently, the constraints on the justifiable preferences come from simple, easily recognized patterns of choice. 

\subsubsection{First Axiomatization}
\label{sec:ax_1}

This section builds directly on Theorem \ref{thm:main}. Recall that IUA is a necessary and sufficient condition for a justifiability representation when $c$ is a choice function. It says that $a$ is irrelevant in the presence of $A \setminus \{a\}$ if $a$ is better than $c(A)$ according to the true preference. We can flip IUA on its head to derive conditions on the true preference. To see how this works, suppose that choice from $B$ changes when item $a$ is removed. Clearly, $a$ is not irrelevant in the presence of $B$ or any of its subsets. Take any subset of $B$ that contains $a$. By IUA, the DM cannot strictly prefer $a$ to the item he selects from this subset. If the DM does not select $a$, he must strictly prefer the item he does select to $a$. This is a clear restriction on $\succ$. We can apply this procedure until we run out of data. If the resulting restrictions form a cycle, no strict preference can satisfy them, so there is no representation. But if the restrictions don't form a cycle, $c$ will satisfy IUA conditional on any strict preference that obeys the restrictions, so Theorem \ref{thm:main} delivers a representation. This is exactly what Corollary \ref{cor:unknown_true_pref} says. 

\begin{definition}[Revealed Preference (P)]
If $c(B) \neq c(B \cup \{a\})$, then for any $A \subseteq B$ such that $a \neq c(A \cup \{a\})$, $c(A \cup \{a\})$ is revealed preferred to $a$ ($c(A \cup \{a\}) \; P \; a$). 
\end{definition}

\begin{axiom}[Acyclicity]
$P$ is acyclic. 
\end{axiom}

\begin{corollary}
\label{cor:unknown_true_pref}
$c$ satisfies Acyclicity if and only if it has a justifiability representation. Moreover, the true preference in every justifiability representation for $c$ extends $P$.
\end{corollary}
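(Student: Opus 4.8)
The plan is to lean entirely on Theorem~\ref{thm:main}. Since $c$ is a choice function, Optimization is vacuous, so $(\succ,c)$ has a justifiability representation if and only if it satisfies IUA. The corollary then reduces to two claims: (i) whenever a strict total order $\succ$ makes $(\succ,c)$ satisfy IUA, that $\succ$ extends $P$; and (ii) whenever $P$ is acyclic, some strict total order $\succ$ satisfies IUA. Claim (i) delivers both the ``moreover'' statement and the forward implication, since any sub-relation of a strict order is acyclic; claim (ii) delivers the converse.

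For claim (i), suppose $(\succ,c)$ satisfies IUA and that $c(A\cup\{a\})\,P\,a$ holds, witnessed by a set $B$ with $A\subseteq B$, $c(B)\neq c(B\cup\{a\})$, and $a\neq c(A\cup\{a\})$. (Note $a\notin B$, else $c(B)=c(B\cup\{a\})$.) I apply the contrapositive of IUA to the menu $A\cup\{a\}$ and element $a$: if $a\succsim c(A\cup\{a\})$ together with $a\notin c(A\cup\{a\})$ held, IUA would force $c(B'\setminus\{a\})=c(B')$ for every $B'\supseteq A\cup\{a\}$, and taking $B'=B\cup\{a\}$ would give $c(B)=c(B\cup\{a\})$, contradicting the witness condition. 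Hence $a\succsim c(A\cup\{a\})$ fails; since $\succ$ is complete and $a\neq c(A\cup\{a\})$, this forces $c(A\cup\{a\})\succ a$. Thus $P\subseteq\succ$, so $\succ$ extends $P$, and $P$ inherits acyclicity from the strict order $\succ$.

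For claim (ii), assume $P$ is acyclic. Its transitive closure $P^{+}$ is then irreflexive and transitive, i.e.\ a strict partial order, so Szpilrajn's extension theorem provides a strict total order $\succ$ with $P\subseteq P^{+}\subseteq\succ$. I verify IUA for this $\succ$. Fix $a\in A$ with $a\succsim c(A)$ and $a\neq c(A)$ (so $a\succ c(A)$, as $\succ$ is a strict total order), and suppose toward a contradiction that $c(B\setminus\{a\})\neq c(B)$ for some $B\supseteq A$. Since $a\in A\subseteq B$, writing $\hat{B}=B\setminus\{a\}$ gives $B=\hat{B}\cup\{a\}$ and $c(\hat{B})\neq c(\hat{B}\cup\{a\})$. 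Applying the definition of $P$ with witness set $\hat{B}$ and subset $A\setminus\{a\}\subseteq\hat{B}$---for which $(A\setminus\{a\})\cup\{a\}=A$ and $c(A)\neq a$---yields $c(A)\,P\,a$, whence $c(A)\succ a$, contradicting $a\succ c(A)$. So IUA holds, and Theorem~\ref{thm:main} supplies a representation.

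The delicate point is matching the quantifiers of the $P$-definition to those of IUA: in claim (i) one feeds IUA the small menu $A\cup\{a\}$ while using the larger witness set $B$ to extract the contradiction, and in claim (ii) one must identify $\hat{B}$ and the subset $A\setminus\{a\}$ so that the induced comparison is exactly $c(A)\,P\,a$. The only ingredient beyond this bookkeeping is the extension step: for an arbitrary (possibly uncountable) domain $\mathcal{A}$, promoting the acyclic $P$ to a complete strict order invokes Szpilrajn's theorem, hence the axiom of choice. Everything else is a direct translation between IUA and the revealed-preference relation $P$.
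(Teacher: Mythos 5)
Your proposal is correct and follows essentially the same route as the paper, which proves this corollary only by the in-text sketch preceding it: ``flipping'' IUA to show any representation's true preference must strictly rank $c(A\cup\{a\})$ above $a$ whenever $c(A\cup\{a\})\;P\;a$ (giving both the ``moreover'' clause and necessity of Acyclicity), and conversely verifying that IUA holds conditional on any strict order extending $P$, so that Theorem~\ref{thm:main} (with Optimization vacuous for choice functions) supplies the representation. Your quantifier bookkeeping in matching the definition of $P$ to IUA, and your explicit invocation of Szpilrajn's theorem to extend the acyclic $P$ to a complete strict order, merely make rigorous the steps the paper leaves implicit.
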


Corollary \ref{cor:unknown_true_pref} is helpful not just because it delivers a representation, but because it tells us exactly what the true preference (in any representation) must look like. Sometimes, it pins down a unique preference. Example \ref{ex:disabled} illustrates.

\begin{example}
\label{ex:disabled}
Recall the discussion of \citet{snyder1979avoidance} at the beginning of Section \ref{sec:empirics}. We will use the same example (and notation) here, with one modification. \citeauthor{snyder1979avoidance} restricted attention to binary menus, but menus with more than two alternatives are important for identification in the justifiability model. Thus, we need to specify choice from the grand set $\{a_1, a_2, b_1\}$ as well as all the pairs:
\[c(\{a_1, a_2\}) = a_1 \quad c(\{a_1, b_1\}) = b_1 \quad c(\{a_2, b_1\}) = c(\{a_1, a_2, b_1\}) = b_1.\]
Notice that choice from the grand set changes when $b_1$ or $a_2$ is removed. Since $a_2$ matters and $a_1 = c(\{a_1, a_2\})$, we must have $a_1 \succ a_2$. Since $b_1$ matters and $c(\{a_2, b_1\}) = a_2$, we must have $a_2 \succ b_1$. Putting these two restrictions together, we get $a_1 \succ a_2 \succ b_1$.\footnote{The model of \citet{cherepanov2013rationalization} delivers only $a_1 \succ a_2$. Thus, their main model does not rule out $b_1 \succ a_1 \succ a_2$, in which the DM prefers sitting with the disabled stranger to watching either movie alone.} The DM prefers movie A to movie B, but above all prefers to avoid the stranger. This is precisely the preference we conjectured in Section \ref{sec:empirics}. Now we have shown that the conjecture is, in fact, the true preference. 
\end{example}

Conditional on any true preference $\succ$, we can work out the set of justifiable preferences. Recall that $A$ ``excludes'' $a$ if $a \succ c(\{a\} \cup A)$. This translates into an ``exclusion condition'' on the set of justifiable preferences: no justifiable preference can rank $a$ above $A$. As shown in the proof of Theorem \ref{thm:main}, we can take the set of justifiable preferences to be precisely the set of preferences that satisfy all these exclusion conditions. 

Still, this process leaves something to be desired. It would be better to learn about the set of justifiable preferences just by looking at the data, not by constructing a set of true preferences and then working out the constraints associated with each one. The next section explains how to do that. The reader is forewarned that the second axiomatization is somewhat more involved than the first one. It should still be of interest to the empirically inclined reader, though.

\subsubsection{Second Axiomatization}
\label{sec:ax_2}

Two patterns of choice behavior are key to understanding the set of justifiable preferences. We define these patterns, explain their implications for the true preference and the justifiable preferences, and leverage them to obtain a representation theorem for the unknown-true-preference case. 

The first key pattern is a three-element cycle. As suggested in Example \ref{ex:disabled}, the true preference on any three-element cycle is uniquely pinned down. The item chosen from the full three-element set is middle-ranked, and the item that beats it is top-ranked. Further restrictions on the true preference may be obtained by chaining together multiple cycles. For instance: if one cycle reveals $a$ to be better than $b$, and another reveals $b$ to be better than $d$, $a$ is certainly better than $d$. 

\begin{definition}[Cycle/Chain]
\label{def:cycle}
$(a_1, a_2, a_3)$ is a cycle if
\[c(\{a_1, a_2\}) = a_1 \quad c(\{a_1, a_2, a_3\}) = a_2 \quad c(\{a_1, a_3\}) = a_3.\]
For $k \geq 3$, $(a_1, \ldots, a_k)$ is a chain if for each $i \in \{2, \ldots, k-1\}$, $(a_{i-1}, a_i, a_{i+1})$ is a cycle and/or both $(a_{i-2}, a_{i-1}, a_i)$ and $(a_i, a_{i+1}, a_{i+2})$ are cycles. 
\end{definition}

Cycles tell us about the justifiable preferences as well as the true preference. The best item in any cycle is pairwise-defeated by the worst item, so the best item must be unjustifiable when the latter is present. Again, we may be able to obtain more information by chaining together multiple cycles. If $a$ is revealed better than $b$ and $b$ is revealed better than $d$, but the DM chooses $d$ from $\{a, d\}$, $a$ must be unjustifiable when $d$ is present. 

The second key pattern of choice is an almost-WARP set. (The reasons for the name will soon become clear.) Fix some set $A$, and suppose that choice satisfies WARP on all its proper subsets. If $|A| = 3$, suppose further that pairwise choice is not cyclic. (Pairwise choice can't be cyclic if $|A| > 3$, and we have already dealt with 3-element cycles.) Then, there is a unique preference on $A$ that is maximized by choice from each proper subset. This preference is pinned down by pairwise choice. If choice on $A$ violates WARP, so $c(A)$ is pairwise-defeated by some other item in $A$, the true preference on $A$ is uniquely pinned down. In fact, it is the preference given by pairwise choice. (This may not be immediately obvious, but it is straightforward to prove.) 

\begin{definition}[Almost-WARP set]
Suppose that $A$ is not a cycle. $A$ is an almost-WARP set if choice violates WARP on $\mathcal{F}(A)$, but satisfies WARP on $\mathcal{F}(A) \setminus A$. 
\end{definition}

Like cycles, almost-WARP sets tell us about the justifiable preferences as well as the true preference. Since the best item is not chosen from the full set, it must be unjustifiable in the presence of the other items.

The implications of cycles and almost-WARP sets for justifiable preferences are summed up in Definition \ref{def:rev_exc}. 

\begin{definition}[Revealed exclusion]
\label{def:rev_exc}
$a$ is revealed excluded by $B$ if:
\begin{enumerate}
\item For $|B| > 1$: $B \cup \{a\}$ is an almost-WARP set, and $a$ pairwise-defeats $c(B \cup \{a\})$. 
\item For $B = \{b\}$: $b = c(\{a, b\})$, and $a$ comes before $b$ in a chain.
\end{enumerate}
\end{definition}

Since three-element cycles and almost-WARP sets are easy to spot, so is revealed exclusion. One may wonder whether more restrictions on the justifiable preferences could be obtained from more complicated patterns of choice. The answer is no: cycles and almost-WARP sets tell us all we could hope to know about the preferences the DM considers justifiable. Every preference consistent with revealed exclusion appears in some representation. (In fact, there is a representation in which the set of justifiable preferences is \textit{precisely} the set of preferences consistent with revealed exclusion. We return to this point after the next representation result.) Proposition \ref{prop:rev_exc} summarizes. 

\begin{proposition}
\label{prop:rev_exc}
Suppose $c$ has a justifiability representation. For any $a \notin B$, the following are equivalent:
\begin{enumerate}
\item $a$ is revealed excluded by a subset of $B$. 
\item No justifiable preference in any representation ranks $a$ above $B$.
\end{enumerate}
\end{proposition}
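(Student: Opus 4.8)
The plan is to prove the two implications separately, reading (2) as the assertion that $a \notin M(B \cup \{a\})$ in every representation (no justifiable order tops $a$ over all of $B$). Two facts will be used repeatedly. First, a direct consequence of the representation: if $a$ pairwise-defeats $d := c(A)$ for some $A \ni a$ with $a \neq d$, then $a \notin M(A)$ in every representation. Indeed, if $a \in M(A)$ then, since $d = \argmax(M(A), \succsim)$, we get $d \succ a$; for $a = \argmax(M(\{a,d\}), \succsim)$ to hold we then need $d \notin M(\{a,d\})$, i.e. every justifiable order ranks $a$ above $d$; but then no justifiable order tops $d$ in $A$ (as $a \in A$), so $d \notin M(A)$, contradicting $d = c(A)$. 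Second, the necessity of the exclusion condition (from the IUA discussion): if $A$ excludes $a$, then in every representation each justifiable order ranks some element of $A$ above $a$.

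For (1) $\Rightarrow$ (2), suppose $a$ is revealed excluded by some $B' \subseteq B$. If $|B'| > 1$, Definition \ref{def:rev_exc} gives that $a$ pairwise-defeats $c(B' \cup \{a\})$, so by the first fact $a \notin M(B' \cup \{a\})$; every justifiable order thus ranks some element of $B' \subseteq B$ above $a$, whence $a \notin M(B \cup \{a\})$. If $B' = \{b\}$, then $b = c(\{a,b\})$ and $a$ precedes $b$ in a chain. Here I first show that a single cycle $(x,y,z)$ forces the true preference $x \succ y \succ z$ in every representation: from $c(\{x,y,z\}) = y$ and $c(\{x,y\}) = x$ the first fact gives $x \notin M(\{x,y,z\})$, and two short contradiction arguments (mirroring the one above, playing $\{x,y\}$ and $\{x,z\}$ against the grand menu) rule out $y \succ x$ and $z \succ y$, yielding $x \succ y \succ z$. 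Chaining cycles as in Definition \ref{def:cycle} then forces $a \succ b$, so $\{b\}$ excludes $a$; by the second fact every justifiable order ranks $b$ above $a$, giving $a \notin M(B \cup \{a\})$ once more.

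For (2) $\Rightarrow$ (1) I argue the contrapositive: assuming $a$ is not revealed excluded by any subset of $B$, I exhibit a representation containing a justifiable order that ranks $a$ above $B$. The target is the maximal representation promised in the remark after the statement, in which the justifiable set $\mathcal{M}^*$ is exactly the set of total orders consistent with revealed exclusion. Building it has two parts. First, pick a true preference $\succsim^*$ extending the revealed-preference relation $P$ (available by Acyclicity and Corollary \ref{cor:unknown_true_pref}) and chosen so that the exclusion conditions it induces coincide with revealed exclusion; Theorem \ref{thm:main} then certifies that taking $\mathcal{M}^*$ to be all orders satisfying these exclusion conditions yields a genuine representation of $(\succsim^*, c)$. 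Second, within $\mathcal{M}^*$, construct an order with $a \succ_m b$ for every $b \in B$: start from the partial order placing $a$ above each element of $B$ and extend it to a total order that, for every revealed-excluded pair $(x, X)$, avoids ranking $x$ above all of $X$. Because $a$ is not revealed excluded by any subset of $B$, the target relation ``$a$ above $B$'' collides with no constraint of the form $(a, X)$ with $X \subseteq B$, so the extension is not blocked at the outset.

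The main obstacle is this last construction together with the matching in its first part, both of which refine the hardest (``big enough'') step of Theorem \ref{thm:main}. The delicate point is that revealed exclusion, defined only through three-element cycles and almost-WARP sets (plus chaining), must capture \emph{exactly} the exclusions forced in every representation: in particular, a pairwise defeat inside a set that is not itself almost-WARP still forces an exclusion, and one must show this reduces, via the WARP-violating proper subsets that witness the failure, to revealed exclusion by a smaller subset. Establishing this reduction, verifying that $\succsim^*$ can be chosen so its induced exclusions are precisely the revealed ones, and carrying out the order extension so that \emph{all} revealed-exclusion constraints are met simultaneously once the single target constraint is known to be unforced---these are where the real work lies; the two soundness cases above are comparatively routine.
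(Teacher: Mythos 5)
Your direction (1) $\Rightarrow$ (2) is correct and essentially complete. The ``first fact'' (if $a \in A$ pairwise-defeats $c(A) \neq a$, then $a \notin M(A)$ in every representation) is exactly right, the two contradiction arguments pinning down $x \succ y \succ z$ on a cycle both work, and chaining plus the necessity of exclusion conditions disposes of the case $B' = \{b\}$. This is the easy direction, which the paper handles somewhat implicitly (via the discussion in the text and the argument inside Corollary \ref{cor:maximal}); your version is more self-contained, which is fine.

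The gap is in direction (2) $\Rightarrow$ (1), which is where the proposition's real content lies, and your proposal does not prove it---it only names the steps that would have to be proven. Concretely: you argue the extension of ``$a$ above $B$'' to an order consistent with revealed exclusion ``is not blocked at the outset'' because $a$ is not revealed excluded by any subset of $B$, i.e.\ no constraint of the form $(a, X)$ with $X \subseteq B$ is violated. But the extension can be blocked by constraints that never mention a subset of $B$ directly. For instance, suppose $a$ is revealed excluded by $X$ with $X \setminus B = \{x\}$, and $x$ is in turn revealed excluded by some $Y \subseteq B$. Any order ranking $a$ above $B$ can only satisfy $(a, X)$ by placing $x$ above $a$, hence above all of $B \supseteq Y$; but $(x, Y)$ demands some element of $Y$ above $x$---contradiction. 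So indirect chains (trees) of revealed exclusions passing through elements outside $B$ can make every extension impossible even though no direct constraint $(a,X)$, $X \subseteq B$, exists. The proposition is true precisely because whenever all extensions are blocked, this blocking structure collapses to a \emph{direct} revealed exclusion of $a$ by a subset of $B$; that collapse is the hard step. The paper does it in two moves: Lemma \ref{lem:exc2tree} converts ``no order consistent with revealed exclusion ranks $a$ above $B$'' into a revealed-exclusion tree from $a$ into some $A \subseteq B$, and then a minimality argument---using IEA to show that for a minimal such $A$ the set $A \cup \{a\}$ is almost-WARP (or, when $|A|=1$, that $a$ and $a'$ are linked by a chain)---yields that $a$ is revealed excluded by $A$ itself. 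You correctly flag this reduction as ``where the real work lies,'' but flagging it is not doing it; as written, the proposal defers exactly the step that makes the equivalence true. (A smaller point: your ``first part'' need not be re-derived by matching exclusion conditions to Theorem \ref{thm:main}; the canonical representation is already available by citing Theorem \ref{thm:unknown_true_pref} and Corollary \ref{cor:maximal}. Even with that citation, however, the combinatorial bridge above is still missing.)
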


The next axiom is the analogue of IUA for the unknown-true-preference case. It says that an item $a$ is irrelevant in the presence of a set $B$ if $a$ is revealed excluded by $B$. 

\begin{axiom}[Irrelevance of Excluded Alternatives (IEA)]
If each $a \in A \subset B$ is revealed excluded by a subset of $B$, then $c(B) = c(B \setminus A)$. 
\end{axiom}

IEA has no bite if nothing is revealed excluded, so the reader may wonder how often cycles and almost-WARP sets actually arise. The answer is reassuring: unless choice satisfies WARP (in which case a standard preference-maximization model is perfectly adequate), there will be at least one cycle or almost-WARP set, so at least one item will be revealed excluded. This will provide an opportunity to falsify the model. 

Before moving to the representation result, it is worth noting one implication of IEA. Suppose that $A$ is an almost-WARP set, so there is a unique preference maximized by choice on each of its proper subsets. We can index the items in $A$ from best to worst according to this preference: $a_1 \succ \cdots \succ a_n$. (As noted above, the true preference agrees with the WARP-implied preference, hence the notation.) Since choice on $A$ violates WARP, we can't have $c(A) = a_1$. It turns out we can only have $c(A) = a_2$---the item chosen from $A$ is the DM's second-favorite item.\footnote{Suppose $c(A) = a_3$. Then, $a_2$ is revealed excluded by $A \setminus \{a_2\}$. By IEA, we can remove $a_2$ from $A$ without changing choice. But we know that $c(A \setminus \{a_2\}) = a_1 \neq c(A)$, so we have a contradiction. This argument generalizes to $i > 3$.} This means the DM's choice cannot deteriorate too quickly as we expand the choice set. He can move from always choosing his favorite item to choosing his second-favorite, but not to his third-favorite or worse. This property requires justifications to be preferences: the model of \citet{cherepanov2013rationalization} does not share it. 

Theorem \ref{thm:unknown_true_pref} is the promised representation result. It says that IEA is necessary and sufficient for a representation. 

\begin{theorem}
\label{thm:unknown_true_pref}
$c$ satisfies IEA if and only if $c$ has a justifiability representation. 
\end{theorem}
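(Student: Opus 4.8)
The plan is to prove both directions of the equivalence, leveraging Theorem~\ref{thm:main} and Corollary~\ref{cor:unknown_true_pref} as the key inputs. For necessity, suppose $c$ has a justifiability representation with true preference $\succ$ and justifiable set $\mathcal{M}$. I would first establish that revealed exclusion is sound: if $a$ is revealed excluded by a subset $B' \subseteq B$, then in fact $a \succ c(B' \cup \{a\})$ and $a \notin c(B' \cup \{a\})$, i.e. $B' \cup \{a\}$ genuinely excludes $a$ in the sense of the (true) exclusion condition. This requires separately checking the two clauses of Definition~\ref{def:rev_exc}: for the almost-WARP clause, I would invoke the already-established fact (the ``straightforward to prove'' claim preceding Definition~\ref{def:rev_exc}) that on an almost-WARP set the true preference agrees with pairwise choice, so that $a$ pairwise-defeating $c(B'\cup\{a\})$ forces $a \succ c(B' \cup \{a\})$; for the chain clause, I would use that cycles pin down the middle/top structure and chaining preserves the revealed-preference order $P$, so $a$ coming before $b$ in a chain with $b = c(\{a,b\})$ yields $a \succ b$. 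Once soundness is in hand, IEA follows from IUA: each $a \in A$ is excluded by a subset of $B$, so by IUA (applied repeatedly, removing one revealed-excluded item at a time, noting that exclusion of $a$ by a subset of $B$ persists as we remove \emph{other} excluded items) we get $c(B) = c(B \setminus A)$.

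For sufficiency, suppose $c$ satisfies IEA. The natural strategy is to reduce to Corollary~\ref{cor:unknown_true_pref} by showing that IEA implies Acyclicity of $P$; then the corollary hands me a representation. So the heart of the argument is: \emph{IEA $\Rightarrow$ $P$ is acyclic.} I would argue the contrapositive. Suppose $P$ has a cycle $x_1 \, P \, x_2 \, P \, \cdots \, P \, x_k \, P \, x_1$. Each relation $x_i \, P \, x_{i+1}$ arises, by definition of revealed preference, from some menu on which removing $x_{i+1}$ changes choice and $x_i$ is the chosen item; I would need to trace how each such $P$-edge corresponds to a revealed exclusion (via a cycle or almost-WARP pattern somewhere in the data) and then assemble these to contradict IEA. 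The cleanest route is probably to show directly that a $P$-cycle forces some item to be revealed excluded by a set from which it is nonetheless chosen, or that iterated application of IEA collapses a menu in two incompatible ways.

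The main obstacle I anticipate is precisely this equivalence between the $P$-relation of the first axiomatization and the revealed-exclusion relation of the second. The two are defined through superficially different patterns ($P$ through arbitrary ``choice changes when $a$ is removed,'' revealed exclusion through the cleaner cycle/almost-WARP templates), and bridging them is where the real work lies. I would want a lemma stating that whenever a menu witnesses $c(B) \neq c(B \cup \{a\})$ with $a \neq c(A \cup \{a\})$ for some $A \subseteq B$, one can locate a minimal such witness that is either a three-element cycle or an almost-WARP set, thereby certifying a revealed exclusion; conversely that every revealed exclusion generates the corresponding $P$-edges. Establishing this correspondence---in particular that the ``easily spotted'' patterns exhaust all the exclusion information latent in $P$---is both the crux and the content of Proposition~\ref{prop:rev_exc}, which I would prove first (or in tandem) and then cite. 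With Proposition~\ref{prop:rev_exc} available, the argument that IEA and Acyclicity are equivalent becomes largely bookkeeping, and the theorem follows from Corollary~\ref{cor:unknown_true_pref}.

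A secondary subtlety worth flagging is the repeated-removal step in the necessity direction: removing one excluded alternative could in principle disturb which subsets witness the exclusion of another. I would handle this by noting that exclusion of $a$ by $B' \subseteq B$ depends only on $c(B' \cup \{a\})$, and if the removed items are themselves excluded (hence never chosen from any relevant menu), IUA guarantees choice on the surviving menus is unchanged, so each remaining item's exclusion certificate survives intact. This mirrors the transitivity-of-exclusion remark made in the discussion of IUA versus ISA, and I would lean on that same intuition to justify sequential removal.
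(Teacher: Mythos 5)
Your necessity direction is workable. Soundness of revealed exclusion does hold in any representation, and each clause follows from a short IUA argument rather than from the unproven ``straightforward to prove'' remarks you lean on: for the almost-WARP clause, if $b = c(B' \cup \{a\})$ and $a$ pairwise-defeats $b$ but $b \succ a$, then $\{a\}$ excludes $b$, and IUA forces $c(B' \cup \{a\}) = c\bigl((B' \cup \{a\}) \setminus \{b\}\bigr) \neq b$, a contradiction; iterating the same trick on the three menus of a cycle gives $x_1 \succ x_2 \succ x_3$, which handles the chain clause. Your fix for sequential removal (transitivity of exclusion-from-below, available since the representation satisfies IUA by Theorem \ref{thm:main}) is also the right one. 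So the ``only if'' half can be completed along your lines.

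The genuine gap is in sufficiency, which is where all the content of the theorem lives. Your plan is IEA $\Rightarrow$ Acyclicity of $P$ $\Rightarrow$ representation via Corollary \ref{cor:unknown_true_pref}, but the first implication---the crux---is never proved: ``show that a $P$-cycle forces some item to be revealed excluded by a set from which it is nonetheless chosen'' restates the goal rather than arguing for it. Worse, the tool you propose for closing this gap, Proposition \ref{prop:rev_exc}, is unavailable in this direction: its hypothesis is ``Suppose $c$ has a justifiability representation,'' which is exactly what sufficiency must establish, and in the paper its proof invokes the canonical representation built inside the proof of Theorem \ref{thm:unknown_true_pref} (via Lemma \ref{lem:exc2tree}), so citing it here is circular. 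It is also not the bridge you need even on its own terms: it relates revealed exclusion to restrictions on \emph{justifiable} preferences across representations, not to $P$-edges, and under the hypothesis of a $P$-cycle there are no representations for it to quantify over. The paper avoids all of this by proving sufficiency constructively: Lemma \ref{lem:construct_true_pref} builds the true preference from chains plus pairwise choice and uses IEA (the cycle-expansion and deletion argument) to show it is acyclic; $\mathcal{M}$ is taken to be all strict preferences consistent with revealed exclusion; and Lemmas \ref{lem:exc2tree} and \ref{lem:exc2pref} verify, via the tree construction, that this pair predicts $c$. That combinatorial work is precisely what your ``bookkeeping'' elides; an independent proof of IEA $\Rightarrow$ Acyclicity would require effort of the same order (showing every $P$-edge is respected by the chain-plus-pairwise-choice relation), so as it stands the proposal is an outline of the difficulty, not a proof.
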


The proof of Theorem \ref{thm:unknown_true_pref} constructs a particular justifiability representation for $c$, which we call the ``canonical representation.'' This representation is notable because the set of justifiable preferences is precisely the set of preferences consistent with revealed exclusion. Thus, the set of justifiable preferences is maximal: it includes the set of justifiable preferences from every other representation. Maximality of justifications corresponds to minimality of constraints: the more justifications are available to the DM, the fewer constraints he faces in making his decision. Therefore, the canonical representation for $c$ is the most parsimonious model of constrained decision-making that explains $c$. 

The true preference in the canonical representation is easily constructed, too. First, impose $a \succ b \succ d$ whenever $(a, b, d)$ is a cycle, and take the transitive closure. Second, if $a$ and $b$ have not yet been ranked, impose $a \succ b$ if $a = c(\{a, b\})$, and $b \succ a$ otherwise. Although this may not be the only preference consistent with behavior, it is the only preference consistent with the \textit{maximal} set of justifiable preferences.\footnote{\citet{cherepanov2013rationalization} agree that the most desirable representation is the one that imposes fewest constraints on the DM. In their model, as in this one, the true preference corresponding to the minimal-constraint representation matches pairwise choice when pairwise choice is acyclic. But \citeauthor{cherepanov2013rationalization} do not provide general results about the true preference when the data exhibits cycles. Since cycles will occur whenever one item is unjustifiable in the presence of another, it is undesirable to rule them out.} 
Corollary \ref{cor:maximal} summarizes the properties of the canonical representation. 

\begin{definition}[Canonical representation]
Representation $(\succ^*, \mathcal{M}^*)$ is canonical if (1) $\mathcal{M}^*$ is the set of preferences consistent with revealed exclusion, and (2) $\succ^*$ is the preference that has $a \succ b$ whenever $a$ comes before $b$ in a chain, and that agrees with pairwise choice on pairs not connected by any chain. 
\end{definition}

\begin{corollary}
\label{cor:maximal}
Suppose $c$ has a justifiability representation $(\succ, \mathcal{M})$. Then, it has a unique canonical representation $(\succ^*, \mathcal{M}^*)$, and $\mathcal{M}^* \supseteq \mathcal{M}$. 
\end{corollary}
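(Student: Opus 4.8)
The plan is to unpack the corollary into three claims---existence of a canonical representation, its uniqueness, and the inclusion $\mathcal{M}^* \supseteq \mathcal{M}$---and to dispatch each using results already in hand. The heavy lifting for existence is done by Theorem \ref{thm:unknown_true_pref}: since $(\succ, \mathcal{M})$ is a representation, $c$ satisfies IEA, and the proof of that theorem explicitly builds a representation whose set of justifiable preferences is exactly the set consistent with revealed exclusion and whose true preference is the $\succ^*$ of the definition. I would simply cite that construction to conclude that a canonical representation exists. In particular it is there, not in the corollary, that the one genuinely nontrivial well-definedness fact is established: that the recipe for $\succ^*$ (impose the cycle/chain order, take its transitive closure, then break remaining ties by pairwise choice) yields a bona fide complete, transitive, antisymmetric order rather than a relation with cycles.

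For uniqueness, I would observe that both components of a canonical representation are pinned down by $c$ alone. The set $\mathcal{M}^*$ is \emph{defined} as the set of preferences consistent with revealed exclusion, and revealed exclusion (Definition \ref{def:rev_exc}) is a property of $c$; likewise $\succ^*$ is determined by the chain structure and by pairwise choice, both read off $c$. Hence at most one pair satisfies the definition of ``canonical representation,'' and combined with existence this gives exactly one.

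The new content is the maximality inclusion, and here the key tool is Proposition \ref{prop:rev_exc}. Fix any $\succ_m \in \mathcal{M}$; I want to show $\succ_m$ is consistent with revealed exclusion, i.e.\ $\succ_m \in \mathcal{M}^*$. So suppose $a$ is revealed excluded by some set $B$ (note $a \notin B$ is automatic from Definition \ref{def:rev_exc}). Then $a$ is revealed excluded by a subset of $B$---namely $B$ itself---so by the $(1)\Rightarrow(2)$ direction of Proposition \ref{prop:rev_exc}, no justifiable preference in any representation ranks $a$ above all of $B$. Since $\succ_m$ is a justifiable preference in the representation $(\succ,\mathcal{M})$, it ranks some element of $B$ above $a$. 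As this holds for every revealed exclusion, $\succ_m$ is consistent with revealed exclusion, so $\succ_m \in \mathcal{M}^*$; since $\succ_m$ was arbitrary, $\mathcal{M} \subseteq \mathcal{M}^*$.

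The step I expect to carry the real difficulty lies not in the corollary itself but in the cited construction behind existence: verifying that the two-stage recipe for $\succ^*$ never creates a conflict between the transitive closure of the cycle order and the tie-breaking pairwise choices. The existence of \emph{some} representation---hence acyclicity of $P$ via Corollary \ref{cor:unknown_true_pref}, which the cycle order refines---is what rules this out. Once Theorem \ref{thm:unknown_true_pref} is in place, the corollary reduces to the bookkeeping above plus the one-line application of Proposition \ref{prop:rev_exc}.
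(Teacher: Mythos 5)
Your decomposition matches the paper's, and your existence step is handled exactly as the paper handles it: cite the construction in the proof of Theorem \ref{thm:unknown_true_pref}, with Lemma \ref{lem:construct_true_pref} supplying the nontrivial fact that the recipe for $\succ^*$ yields a genuine strict preference. The problem is the maximality step. You obtain $\mathcal{M} \subseteq \mathcal{M}^*$ by citing the $(1)\Rightarrow(2)$ direction of Proposition \ref{prop:rev_exc}, but within the paper that direction has no proof independent of this corollary: the appendix proof of Proposition \ref{prop:rev_exc} establishes only $(2)\Rightarrow(1)$ (via Lemma \ref{lem:exc2tree} and the canonical representation), while $(1)\Rightarrow(2)$ is established exactly once---in the proof of Corollary \ref{cor:maximal} itself. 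So your appeal to the proposition is circular relative to the paper's logical architecture. The repair is the paper's own short direct argument, which you should give in place of the citation: suppose $\succ_m \in \mathcal{M}$ ranks $a$ above every element of $B$ while $a$ is revealed excluded by $B$. The choice patterns that define revealed exclusion (chains and almost-WARP sets) force the true preference of \emph{every} representation, in particular $\succ$, to rank $a$ strictly above every element of $B$; this is the content of ``the true preference extends $P$'' in Corollary \ref{cor:unknown_true_pref}. Then $a \in M(B \cup \{a\})$ and $a$ is the $\succ$-best element of $M(B \cup \{a\})$, so $(\succ, \mathcal{M})$ predicts $c(B \cup \{a\}) = a$; but IEA, which $c$ satisfies because it is representable, gives $c(B \cup \{a\}) = c(B) \in B$, a contradiction.

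On uniqueness, your definitional argument (both components of a canonical representation are pinned down by $c$ alone) is valid for the statement as written, so I would not call it a gap. But be aware that the paper proves something strictly stronger, which your argument does not deliver: holding the maximal set $\mathcal{M}^*$ fixed, no true preference other than $\succ^*$ can be paired with it in any representation. The paper's device is the reversed order $\succ_{bad}$, the exact opposite of $\succ^*$: it is consistent with revealed exclusion (whenever $a$ is revealed excluded by $B$ we have $a \succ^* B$, hence $B \succ_{bad} a$), so $\succ_{bad} \in \mathcal{M}^*$; consequently, for any pair $\{a,b\}$ not linked by a chain, both rankings of $a$ and $b$ are justifiable under $\mathcal{M}^*$, and matching $c(\{a,b\})$ then forces the true preference to agree with pairwise choice. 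That stronger fact is what the text's footnote (``it is the only preference consistent with the maximal set of justifiable preferences'') rests on, so if you route uniqueness purely through the definition, you lose the substantive identification claim the corollary is meant to support.
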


\subsection{Comparing settings}
\label{sec:double}

Sections \ref{sec:ax_1} and \ref{sec:ax_2} cover identification in a single, static setting. Intuitively, we should be able to learn more about the true preference and/or set of justifiable preferences by varying the pressure to find a good justification. For instance, we would expect the DM's choices to be closer to his true preference when he chooses anonymously than when he must announce his choice to some ethically conscious peers. We do not argue that the DM is wholly unconstrained in the low-pressure setting, but that he is less constrained in the low-pressure setting than the high-pressure setting. 

To formalize this, let $c_L$ be the choice function corresponding to the low-pressure setting, and $c_H$ to be the choice function corresponding to the high-pressure setting. We are now looking for a pair of representations with the same true preference and nested sets of justifiable preferences. We accomplish this by building on Theorem \ref{thm:unknown_true_pref}. First, we ensure that $c_L$ has a justifiability representation by requiring it to satisfy IEA. Second, we impose consistency between $c_H$ and $c_L$. This consistency condition, IREA, is essentially a stronger version of IEA. It says that any item revealed excluded in the low-pressure setting is irrelevant in the high-pressure setting. This is clearly necessary for the set of justifications to be smaller in the high-pressure case. IREA also says that anything the DM chose in the low-pressure setting, but not in the high-pressure setting, is irrelevant in the high-pressure setting. Intuitively, this is because choice changes as the pressure to find a justification increases only if the item that was originally selected no longer counts as justified.

\begin{definition}[Replacement]
$a$ is replaced in $A$ if $a = c_L(A) \neq c_H(A)$.
\end{definition}

\begin{axiom}[Irrelevance of Replaced or Excluded Alternatives (IREA)]
If each $a \in A \subset B$ is revealed excluded in $L$ by, or replaced in, a subset of $B$, then $c_H(B) = c_H(B \setminus A)$. 
\end{axiom}

\begin{example}
Recall Example \ref{ex:disabled}. Augment this example to have two treatments. In the low-pressure setting $L$, the disabled stranger does not observe the DM unless the DM chooses the room in which the stranger is sitting. In the high-pressure setting $H$, the stranger observes the DM choosing between the rooms. Suppose that the DM's choices in $L$ are given in Example \ref{ex:disabled}. Suppose further than $b_1 = c_H(\{a_1, a_2, b_1\})$. IREA implies that $b_1$ is chosen in $H$ whenever it is available, including in $\{b_1, a_2\}$. Intuitively, 
\[b_1 = c_H(\{a_1, a_2, b_1\}) \neq c_L(\{a_1, a_2, b_1\}) = a_2\] 
reveals that the DM now feels unable to choose $a_2$ over both $a_1$ and $b_1$. Since he already felt unable to choose $a_1$ over $b_1$, he must now feel compelled to choose $b_1$ over both $a_1$ and $a_2$.
\end{example}

Proposition \ref{prop:double} is the representation result for the two-setting case. At the expense of additional notation, it could easily be extended to more than two settings. 

\begin{proposition}
\label{prop:double}
$c_L$ and $c_H$ have justifiability representations $(\succ, \mathcal{M}^L)$ and $(\succ, \mathcal{M}^H)$ such that $\mathcal{M}^H \subseteq \mathcal{M}^L$ if and only if $(c_L, c_H)$ satisfies IREA and $c_L$ satisfies IEA. 
\end{proposition}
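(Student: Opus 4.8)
The plan is to prove both directions by leveraging Theorem \ref{thm:unknown_true_pref} and Corollary \ref{cor:maximal}. The necessity direction should be the easier one. Suppose $c_L$ and $c_H$ have representations $(\succ, \mathcal{M}^L)$ and $(\succ, \mathcal{M}^H)$ with $\mathcal{M}^H \subseteq \mathcal{M}^L$. First I would observe that since $c_L$ has a justifiability representation, it satisfies IEA by Theorem \ref{thm:unknown_true_pref}. For IREA, I would show that whenever $a$ is revealed excluded in $L$ by a subset $B' \subseteq B$, or replaced in a subset $B' \subseteq B$, then no preference in $\mathcal{M}^H$ ranks $a$ above $B'$. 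For the revealed-exclusion case, Proposition \ref{prop:rev_exc} already tells us no preference in $\mathcal{M}^L$ (hence none in $\mathcal{M}^H$) ranks $a$ above $B'$. For the replacement case, $a = c_L(A) \neq c_H(A)$ means $a$ is top-ranked by some $\succ_m \in \mathcal{M}^L$ (so it was in $M_L(A)$ and maximized $\succ$ there), but $a \notin M_H(A)$, which forces $a$ to be unjustifiable relative to $A$ in the high-pressure setting: every $\succ_m \in \mathcal{M}^H$ ranks some element of $A \setminus \{a\}$ above $a$. Once all removed items are unjustifiable relative to the surviving set in the $H$-representation, the IUA argument from Section \ref{sec:main} (applied inside the representation $(\succ, \mathcal{M}^H)$) gives $M_H(B) = M_H(B \setminus A)$, and hence $c_H(B) = c_H(B \setminus A)$.

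For the sufficiency direction, suppose $(c_L, c_H)$ satisfies IREA and $c_L$ satisfies IEA. By Theorem \ref{thm:unknown_true_pref} and Corollary \ref{cor:maximal}, $c_L$ has a canonical representation $(\succ^*, \mathcal{M}^*)$ where $\mathcal{M}^*$ is the set of all preferences consistent with revealed exclusion (in $L$). The strategy is to use $\succ^*$ as the common true preference for both settings, and to construct $\mathcal{M}^H$ as the set of preferences consistent with a strengthened notion of exclusion that also incorporates replacement. Concretely, I would define $a$ to be \emph{$H$-excluded} by $B$ if $a$ is either revealed excluded in $L$ by $B$ or replaced in $B \cup \{a\}$, and let $\mathcal{M}^H$ be the set of preferences (consistent with revealed exclusion in $L$) that in addition never rank an $H$-excluded item above the excluding set. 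By construction $\mathcal{M}^H \subseteq \mathcal{M}^* = \mathcal{M}^L$, which is the nesting we want; I would take $\mathcal{M}^L := \mathcal{M}^*$.

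The two genuine obstacles are: (i) showing $(\succ^*, \mathcal{M}^H)$ actually represents $c_H$, and (ii) showing $\mathcal{M}^H$ is nonempty. For (i), the IREA axiom plays the role that IEA played in Theorem \ref{thm:unknown_true_pref}, so I expect to mimic that proof: the exclusion-based construction in Theorem \ref{thm:unknown_true_pref} shows that a set defined as ``all preferences respecting a given family of exclusion conditions'' is large enough to top-rank any chosen item, and the main work is re-running that ``largeness'' argument with the augmented exclusion conditions. I would need to verify that the replacement conditions, like the revealed-exclusion conditions, are consistent with $\succ^*$ in the sense required—specifically, that whenever $a$ is $H$-excluded by $B$, the true preference $\succ^*$ ranks some element of $B$ above $a$, so that $c_H$ can still be generated by maximizing $\succ^*$ over $M_H$. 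This is where I expect the hardest step to lie, and I anticipate IREA is exactly the hypothesis that guarantees the augmented exclusion conditions remain compatible with $\succ^*$ and do not collapse $\mathcal{M}^H$ to the empty set; the second, ``largeness'' half of the proof of Theorem \ref{thm:main}—adapted to sequential or simultaneous removal as in that theorem—is what I would invoke to complete the construction. For (ii), nonemptiness, I would argue that any justifiable preference in a representation of $c_H$ witnessing the intended behavior lies in $\mathcal{M}^H$, or more constructively, exhibit the preference that top-ranks $c_H$ of the grand context and respects all $H$-exclusion conditions, using the same acyclicity-of-exclusion bookkeeping that underlies Corollary \ref{cor:unknown_true_pref}.
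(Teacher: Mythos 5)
Your overall architecture coincides with the paper's: build the canonical true preference $\succ^*$ from $c_L$, let $\mathcal{M}^L$ be all preferences consistent with revealed exclusion in $L$, define $\mathcal{M}^H$ by augmenting the exclusion conditions with replacement-based conditions, and rerun the machinery of Theorem \ref{thm:unknown_true_pref}. Your necessity direction is fine (the paper leaves it implicit), and your observation that replacement forces every $\succ_m \in \mathcal{M}^H$ to rank something in $A \setminus \{a\}$ above $a$ is correct. The genuine gap is in sufficiency, at exactly the step you flag as hardest, and it is twofold. First, your compatibility condition is backwards. The ``largeness'' argument you want to mimic (Lemma \ref{lem:exc2tree}) builds a candidate justification by ranking items in \emph{reverse} $\succ^*$-order, and this candidate respects an exclusion condition $(Z,z)$ only because the \emph{excluded item lies above the excluding set} in the true preference, $z \succ^* Z$; reversing then puts all of $Z$ above $z$. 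What you propose to verify---that $\succ^*$ ranks some element of $B$ above the $H$-excluded item $a$---is essentially the negation of what is needed and does nothing for the construction. Second, with your unqualified definition of $H$-exclusion the correct property $z \succ^* Z$ genuinely fails. Take the paper's own two-setting example: $a_1 \succ^* a_2 \succ^* b_1$, $c_L(\{a_1,a_2,b_1\}) = a_2$, $c_H(\{a_1,a_2,b_1\}) = b_1$. Then $a_2$ is replaced in $\{a_1,a_2,b_1\}$, so your relation registers the condition $(\{a_1,b_1\},\, a_2)$, yet $a_1 \succ^* a_2$. So the proof of Theorem \ref{thm:unknown_true_pref} cannot be mimicked with your relation as defined: the reverse-order candidate violates this condition, and the key lemma breaks.

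The missing idea---which is the real content of the paper's proof---is a qualifier on the replacement conditions: register $(Z,z)$ only when $z$ is replaced in $Z \cup \{z\}$ \emph{and} no item of $Z$ is revealed excluded in $L$ by any subset of $Z \cup \{z\}$. When the qualifier fails, first prune the $L$-revealed-excluded items: IEA keeps $c_L$ unchanged and IREA keeps $c_H$ unchanged, so $z$ is still replaced in the pruned set $B^*$, and one registers the stronger condition $(B^* \setminus \{z\},\, z)$ instead (stronger because excluding relative to a smaller set implies excluding relative to the larger one, so no constraint is lost). In the example this trades $(\{a_1,b_1\}, a_2)$ for $(\{b_1\}, a_2)$, and indeed $a_2 \succ^* b_1$. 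For qualified conditions one then proves $z \succ^* Z$ by a WARP argument: since no subset of $Z \cup \{z\}$ exhibits a cycle or almost-WARP violation in $L$, choice there satisfies WARP, so $z = c_L(Z\cup\{z\})$ pairwise-beats every element of $Z$ and no chain overturns those rankings. With that lemma in hand, the rest of your plan goes through, and your nonemptiness worry (ii) evaporates: since every registered condition satisfies $z \succ^* Z$, the exact reverse of $\succ^*$ is automatically consistent with all of them, so $\mathcal{M}^H \neq \emptyset$.
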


Proposition \ref{prop:double} also helps to interpret the known-true-preference case in Section \ref{sec:main}. It may seem mysterious for the analyst to observe a component of the representation. Corollary \ref{cor:WARP_for_cL} explains what is really going on. Rather than directly observing the true preference, the analyst observes choice behavior in a low-pressure situation. Provided this behavior satisfies WARP, she identifies the WARP-implied preference with the true preference. Corollary \ref{cor:WARP_for_cL} says there is little harm in this: if there is any justifiability representation, there is one in which the true preference is the WARP-implied preference. 

\begin{corollary}
\label{cor:WARP_for_cL}
Suppose that $c_L$ satisfies WARP, so the restriction of $c_L$ to binary menus pins down a unique preference $\succ$. $c_H$ has a justifiability representation if and only if it satisfies IUA conditional on $\succ$. 
\end{corollary}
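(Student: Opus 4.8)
The plan is to establish the equivalence at the level of representations, leaning on Theorem~\ref{thm:main} and the special structure that $c_L$ is maximized by the single order $\succ$. I read the statement in the two-setting sense of Proposition~\ref{prop:double}: ``$c_H$ has a justifiability representation'' means that $c_L$ and $c_H$ admit representations $(\succ', \mathcal{M}^L)$ and $(\succ', \mathcal{M}^H)$ with a common true preference and $\mathcal{M}^H \subseteq \mathcal{M}^L$. This is what makes the corollary substantive, and it matches the informal claim that any representation can be taken to use the WARP-implied preference as its true preference. Throughout I would use that $c_H$ is a choice function, so Optimization is vacuous and IUA alone carries the content of Theorem~\ref{thm:main}.

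For the ``if'' direction, suppose $(\succ, c_H)$ satisfies IUA. Theorem~\ref{thm:main} then supplies a representation $(\succ, \mathcal{M}^H)$ of $c_H$. The key observation is that, because $c_L$ is maximized by $\succ$, \emph{every} set of total orders containing $\succ$ represents $c_L$ with true preference $\succ$: the $\succ$-best element of any menu is top-ranked by the justification $\succ$ itself, and $\succ$ breaks all ties, so it is always selected regardless of which other justifications are present. Hence I would take $\mathcal{M}^L := \mathcal{M}^H \cup \{\succ\}$, which represents $c_L$ and satisfies $\mathcal{M}^H \subseteq \mathcal{M}^L$, delivering the required nested pair.

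For the ``only if'' direction, suppose such a nested pair $(\succ', \mathcal{M}^L)$, $(\succ', \mathcal{M}^H)$ exists. By necessity of IUA in Theorem~\ref{thm:main}, it suffices to show that $(\succ, \mathcal{M}^H)$ \emph{also} represents $c_H$; that is, that the true preference $\succ'$ may be swapped for the WARP-implied $\succ$. The engine is a pairwise lemma: if $q \succ p$ and some $m \in \mathcal{M}^L$ has $p \succ_m q$, then $q \succ' p$. Indeed $c_L(\{p,q\}) = q$ by WARP; the existence of such $m$ forces $p \in M^L(\{p,q\})$, and since the representation selects $q$ from this menu by breaking the resulting tie with $\succ'$, we get $q \succ' p$. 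Now I would fix a menu $A$ and any distinct $p, q \in M^H(A)$. Each is top-ranked in $A$ by some member of $\mathcal{M}^H \subseteq \mathcal{M}^L$, so the lemma applies in whichever direction $\succ$ ranks the pair and forces $\succ'$ to agree with $\succ$ on $\{p,q\}$. Thus $\succ$ and $\succ'$ induce the same maximizer on $M^H(A)$, giving $c_H(A) = \argmax(M^H(A), \succ)$ for every $A$, as desired.

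The main obstacle is precisely this preference-swapping step, and the delicate point is the role of the nesting $\mathcal{M}^H \subseteq \mathcal{M}^L$: it guarantees that any pair of alternatives that is ``live'' for $c_H$ (both top-ranked in some menu by some justification) is also live for $c_L$, so that $c_L$'s WARP behavior has already pinned down $\succ'$ on exactly those pairs. Without nesting, $\succ'$ could disagree with $\succ$ on pairs that $c_L$ never exposes but $c_H$ does, and the swap would fail. I note that one could instead derive the corollary straight from Proposition~\ref{prop:double}: WARP makes revealed exclusion in $L$ vacuous (no cycles or almost-WARP sets), so $c_L$ satisfies IEA and IREA collapses to a condition on replacements in $c_H$; the obstacle there is matching that multi-removal condition to single-removal IUA, which the representation-level argument above sidesteps.
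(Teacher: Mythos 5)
Your proof is correct, and it takes a genuinely different route from the paper's. You also resolve the interpretation issue the right way: the paper's own proof invokes Proposition~\ref{prop:double}, so ``$c_H$ has a justifiability representation'' must indeed be read in the paired sense (common true preference, nested justification sets); under the standalone reading the ``only if'' direction would be false, since $c_H$ could have a representation whose true preference is unrelated to the $\succ$ revealed by $c_L$. The paper's argument is: by necessity in Proposition~\ref{prop:double}, the pair satisfies IREA and $c_L$ satisfies IEA; then it re-runs the sufficiency construction of Proposition~\ref{prop:double}, observing that since $c_L$ has no cycles, the true preference built by Lemma~\ref{lem:construct_true_pref} (cycles first, then pairwise choice) is exactly the WARP-implied $\succ$, so $(\succ, c_H)$ has a representation and Theorem~\ref{thm:main} delivers IUA. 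You instead work purely at the representation level: for the ``if'' direction you note that adjoining $\succ$ itself to $\mathcal{M}^H$ yields a representation of $c_L$ (so nesting is essentially free), and for the ``only if'' direction your swapping lemma shows that nesting forces the true preference $\succ'$ to agree with $\succ$ on every pair that is ``live'' in some $M^H(A)$ --- both elements being top-ranked by members of $\mathcal{M}^H \subseteq \mathcal{M}^L$, so $c_L$'s pairwise behavior has already pinned down $\succ'$ there --- whence $(\succ, \mathcal{M}^H)$ represents $c_H$ and Theorem~\ref{thm:main} applies. What each buys: the paper's route is shorter given the machinery already built (trees, IREA, the canonical construction) and shows that the canonical representation's true preference is the WARP one; yours is self-contained modulo Theorem~\ref{thm:main}, makes transparent exactly where the nesting hypothesis does work, handles explicitly the construction of $\mathcal{M}^L$ in the ``if'' direction (which the paper leaves implicit), and proves something slightly stronger --- in \emph{any} nested pair of representations, the same justification set $\mathcal{M}^H$ works with $\succ$ as true preference, not merely some representation constructed from scratch.
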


\section{Existing Models}
\label{sec:theory}
Several existing models are formally related to the justifiability model, but bear very different interpretations. \citet{aizerman1981general} (summarized in \citet{moulin1985choice}) present and characterize the following model:
\[c(A) = \bigcup_{\succ_m \in \mathcal{M}}\argmax(A, \succ_m).\]
In the justifiability model, the expression on the right-hand-side is simply the justifiable set $M(A)$. $M(A)$ coincides with $c(A)$ when the true preference is indifferent between everything in the domain. Despite this tight connection, the behavioral characterization of this model is very different from that of the justifiability model. Intuitively, a DM who maximizes a non-trivial preference over the justifiable set will only reveal a small subset of the justifiable set to the analyst. This makes it harder to tell what the set of justifiable preferences looks like, particularly when the true preference is unknown.

\citet{kalai2002rationalizing} study a model in which $c(A)$ is a selection from the justifiable set. Any selection will do---choice is not required to maximize any preference over the justifiable set. In sharp contrast to the justifiability model, this model has no empirical content. The only way to ``reject'' the model is to show that explaining choice requires an implausibly large number of rationales. This makes sense when maximization of a single, stable preference is taken as the benchmark for rationality. When fairly reasonable people make choices that can only be explained with an absurdly large number of rationales, we should probably look for a different model. The justifiability model requires a completely different interpretation. The DM is required to \textit{have} a single, stable preference. He sometimes fails to maximize it because he believes that certain aspects of his preference are unjustifiable. The larger the set of justifiable preferences, the less frequently this will happen. Thus, a DM whose behavior is consistent with a large set of justifiable preferences should be seen as ``relatively unconstrained,'' not ``relatively irrational.'' A large set of justifiable preferences is certainly not a reason to reject the model.

\citet{gul2005revealed}, building on \citet{strotz1955myopia}, study a model of changing tastes. Here, the DM is effectively limited to plans that he will be willing to carry out in future. Ties can be broken in accordance with the current preference over consumption streams. Formally, the two-stage version of their model is a justifiability model with a single justification (the future preference). \citeauthor{gul2005revealed} work with choice over menus rather than choice from menus. This matters because choice over singleton menus can be identified with the tiebreaker, so the tiebreaker is effectively observable.

\citet{manzini2007sequentially} also model a two-stage decision process, which they call sequential rationalizability. In their model, one ``preference'' restricts the set of alternatives, and then another breaks the ties. The word ``preference'' is potentially misleading here because \citeauthor{manzini2007sequentially} do not require the components of the representation to be complete or transitive. They are simply binary relations. The justifiability model with a single justification can be seen as a strengthening of sequential rationalizability. The properties of the model change significantly when more justifications are added, though. \citeauthor{manzini2007sequentially} show that their model requires an item that pairwise-beats all other items in the menu to be chosen from that menu. This is not the case in the justifiability model because the DM might use \textit{different} justifications to select his favorite item from each binary menu. There is no guarantee that any justification will allow him to select that item from the full menu. 

\citet{cherepanov2013rationalization} (CFS) is the paper most similar to this one, in both form and interpretation. In their model, the DM maximizes his true preference over the subset of items that he can rationalize. He can rationalize choosing $a$ over $B$ if, for each $b$ in $B$, he has a rationale for choosing $a$ over $b$. The key difference is that rationales are not required to have any particular structure. They are not required to be complete or transitive, and their properties (e.g. monotonicity, independence) are not explored. Although this may seem a technical point, it makes application and interpretation difficult. Recall from Section \ref{sec:unobserved} that two patterns of choice behavior---cycles and almost-WARP sets---pin down the DM's true preference and the constraints he faces. The rationalizability model says less than the justifiability model about both patterns of choice. As discussed in Example \ref{ex:disabled}, it cannot pin down the true preference (and, by extension, the constraint) when pairwise choice is cyclic. It cannot pin down the true preference on almost-WARP sets either; it can only say that the item selected from the full set is worse than any item that defeats it.

CFS address these problems by assuming that the analyst knows some of the DM's rationales. These rationales cannot be inferred from choice behavior; they represent the analyst's ``outside knowledge'' about the DM. By judicious choice of rationales, CFS can explain behavior in Example \ref{ex:disabled}. When choice is acyclic, CFS show that maximizing the set of rationales pins down a unique representation. This result bears some similarities to Corollary \ref{cor:maximal}. However, Corollary \ref{cor:maximal} does not require acyclic choice. Cycles are not a problem for the justifiability model. They are, in fact, quite informative: they are used to show that all justifiable preferences agree on the rankings of particular pairs of items. Ruling out cycles would significantly reduce the scope of the model. Furthermore, Corollary \ref{cor:maximal} does not require the introduction of outside knowledge about the DM. It is useful even if one does not believe in, or wish to introduce, non-choice data. 

Finally, the justifications in this model are easier to understand and interpret than the rationales in CFS. If a set of rationales is inconsistent with any set of transitive, monotone and complete preferences on the domain, can it really rationalize anything? To illustrate, consider the anecdote discussed at the end of CFS. A man goes into a store to buy an encyclopedia. He is presented with two desirable encyclopedias, but leaves without buying either. He feels certain that he would have purchased either encyclopedia if it had been presented in isolation. The justifiability model cannot accommodate this behavior. Formally, the domain is an almost-WARP set. The item selected from this set (don't buy) is pairwise-beaten by both of the other alternatives. IEA says this situation cannot happen. It is consistent with CFS, though---the DM can have a rationale for buying encyclopedia 1 over not buying, and a rationale for buying encyclopedia 2 over not buying, but not a rationale for buying encyclopedia 1 over encyclopedia 2. But there is something strange about this explanation. It requires the DM to have (and use) a rationale for not buying over buying either encyclopedia. Since we are told that both encyclopedias are high quality, and that the DM originally intended to buy one, it is hard to see how this could be a genuine ``rationale,'' or why the DM would appeal to it. Perhaps there is something missing from the exposition of the problem, e.g. fear of making a poor choice and regretting it. Since the rationalizability model does not shed much light on situations like these, not much is lost by excluding them. Any loss is offset by improved tractability, interpretation and identification. 

\section{Further Work}
The model presented in this paper has a lexicographic form. The DM's first priority is to avoid doing something unjustifiable, and his second priority is to do what he wants. Like all models of this form, the model is vulnerable to the criticism that it is too extreme. Perhaps the DM should weigh the cost of pretending to be a better person against the cost of deviating from his preferred choice.

To illustrate, consider a stylized version of the results from \citet{berger1997effect}. The DM chooses whether to make a donation to a university. If he is offered the choice between a small donation $s$ and no donation $n$, he chooses the small donation. If a large donation $\ell$ is offered to the choice set, he chooses no donation. The justifiability model can generate these choices if the DM has $s \succ n \succ \ell$ but thinks that the only justifiable preferences are $\ell \succ_1 s \succ_1 n$ and $n \succ_2 s \succ_2 \ell$. The DM might think that $\succ_2$ is justifiable, but $\succ$ is not, because someone with ``good'' reasons for preferring $n$ to $\ell$ (e.g. concerns about how the money will be spent, or a commitment to more important causes) would also prefer $n$ to $s$. 

This explanation works, but it is probably not the most natural one. The alternative explanation runs as follows: the DM doesn't want to donate at all, but he likes looking like a generous person. But if pooling with generous people gets too expensive, he decides not to donate and accepts looking like a cheapskate. The justifiability model cannot accommodate this argument. Either the DM thinks that preferring $n$ to $s$ is unjustifiable (in which case he always makes a donation) or he doesn't (in which case he always declines). It would be interesting to extend the justifiability model to account for tradeoffs like these. It is unlikely that a result with the simplicity and generality of Theorem \ref{thm:main} is available for this more complicated model, though. \citet{cherepanov2013revealed} study a model like this, but they assume a constant benefit from acting like a good person, and that the actions a good person would choose are known to the analyst. They relax the latter assumption but do not provide a full characterization. \citet{dillenberger2012ashamed} is another paper in this direction. It uses an EU setting and a single social norm with a special structure. 

We conclude with a note on welfare implications. Although we have referred to the tiebreaker in the justifiability model as the ``true preference,'' we do not take it to be a measure of the DM's welfare. One can always draw a distinction between choice behavior and welfare---even when the DM's choices maximize a standard preference relation---but the point seems particularly important when moral principles are involved. It is entirely plausible that the DM is better off when he chooses in accordance with his principles rather than his baser instincts. Even if he lacks lofty moral sentiments, he may get a great deal of utility from having a virtuous social image. A well-intentioned policymaker who forces the DM to accept the $\succ$-best item robs him of these benefits, potentially making him worse off. $\succ$ should be seen as an interesting feature of the DM's psychology, not the preference of a benevolent agent acting on the DM's behalf. 

\bibliography{just}

@article{exley2016excusing,
  title={Excusing selfishness in charitable giving: The role of risk},
  author={Exley, Christine L},
  journal={The Review of Economic Studies},
  volume={83},
  number={2},
  pages={587--628},
  year={2016},
  publisher={Oxford University Press}
}

@article{hamman2010self,
  title={Self-interest through delegation: An additional rationale for the principal-agent relationship},
  author={Hamman, John R and Loewenstein, George and Weber, Roberto A},
  journal={American Economic Review},
  volume={100},
  number={4},
  pages={1826--46},
  year={2010}
}

@article{snyder1979avoidance,
  title={Avoidance of the handicapped: an attributional ambiguity analysis.},
  author={Snyder, Melvin L and Kleck, Robert E and Strenta, Angelo and Mentzer, Steven J},
  journal={Journal of personality and social psychology},
  volume={37},
  number={12},
  pages={2297},
  year={1979},
  publisher={American Psychological Association}
}

@article{cherepanov2013rationalization,
  title={Rationalization},
  author={Cherepanov, Vadim and Feddersen, Timothy and Sandroni, Alvaro},
  journal={Theoretical Economics},
  volume={8},
  number={3},
  pages={775--800},
  year={2013},
  publisher={Wiley Online Library}
}

@article{gul2005revealed,
  title={The revealed preference theory of changing tastes},
  author={Gul, Faruk and Pesendorfer, Wolfgang},
  journal={The Review of Economic Studies},
  volume={72},
  number={2},
  pages={429--448},
  year={2005},
  publisher={Wiley-Blackwell}
}

@article{manzini2007sequentially,
  title={Sequentially rationalizable choice},
  author={Manzini, Paola and Mariotti, Marco},
  journal={American Economic Review},
  volume={97},
  number={5},
  pages={1824--1839},
  year={2007}
}

@article{berger1997effect,
  title={The effect of direct mail framing strategies and segmentation variables on university fundraising performance},
  author={Berger, Paul D and Smith, Gerald E},
  journal={Journal of Direct Marketing},
  volume={11},
  number={1},
  pages={30--43},
  year={1997},
  publisher={Wiley Online Library}
}

@article{moulin1985choice,
  title={Choice functions over a finite set: a summary},
  author={Moulin, Herv{\'e}},
  journal={Social Choice and Welfare},
  volume={2},
  number={2},
  pages={147--160},
  year={1985},
  publisher={Springer}
}

@article{aizerman1981general,
  title={General theory of best variants choice: Some aspects},
  author={Aizerman, Mark and Malishevski, Andrew},
  journal={IEEE Transactions on Automatic Control},
  volume={26},
  number={5},
  pages={1030--1040},
  year={1981},
  publisher={IEEE}
}

@article{kalai2002rationalizing,
  title={Rationalizing choice functions by multiple rationales},
  author={Kalai, Gil and Rubinstein, Ariel and Spiegler, Ran},
  journal={Econometrica},
  volume={70},
  number={6},
  pages={2481--2488},
  year={2002},
  publisher={JSTOR}
}

@article{strotz1955myopia,
  title={Myopia and inconsistency in dynamic utility maximization},
  author={Strotz, Robert Henry},
  journal={The review of economic studies},
  volume={23},
  number={3},
  pages={165--180},
  year={1955},
  publisher={JSTOR}
}

@article{dillenberger2012ashamed,
  title={Ashamed to be selfish},
  author={Dillenberger, David and Sadowski, Philipp},
  journal={Theoretical Economics},
  volume={7},
  number={1},
  pages={99--124},
  year={2012},
  publisher={Wiley Online Library}
}

@article{cherepanov2013revealed,
  title={Revealed preferences and aspirations in warm glow theory},
  author={Cherepanov, Vadim and Feddersen, Tim and Sandroni, Alvaro},
  journal={Economic Theory},
  volume={54},
  number={3},
  pages={501--535},
  year={2013},
  publisher={Springer}
}

@article{haisley2010self,
  title={Self-serving interpretations of ambiguity in other-regarding behavior},
  author={Haisley, Emily C and Weber, Roberto A},
  journal={Games and economic behavior},
  volume={68},
  number={2},
  pages={614--625},
  year={2010},
  publisher={Elsevier}
}

@article{norton2004casuistry,
  title={Casuistry and social category bias.},
  author={Norton, Michael I and Vandello, Joseph A and Darley, John M},
  journal={Journal of personality and social psychology},
  volume={87},
  number={6},
  pages={817},
  year={2004},
  publisher={American Psychological Association}
}

@article{rodriguez2012self,
  title={Self-interest and fairness: self-serving choices of justice principles},
  author={Rodriguez-Lara, Ismael and Moreno-Garrido, Luis},
  journal={Experimental Economics},
  volume={15},
  number={1},
  pages={158--175},
  year={2012},
  publisher={Springer}
}

@incollection{gneezy2016motivated,
  title={Motivated self-deception, identity and unethical behavior},
  author={Gneezy, Uri and Saccardo, Silvia and Serra-Garcia, Marta and van Veldhuizen, Roel},
  booktitle={Working paper},
  year={2016}
}

@article{holt1986preference,
  title={Preference reversals and the independence axiom},
  author={Holt, Charles A},
  journal={The American Economic Review},
  volume={76},
  number={3},
  pages={508--515},
  year={1986},
  publisher={JSTOR}
}

@article{karni1987preference,
  title={" Preference reversal" and the observability of preferences by experimental methods},
  author={Karni, Edi and Safra, Zvi},
  journal={Econometrica: Journal of the Econometric Society},
  pages={675--685},
  year={1987},
  publisher={JSTOR}
}

\appendix

\section{Proofs of Results in Text}
\subsection{Proof of Theorem \ref{thm:main}}

First, we show necessity. Necessity of Optimization follows because the items that maximize a preference over a set must all be indifferent. For IUA, fix $A, a$ such that $a \in A$. Suppose $a \succsim c(A)$ and $a \notin c(A)$. Then, there must not be any $\succ_m \in \mathcal{M}$ that has $a \succ_m A \setminus \{a\}$. Said another way, for any $\succ_m \in \mathcal{M}$, there exists $x \in A \setminus \{a\}$ such that $x \succ_m a$. Take any $B \supset A$. Suppose $x \notin M(B \setminus \{a\})$ but $x \in M(B)$. When $a$ is added to $B \setminus \{a\}$, no unjustified item in $B \setminus \{a\}$ can become justified. (If it wasn't top-ranked in the smaller menu, it can't be top-ranked in the larger menu.) The only remaining possibility is $x = a$. Since $x$ is not top-ranked when $A$ is present, this possibility is ruled out too. Now suppose $x \in M(B)$ but $x \notin M(B \setminus \{a\})$. Then there must be some $\succ_m$ such that $a \succ_m x \succ_m B \setminus \{a, x\}$. But since $a$ is always ranked below something in $A$, there is no such $\succ_m$.

Now we show sufficiency. Optimization says that everything in $c(X)$ must be indifferent. Thus, there are three types of items we need to consider: (1) items at least as good as everything in $c(X)$ but not in it, (2) items in $c(X)$, and (3) items worse than everything in $c(X)$. We need to construct $\mathcal{M}$ so that items in group (1) are never in $M(X)$ (first step), and items in group (2) are always in $M(X)$ (second step). It doesn't matter whether items in group (3) are in $M(X)$ or not. 

Let $\Pi$ denote the set of total orders over $\mathcal{A}$. Let 
\[\mathcal{M} := \{\succ_m \in \Pi: \forall X, y \text{ s.t. } X \text{ excludes } y, \exists x \in X \text{ s.t. } x \succ_m y\}.\]
We have chosen $\mathcal{M}$ so that
\[M(X) := \{x \in X: \exists \succ_m \in \mathcal{M} \text{ s.t. } x \succ_m X \setminus \{x\}\}\]
cannot include any item $y$ such that $\{y\} \succsim c(X)$ but $y \notin c(X)$. (For any such $y$, $X \setminus \{y\} \text{ excludes } y$, so $y$ cannot be ranked above $X \setminus \{y\}$ by any $\succ_m \in \mathcal{M}$.) 

We still need to show that $M(X)$ includes each item in $c(X)$. It is helpful to express the definition of $\mathcal{M}$ in a different way.

\begin{definition}[Exclusion from below]
$A$ excludes $b$ from below (written $A \triangleright b$) if $b \notin c(A \cup \{b\})$ and $b \succsim A$. 
\end{definition}

\begin{definition}[Menu-item relation]
A menu-item relation is a subset of $\mathcal{F}(\mathcal{A}) \cup \{\emptyset\} \times \mathcal{A}$.
\end{definition}

Clearly, $\triangleright$ is a menu-item relation, but it is not the only one that will appear in this proof. It is helpful to extend the definition of transitivity to menu-item relations. 

\begin{definition}[Transitivity]
A menu-item relation $R$ is transitive if 
\[\left(X \; R \; x, Y \; R \; y \text{ and } x \in Y\right) \Longrightarrow (X \cup Y) \setminus \{x, y\} \; R \; y.\]
\end{definition}

$\triangleright$ is transitive. To see why, suppose $X \triangleright x$ and $Y \triangleright y$ and $x \in Y$. By definition of $\triangleright$, $x \succsim x'$ for all $x' \in X$, and $y \succsim y'$ for all $y' \in Y$. Thus, $y \succsim z$ for all $z \in X \cup Y \setminus \{x, y\}$. It remains to show that $y \notin c(X \cup Y \setminus \{x\})$. Suppose $y \in c(X \cup Y \setminus \{x\})$. By IUA and $X \triangleright x$, $y \in c(X \cup Y)$. By IUA and $Y \triangleright y$, $y \in c(X \cup Y \setminus \{y\})$, a contradiction. 

Lemma \ref{lem:exc_from_below} states another useful property of $\triangleright$. The full strength of Lemma \ref{lem:exc_from_below} is actually not needed here, but will be useful later. 

\begin{lemma}
\label{lem:exc_from_below}
Fix $A \in \mathcal{F}(\mathcal{A})$ and $a \in A$ such that $A \setminus \{a\}$ excludes $a$. Let
\[\underline{A} := c(A) \cup \{a \in A: c(A) \succ a\}.\]
Then, $\underline{A} \triangleright a$. 
\end{lemma}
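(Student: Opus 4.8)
The plan is to verify the two requirements of exclusion from below separately: that $a \succsim \underline{A}$, and that $a \notin c(\underline{A} \cup \{a\})$. The first is immediate. By hypothesis $A \setminus \{a\}$ excludes $a$, which unpacks to $a \succsim c(A)$ and $a \notin c(A)$. Every element of $\underline{A}$ is either in $c(A)$ or strictly below $c(A)$ in $\succsim$; since $a \succsim c(A)$, transitivity of $\succsim$ gives $a \succsim x$ for each such $x$, so $a \succsim \underline{A}$. Note also that $a \notin \underline{A}$, because $a \notin c(A)$ while $a \succsim c(A)$ rules out $c(A) \succ a$ (here Optimization guarantees that the elements of $c(A)$ are mutually indifferent, so these comparisons are unambiguous).

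The substance is showing $a \notin c(\underline{A} \cup \{a\})$, and I would obtain this from the stronger claim $c(\underline{A} \cup \{a\}) = c(A)$, which suffices because $a \notin c(A)$. Write $U := A \setminus (\underline{A} \cup \{a\})$, so that $\underline{A} \cup \{a\} = A \setminus U$. By construction $U$ consists exactly of the items $y \in A$ with $y \succsim c(A)$ and $y \notin c(A)$, other than $a$ itself---the unchosen items that are weakly better than the choice. The idea is to delete the members of $U$ from $A$ one at a time, showing at each step that the choice is unaffected, so that after deleting all of $U$ the choice is still $c(A)$.

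Concretely, I would enumerate $U = \{y_1, \dots, y_k\}$, set $A_0 = A$ and $A_i = A_{i-1} \setminus \{y_i\}$, and prove by induction that $c(A_i) = c(A)$. The base case $i = 0$ is trivial. For the inductive step, the invariant $c(A_{i-1}) = c(A)$ together with $y_i \succsim c(A)$ and $y_i \notin c(A)$ shows that $y_i$ is weakly better than, and unchosen from, the current menu $A_{i-1}$; IUA then removes $y_i$ without changing the choice, giving $c(A_i) = c(A_{i-1}) = c(A)$. Since $A_k = A \setminus U = \underline{A} \cup \{a\}$, this yields $c(\underline{A} \cup \{a\}) = c(A)$ and hence $a \notin c(\underline{A} \cup \{a\})$, completing the proof that $\underline{A} \triangleright a$.

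The step I expect to require the most care is the inductive use of IUA. Two points need attention. First, IUA must be applied to remove $y_i$ from the current menu $A_{i-1}$ itself, so I read the hypothesis ``$B \supset A$'' as permitting $B = A_{i-1}$. Second, and more importantly, one must check that deleting one unchosen excluded item does not rescue any of the others: this is exactly what the invariant $c(A_i) = c(A)$ guarantees, since the exclusion condition ``$y_j \succsim c(A_j)$ and $y_j \notin c(A_j)$'' is stated relative to the current choice, and that choice never moves. This is the same invariance property that underlies the transitivity of $\triangleright$, so no genuinely new mechanism is needed beyond IUA and Optimization.
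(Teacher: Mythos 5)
Your proof is correct, and it takes a cleaner route than the paper's. Both arguments rest on the same engine---repeatedly applying IUA to delete an unchosen, weakly-better item while the choice set's selection stays fixed (and both read the axiom's ``$B \supset A$'' as permitting $B = A$, which the paper itself does in its own proof)---but the decompositions differ. The paper enumerates all of $\bar{A} = A \setminus \underline{A}$ from best to worst, shows that each $\bar{a} \in \bar{A}$ is excluded from below by the set of items succeeding it in the enumeration, and then invokes transitivity of $\triangleright$ (established just before the lemma) to collapse each of those excluding sets down to $\underline{A}$, obtaining $\underline{A} \triangleright \bar{a}$ for \emph{every} element of $\bar{A}$. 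You instead keep $a$ in the menu throughout, delete only $U = \bar{A} \setminus \{a\}$ in an arbitrary order, and maintain the invariant $c(A_i) = c(A)$, which yields the single identity $c(\underline{A} \cup \{a\}) = c(A)$ and hence the conclusion directly. What your version buys is economy: no ordering of $\bar{A}$ is needed, and transitivity of $\triangleright$ is never used, so the lemma becomes self-contained given IUA and Optimization. What the paper's version buys is the stronger byproduct that every item of $\bar{A}$ (not just $a$) is excluded from below by $\underline{A}$---though, as you could note, your argument recovers that too by running it with any $\bar{a} \in \bar{A}$ in the role of $a$.
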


\begin{proof}
First, $A \setminus \{a\}$ excludes $a$ implies $a \succsim c(A)$ and $a \notin c(A)$. By definition of $\underline{A}$, we have $a \succsim \underline{A}$ and $a \notin \underline{A}$. We want to show that $a \notin c(\underline{A} \cup \{a\})$. 

Let $\bar{A}$ be the set of items that are weakly better than $c(A)$, but not in $c(A)$. Clearly, $\bar{A} = A \setminus \underline{A}$. Enumerate the items in $A$ from best (1) to worst ($|A|$). Ties can be broken arbitrarily, with one exception: everything in $\bar{A}$ must come before everything in $\underline{A}$. 

We show that each $\bar{a} \in \bar{A}$ is excluded from below by the items that succeed it. Suppose $a_1 \in \bar{A}$. Clearly, $A \setminus \{a_1\} \triangleright a_1^1$. Now suppose $a_2 \in \bar{A}$. By IUA, we can remove $a_1$ from $A$ without changing choice, so $a_2$ is still not chosen. We have $A \setminus \{a_1, a_2\} \triangleright a_2$. Iterating this argument, we get the desired result for every item in $\bar{A}$.

Now suppose that $a_i$ is the last item in $\bar{A}$. We have just shown that $\underline{A} \triangleright a_i$. Now consider $a_{i-1}$. We have also shown that $\{a_i\} \cup \underline{A} \triangleright a_{i-1}$. Transitivity of $\triangleright$ implies $\underline{A} \triangleright a_{i-1}$. We can apply the same argument to $a_{i-2}$, and so on. Once we get to $\underline{A} \triangleright a_1$, we are done. 
\end{proof}

We are finally ready to redefine $\mathcal{M}$:
\begin{equation}
\mathcal{M} = \{\succ_m \in \Pi: \nexists A, b \text{ s.t. } b \succ_m A \text{ and } A \triangleright b\}.
\label{eq:redefine_M}
\end{equation}
Suppose $A$ excludes $b$. By Lemma \ref{lem:exc_from_below}, $\underline{A} \triangleright b$. If we exclude any preference from $\mathcal{M}$ that has $b$ ranked above $\underline{A}$, then we automatically exclude any preference that has $b$ ranked above $A$. The set of total orders consistent with exclusion from below is no larger than the set of total orders consistent with exclusion.

Now we show that, for any $b \in \mathcal{A}$ and any $A \in \mathcal{F}(\mathcal{A})$ such that $b \notin A$:
\begin{equation}
\label{eq:final}
b \in c(A \cup \{b\}) \Longrightarrow \left(\exists \succ_m \in \mathcal{M} \text{ s.t. } b \succ_m A \right).
\end{equation}
We will construct an appropriate $\succ_m$. Let 
\begin{align}
\label{eq:LH}
L &:= A \cup \{x \in \mathcal{A}: \exists A' \subseteq A \text{ s.t. } A' \triangleright x\} \\
H &:= \mathcal{A} \setminus L.
\end{align}

For any $l \in L$ and $h \in H$, we will impose $h \succ_m l$. We will have $b \succ_m A$ provided $b \in H$. Suppose $b \in L$. Since $b \notin A$, there must exist $A' \subset A$ such that $A' \triangleright b$. By IUA, $b \notin c(A \cup \{b\})$, a contradiction.

Suppose we have $L' \subset L, h \in H$ such that $L' \triangleright h$. We can write $L' = A' \cup B'$, where $A' \subseteq A$ and each element of $B'$ is excluded from below by a subset of $A$. By transitivity of $\triangleright$, $h$ is excluded from below by a subset of $A$. This contradicts $h \in H$. Thus, there is no $L' \subset L$, $h \in H$ such that $L' \triangleright h$.

We need to find a way of ordering the items \textit{within} $H$ and $L$. Focus on $H$, since the argument for $L$ is the same. We want to find a preference $\succ_H$ over $H$ that is consistent with $\triangleright$. For any $X \in \mathcal{F}(H)$ and $y \in H$, we must have
\begin{equation}
\label{eq:H_order}
X \triangleright y \Longrightarrow \left(\exists x \in X \text{ s.t. } x \succ_H y \right).
\end{equation}

This step requires some groundwork. First, we introduce two new properties of menu-item relations like $\triangleright$. Clearly, $\triangleright$ will satisfy them. 

\begin{definition}[Properness]
A menu-item relation $R$ is proper if $X \; R \; x \Longrightarrow X \neq \emptyset$.
\end{definition}

\begin{definition}[Irreflexivity]
A menu-item relation $R$ is irreflexive if $X \; R \; x \Longrightarrow x \notin X$. 
\end{definition}

\begin{lemma}
\label{lem:closure}
Fix an irreflexive menu-item relation $R$. Let $R^0 := R$. For $i > 0$, let $R^i$ be the extension of $R^{i-1}$ obtained by imposing 
\[\left(\bigcup_{j=1}^k X_j \cup \{y_{k+1}, \ldots, y_n\} \right) \setminus \{y\} \; R^i \; y\]
whenever
\[\{y_1, \ldots, y_n\} \; R^0 \; y \text{ and, for all } j \leq k \leq n, \; X_j \; R^{i-1} \; y_j.\]
Then, the transitive closure of $R$ is $\bigcup_{i=0}^\infty R^i$. 
\end{lemma}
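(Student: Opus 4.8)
The plan is to prove the two set inclusions between the transitive closure $\overline{R}$ (the smallest transitive menu-item relation containing $R$) and $R^\infty := \bigcup_{i=0}^\infty R^i$. Since each $R^i$ extends $R^{i-1}$, the union is well defined and contains $R^0 = R$; the content is therefore to establish $\overline{R} \subseteq R^\infty$ and $R^\infty \subseteq \overline{R}$ separately.

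For $\overline{R} \subseteq R^\infty$, I would show that $R^\infty$ is itself transitive; minimality of $\overline{R}$ then forces the inclusion. So suppose $X \; R^\infty \; x$ and $Y \; R^\infty \; y$ with $x \in Y$, and aim for $(X \cup Y) \setminus \{x, y\} \; R^\infty \; y$. Each hypothesis is, by the definition of the $R^i$, a finite ``substitution derivation'' whose root is an $R^0$-instance and whose branches recursively substitute lower-level instances into antecedent elements. I would graft the derivation witnessing $X \; R^\infty \; x$ onto the occurrence of $x$ in the derivation witnessing $Y \; R^\infty \; y$, i.e.\ treat $x$ as one more substituted coordinate. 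Using monotonicity $R^{i-1} \subseteq R^i$ to lift both derivations to a common level $N$, the grafted object is a legitimate instance produced by the $R^{N+1}$-construction, whose antecedent is exactly $(X \cup Y)\setminus\{x,y\}$.

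For $R^\infty \subseteq \overline{R}$, I would argue by induction on $i$ that $R^i \subseteq \overline{R}$. The base case $R^0 = R \subseteq \overline{R}$ is immediate. For the step, a fresh $R^i$-instance arises from a base instance $\{y_1,\ldots,y_n\} \; R^0 \; y \in \overline{R}$ together with substitutions $X_j \; R^{i-1} \; y_j$ that lie in $\overline{R}$ by the inductive hypothesis. Because $\overline{R}$ is transitive, I can realize the simultaneous substitution as a sequence of single substitutions, each one an application of the transitivity rule, and hence remain inside $\overline{R}$. The point requiring care is the order of these single steps and the bookkeeping of which elements the rule deletes: every single step removes its pivot, so I must sequence the substitutions (and invoke irreflexivity, $y_j \notin X_j$) to guarantee that the antecedent produced is precisely the prescribed set and that no needed element is deleted prematurely.

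The main obstacle throughout is exactly this set-bookkeeping: reconciling the bulk, simultaneous-substitution format of the $R^i$-construction with the one-at-a-time transitivity rule, which deletes a pivot at every step. Concretely, an element substituted away at one coordinate may reappear inside another antecedent, so I must track the unions and deletions carefully and choose substitution orders appropriately; irreflexivity of $R$ together with an induction on the size (depth and width) of the derivation are the tools that make this precise. Once the bookkeeping is settled, both inclusions follow mechanically.
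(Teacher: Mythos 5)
Your two inclusions cannot both be completed, because the "bookkeeping" you defer at the end is not a mechanical matter: the lemma as stated is false. Take $\mathcal{A}=\{y,y_1,y_2,c,d\}$ and let $R$ consist of the three pairs $\{y_1,y_2\}\;R\;y$, $\{y_2,c\}\;R\;y_1$, $\{y_1,d\}\;R\;y_2$ (irreflexive). The $i=1$ rule applied to the base instance $\{y_1,y_2\}\;R^0\;y$ with $k=n=2$, $X_1=\{y_2,c\}$, $X_2=\{y_1,d\}$ imposes $\{y_1,y_2,c,d\}\;R^1\;y$, since the displayed antecedent removes only $y$. But this pair is not in the transitive closure $\overline{R}$: by induction, every pair of $\overline{R}$ is irreflexive and has consequent in $\{y,y_1,y_2\}$ (the pairwise rule deletes the consequent from the new antecedent), so any application of the pairwise rule producing a consequent-$y$ pair must use a pivot $x\in\{y_1,y_2\}$, which is deleted from the result. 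Hence every consequent-$y$ pair of $\overline{R}$ other than the base pair $\{y_1,y_2\}$ omits $y_1$ or omits $y_2$, while $\{y_1,y_2,c,d\}$ omits neither. This is precisely the configuration you flagged ($y_1\in X_2$ and $y_2\in X_1$), and it shows your second inclusion fails irreparably: no ordering of single substitutions works, because each single step deletes its pivot and here both $y_1$ and $y_2$ must serve as pivots yet both must survive. Your first inclusion fails too: with $R=\{(\{a\},y),(\{x\},a),(\{a,w\},x)\}$ we get $\{x\}\;R^1\;y$ and $\{a,w\}\;R^0\;x$, so transitivity of $\bigcup_i R^i$ would require $\{a,w\}$ to be related to $y$; but the only consequent-$y$ antecedents obtainable in $\bigcup_i R^i$ are $\{a\},\{x\},\{w\}$. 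The reason is the flaw in your grafting step: $x$ need not be an unsubstituted coordinate of the root $R^0$-instance, so the graft must occur above the node whose consequent is $a$, where $a$ is deleted; the grafted derivation yields $\{w\}$, not "exactly $(X\cup Y)\setminus\{x,y\}$."

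For comparison, the paper does not prove this lemma at all---its proof is a one-sentence appeal to a "standard result," and the statement it asserts is, as above, incorrect as written. What the paper's later argument actually relies on (in Lemma \ref{lem:no_cycle}) is only the weaker, one-directional, and true claim that every pair $(W,w)$ of the transitive closure is witnessed by a tree in the sense of Definition \ref{def:tree}, whose termination condition has slack built in: a branch may stop at any node that lies in $W$ or repeats a descendant, rather than the computed antecedent having to equal $W$ exactly. Your grafting construction, with exactness dropped, proves exactly that weaker claim (leaves of the grafted subtree land in $X\subseteq W\cup\{w\}$ or are repeats), and that is the version worth writing up; attempting the stated set-equality is hopeless.
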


\begin{proof}
This is a standard result about the transitive closure. The usual proof goes through with the definition of transitivity used here. 
\end{proof}

\begin{lemma}
\label{lem:no_cycle}
Fix an irreflexive, transitive and proper menu-item relation $R$. Fix distinct $x, y \in \mathcal{A}$ such that $\neg (\{y\} \; R \; x)$. The transitive closure of $R \; \cup (\{x\}, y)$ is irreflexive and proper. 
\end{lemma}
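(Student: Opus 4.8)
The plan is to generate $\bar R$, the transitive closure of $R'' := R \cup \{(\{x\}, y)\}$, by iterating the single-step transitivity operation to a fixpoint, and to treat irreflexivity and properness separately. Irreflexivity will turn out to be automatic and not to use the hypothesis at all; properness is the real content, and it is where $\neg(\{y\}\,R\,x)$ enters.

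For irreflexivity, I would observe that the transitivity operation, which sends $A\,R\,a$ and $B\,R\,b$ with $a\in B$ to $(A\cup B)\setminus\{a,b\}\,R\,b$, always deletes the dominated item $b$ from the menu it produces. Hence every relation output by a step is individually irreflexive, and since the base relations are irreflexive ($R$ by hypothesis, and the new pair because $y\notin\{x\}$, using $x\neq y$), an induction on the number of steps gives that all of $\bar R$ is irreflexive. Note this does \emph{not} rule out trouble for properness: a ``singleton $2$-cycle'' $\{a\}\,\bar R\,b$ and $\{b\}\,\bar R\,a$ composes not to a reflexive relation but to $\emptyset\,\bar R\,b$, so it attacks properness rather than irreflexivity.

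I would then reduce properness to the absence of such $2$-cycles. Combining proper relations $A\,R\,a$ and $B\,R\,b$ with $a\in B$, irreflexivity gives $a\notin A$ and $b\notin B$ (so $a\neq b$), whence $(A\cup B)\setminus\{a,b\}=\emptyset$ forces $A\subseteq\{b\}$ and $B\subseteq\{a\}$, i.e. $A=\{b\}$ and $B=\{a\}$. So an empty menu can be produced only by composing a singleton $2$-cycle $\{b\}\,\bar R\,a$, $\{a\}\,\bar R\,b$; taking a shortest derivation of any empty-menu relation, its inputs are proper, so $\bar R$ fails properness only if it contains a singleton $2$-cycle. Moreover $R$ itself has none, since $\{a\}\,R\,b$ and $\{b\}\,R\,a$ would compose to $\emptyset\,R\,b$, contradicting properness of $R$. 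It remains to exclude singleton $2$-cycles in $\bar R$. Let $\rho_1:=\{(a,b):\{a\}\,R\,b\}$ and let $G$ be the ordinary binary transitive closure of $\rho_1\cup\{(x,y)\}$. Using transitivity/irreflexivity/properness of $R$, $\rho_1$ is a strict partial order: $\{a\}\,R\,b$ and $\{b\}\,R\,c$ compose to $\{a\}\,R\,c$ when $a\neq c$ and to the impossible $\emptyset\,R\,c$ when $a=c$, so $\rho_1$ is transitive and asymmetric. Adding the edge $(x,y)$ to this acyclic graph creates a cycle exactly when there is already a $\rho_1$-path from $y$ to $x$; by transitivity of $\rho_1$ such a path is equivalent to $\{y\}\,R\,x$, which the hypothesis forbids. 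Hence $G$ is acyclic and has no $2$-cycle.

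The crux, and the step I expect to be the main obstacle, is the containment $\{(a,b):\{a\}\,\bar R\,b\}\subseteq G$: that every singleton relation of $\bar R$ is a path in $\rho_1\cup\{(x,y)\}$. With acyclicity of $G$ this immediately yields the absence of singleton $2$-cycles and finishes properness. The difficulty is that a transitivity step can collapse a multi-element menu to a singleton, so one cannot argue naively that singleton relations arise only from singleton inputs. The facts I would exploit are: (i) any step both of whose inputs already lie in $R$ lands back in $R$, since $R$ is transitively closed, so a genuinely new relation must invoke the pair $\pi=(\{x\},y)$; and (ii) each invocation of $\pi$ behaves like the single graph edge $(x,y)$ — its ``forward'' use replaces an occurrence of $y$ in a dominating menu by $x$, and its ``chaining'' use converts domination of $x$ into domination of $y$ — so the reachability it contributes is exactly that edge. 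I would make this rigorous by proving a strengthened invariant for \emph{all} of $\bar R$, namely that $\bar R$ is generated from $R$ by these two operations and that the resulting family is already transitive (the remaining work being to check that composing two such generated relations produces nothing outside the family); the singleton case of the invariant then reads off the desired containment.
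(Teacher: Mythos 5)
Your reduction of properness to the absence of singleton $2$-cycles is correct and genuinely nice: in a shortest derivation of an empty-menu relation, both inputs to the final step are proper and irreflexive, which forces them to be of the form $\{b\}\;\bar{R}\;a$ and $\{a\}\;\bar{R}\;b$. Your irreflexivity argument and your verification that $G$ is acyclic are also fine. But the step you yourself flag as the crux---the containment $\{(a,b):\{a\}\;\bar{R}\;b\}\subseteq G$---is not merely unproven; it is false. Take $\mathcal{A}\supseteq\{x,y,w\}$ and $R=\{(\{x,y\},w)\}$. This $R$ is irreflexive, proper, and (vacuously) transitive, and $\neg(\{y\}\;R\;x)$ holds. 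Composing the new pair $(\{x\},y)$ with $(\{x,y\},w)$ (in the paper's notation take $X=\{x\}$, $x'=y$, $Y=\{x,y\}$, $y'=w$; the side condition $x'\in Y$ holds) yields $(\{x\}\cup\{x,y\})\setminus\{y,w\}=\{x\}$, i.e. $\{x\}\;\bar{R}\;w$. But $\rho_1=\emptyset$ here, so $G=\{(x,y)\}$ and $(x,w)\notin G$. The same example falsifies your heuristic (ii): a ``forward'' use of $(\{x\},y)$ against a multi-element menu of $R$ contributes reachability beyond the edge $(x,y)$---it manufactures the singleton relation $(\{x\},w)$ out of a non-singleton relation of $R$. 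Consequently the ``strengthened invariant'' you hope to prove cannot exist in the form you state; no induction can establish a false containment.

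The deeper issue is that domination by menus is hypergraph-like, and the example shows it cannot be compressed into binary reachability over $\rho_1\cup\{(x,y)\}$: which singleton relations appear in $\bar{R}$ depends on the multi-element menus of $R$ (here $x$ comes to dominate $w$ because $x$ dominates $y$ and the menu $\{x,y\}$ dominates $w$). Any correct description of the singleton part of $\bar{R}$ is recursive in exactly the way derivation trees are. This is why the paper's proof works directly with a hypothetical tree deriving an empty-menu relation, shows such a tree must invoke $(\{x\},y)$, prunes it wherever $x$ is the sole parent of $y$, and extracts $\{y\}\;R\;x$ from the pruned pieces, contradicting the hypothesis. Your $2$-cycle reduction could in principle replace the paper's empty-menu analysis, but to finish it you would still need a tree-surgery argument showing that $\{a\}\;\bar{R}\;b$ together with $\{b\}\;\bar{R}\;a$ forces $\{y\}\;R\;x$; at that point you would have reproduced, not avoided, the hard part of the paper's proof.
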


\begin{proof}
Let $R' := R \; \cup (\{x\}, y)$, and let $R^+$ denote the transitive closure of $R'$. To deal with repeated applications of transitivity, we need the notion of a tree. For brevity, we write $\{z^i\}$ instead of $\{z_0, \ldots, z^i\}$. 

\begin{definition}[Q-tree]
\label{def:tree}
For a menu-item relation $Q$, a $Q$-tree starting at $w$ and ending at $W$ consists of:
\begin{itemize}
\item A bottom node $z_0 := w$, which is mapped to a parent set $Z_1(z_0)$ such that $Z_1 \; Q \; z_0$.
\item For $i > 0$: each $z_i(z^{i-1}) \in Z_i(z^{i-1})$ such that $z_i(z^{i-1}) \notin W \cup \{z^{i-1}\}$ is mapped to a parent set $Z_{i+1}(z^i)$ such that $Z^{i+1}(z^i) \; Q \; z_i$.
\item For some finite $K > 0$: each $z_K(z^{K-1}) \in W \cup \{z^{K-1}\}$. 
\end{itemize}
We refer to the $z_i$ as the $i$-th level of the tree, and we refer to nodes that do not have any parents as top nodes. All the nodes in level $K$ are top nodes, but top nodes may also appear in lower levels (except level $0$). A branch of the tree is a sequence $(z_0, \ldots, z_k)$ in which $z_i$ is the parent of $z_{i-1}$, $z_k$ is a top node, and $z_0$ is the bottom node. We refer to $(z_0, \ldots, z_{i-1})$ as the descendants of $z_i$, and $(z_{i+1}, \ldots, z_k)$ as the ancestors of $z_i$. 
\end{definition}

By Lemma \ref{lem:closure}, any pair $(W, w)$ in $R^+$ can be obtained by repeated application of transitivity. That is, there must be a $R'$-tree starting at $w$ and ending at $W$. It is obvious that $R^+$ will be irreflexive, so we focus on properness. Suppose there is a $R'$-tree starting at $w$ and ending at $\emptyset$. Every branch $(z_0, \ldots, z_k)$ of this tree must have $z_k = z_j$ for some $0 < j < k-2$. We will derive a contradiction by working through this tree. 

First, notice that the tree must use $\{x\} \; R' \; y$. If it doesn't, then it is an $R$-tree. We can work backward through the tree to derive a contradiction. Start by looking at levels $K-2$ through $K$. Take any $z_{K-1}(z^{K-2})$ that is not already a top node. By transitivity, we can replace it with $Z_K(z^{K-1}) \setminus \{z_{K-2}\}$. Now the tree has $K-1$ levels. Iterating this process, we should end up with a 1-level tree, where the set of level-1 nodes is a subset of the top nodes from the original tree. But each top node in the original tree was identical to some node further down the branch, so it must have been eliminated. Thus, level 1 is empty, and we must have $\emptyset \; R \; w$. Since $R$ is proper, this is a contradiction. 

Now prune the tree as follows. Wherever $x$ is the sole parent of $y$, remove $x$ and all its ancestors. The pruned tree does not use $\{x\} \; R' \; y$, so all the parental relationships in the pruned tree are from $R$. Take any pruned branch. We know that $y$ occurs at the top of this branch and nowhere further down. (If it did, then the branch would have terminated with $y$, and would not have needed to be pruned.) By working through the pruned tree as above, we get a 1-level tree in which $w$ is the bottom node. We know $y$ is one of the top nodes. If there are any other top nodes, they must be top nodes of the original, unpruned tree. But any such node would get deleted as we worked backwards, so $y$ must be the unique top node. We have $\{y\} \; R \; w$. 

Now take any level-1 node $v$ of the pruned tree that occurs below $y$ along a pruned branch. Consider the subtree of the pruned tree in which $v$ is the bottom node. By working through this subtree as above, we get a 1-level $R$-tree in which $v$ is the bottom node. We know $y$ is one of the top nodes. $w$ may also be a top node. ($w$ occurs below $v$, so it would not get deleted as we worked backwards.) But we have already seen that $\{y\} \; R \; w$, so $\{y\} \; R \; v$ by transitivity. We can make similar arguments for any level-2 node of the pruned tree that occurs below $y$ along a pruned branch. We continue in this way until we have shown that $\{y\} \; R \; v$ for any node $v$ that occurs below $y$ on a pruned branch. 

Now consider the subtrees we pruned from the original tree. Each has bottom node $x$. None of them can use $\{x\} \; R' \; y$. (Otherwise, some branch would contain duplicates of both $x$ and $y$, which is impossible.) All the parental relationships in the subtree are from $R$. We can work backward through the subtree in the familiar way, creating a new 1-level $R$-tree that has bottom node $x$. The set of top nodes must be non-empty because $R$ is proper. It may contain $y$ and any other node $v$ of the original tree that recurs below $y$ in that tree. But we have already seen that $\{y\} \; R \; v$ for any such $v$, so transitivity delivers $\{y\} \; R \; x$. This contradicts our assumption about $x, y$. 
\end{proof}

Now we use Lemma \ref{lem:no_cycle} to show that $\triangleright$ can be extended to an irreflexive, proper and transitive relation $\triangleright^+$ such that, for all $x, y \in \mathcal{A}$, $\{x\} \triangleright^+ y$ or $\{y\} \triangleright^+ x$. The proof is similar to that of the Szpilrajn Extension Theorem. Consider the set of irreflexive, proper and transitive relations that extend $\triangleright$, ordered by set inclusion. Take any chain in the partially ordered set. The union of its elements is clearly irreflexive, proper and transitive, so it is an upper bound for the chain. By Zorn's Lemma, the partially ordered set must have a maximal element $\triangleright^+$. Suppose that, for some $x, y$, neither $\{x\} \triangleright^+ y$ nor $\{x\} \triangleright^+ y$. By Lemma \ref{lem:no_cycle}, $\triangleright^+$ can be extended to another irreflexive, proper and transitive relation containing $(\{x\}, y)$. Then $\triangleright^+$ cannot be maximal, a contradiction. Moreover, for each $X \in \mathcal{F}(\mathcal{A})$ and $y \in \mathcal{A}$, $\triangleright^+$ must satisfy
\begin{equation}
\label{eq:plus}
X \triangleright y \Longrightarrow \left(\exists x \in X \text{ s.t. } \{x\} \triangleright^+ y\right).
\end{equation}
Suppose not. Then $\{y\} \triangleright^+ x$ for all $x \in X$, as well as $X \triangleright^+ y$. Since $\triangleright^+$ is transitive, $\emptyset \triangleright^+ y$. Since $\triangleright^+$ is proper, this is a contradiction. 
Similarly, suppose that $\{x\} \triangleright^+ y$ and $\{y\} \triangleright^+ x$. By transitivity, $\emptyset \; \triangleright^+ x$, a contradiction. 

Now we are ready to order the items in $H$. We can use $\triangleright^+$ to define $\succ_H$:
\[\forall x, y \in H \quad \{x\} \triangleright^+ y \Longrightarrow x \succ_H y.\]
$\succ_H$ is antisymmetric, complete and transitive. By (\ref{eq:plus}), it satisfies (\ref{eq:H_order}). 

We can use the same arguments to obtain an appropriate $\succ_L$. Now that we have a total order over $H$ and total order over $L$, and we know that everything in $H$ must be strictly better than everything in $L$, we have a total order $\succ_m$ over the whole of $\mathcal{A}$. We have seen that, for any $X \subset \mathcal{F}(\mathcal{A})$ and any $y \in \mathcal{Y}$, $\succ_m$ satisfies
\begin{equation}
\label{eq:total_order}
X \triangleright y \Longrightarrow \left(\exists x \in X \text{ s.t. } x \succ_m y \right).
\end{equation}
Thus, $\succ_m$ is indeed in $\mathcal{M}$. Since $b \succ_m A$, (\ref{eq:final}) holds, and we are done. 

\subsection{Proof of Proposition \ref{prop:dominance}}

First, we show necessity of ISA. Suppose $(\succsim, c)$ has a monotone justifiability representation. It suffices to show that $M(B) = M(B \setminus A)$. Suppose $x \in M(B \setminus A)$ but $x \notin M(B)$. Then there must be $\succ_m \in \mathcal{M}$ that ranks some $a \in A$ above everything in $B \setminus A$. Fix some such $\succ_m$, and let $a^*$ be the $\succ_m$-best element of $A$. We have $a^* \succ_m B \setminus \{a^*\}$. But since $a^*$ is excluded by a subset of $B$ or dominated by an element of $B$, this is impossible. Now suppose $x \in M(B)$ but $x \notin M(B \setminus A)$. This is clearly impossible unless $x = a$ for some $a \in A$. But we have just seen that no $a \in A$ can be top-ranked in $B$, so $x = a$ is impossible too. 

Now we show sufficiency. It will be convenient to have a weak dominance relation $\succsim_D$ as well as a strict dominance relation $\succ_D$. Say that $a \succsim_D b$ if $a = b$ or $a \succ_D b$. Thus, $\succsim_D$ is reflexive. 

As before, let $\Pi$ denote the set of total orders on $\mathcal{A}$. Let 
\begin{align*}
\Pi^D &:= \{\succ_m \in \Pi: \nexists a, b \text{ s.t. } a \succ_m b \text{ and } b \succ_D a\} \\
\mathcal{M} &:= \{\succ_m \in \Pi^D: \nexists A, b \text{ s.t. } b \succ_m A \text{ and } A \text{ excludes } b\}.
\end{align*}

We know that $M(X)$ will not include any item $y$ such that $y \succsim c(X)$ but $y \notin c(X)$. We still have to show that $M(X)$ includes each item in $c(X)$. It is helpful to express the definition of $\mathcal{M}$ in a different way. 
\begin{definition}[$D$-exclusion]
$A \in \mathcal{F}(\mathcal{A})$ $D$-excludes $b \in A$ (written $A \triangleright_D b$) if $A$ excludes $b$ and, for all $a \in A$, $\neg(b \succ_D a)$. 
\end{definition}

Suppose that $A$ excludes $b$. Let
\[D(b, A) := \{a \in A: \neg(b \succ_D a)\}.\]
We claim that $D(b; A) \triangleright_D b$. We need $b \succsim c(D(b, A) \cup \{b\})$ and $b \notin c(D(b, A) \cup \{b\})$. By ISA, $c(D(b, A) \cup \{b\}) = c(A \cup \{b\})$. Since $b \succsim c(A \cup \{b\})$ and $b \notin c(A \cup \{b\})$, the result follows.

This allows us to rewrite $\mathcal{M}$:
\[\mathcal{M} = \{\succ_m \in \Pi^D: \nexists A, b \text{ s.t. } b \succ_m A \text{ and } A \; \triangleright_D b\}.\]

Now we show that $b \in c(A \cup \{b\})$ implies $b \succ_m A$ for some $\succ_m \in \mathcal{M}$. We will construct an appropriate $\succ_m$.

This step requires some extra work because $\triangleright_D \cup D$ does not satisfy any nice transitivity property, so we have to take a transitive closure.\footnote{I am abusing notation here. Strictly speaking, $D$ is not a menu-item relation, but a relation on $\mathcal{A}$. But of course there is a menu-item relation that is isomorphic to $D$.} Then we make sure the transitive closure satisfies other nice properties. 

\begin{definition}[$D$-transitivity]
A menu-item relation $R$ is $D$-transitive if 
\[(X \; R \; x, Y \; R \; y \text{ and } x \in Y) \Longrightarrow D(y, X \cup Y \setminus \{x, y\}) \; R \; y.\]
\end{definition}

\begin{definition}[$D$-monotonicity]
A menu-item relation $R$ is $D$-monotone if it satisfies
\begin{align*}
&x \; \succ_D \ ; y \Longrightarrow x \; R \; y \\
\text{and} \quad &X \; R \; y \Longrightarrow (\nexists x \in X \text{ s.t. } y \succsim_D x).
\end{align*}
\end{definition}
Since $\succsim_D$ is reflexive, the second part of $D$-monotonicity implies Irreflexivity. Notice that $\triangleright_D \cup \succ_D$ is $D$-monotone.

\begin{lemma}
\label{lem:D-closure}
The $D$-transitive closure of $\triangleright_D \cup \succ_D$ is $D$-monotone and proper. 
\end{lemma}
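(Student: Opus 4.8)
The plan is to prove the two conclusions—$D$-monotonicity and properness—essentially separately, building the $D$-transitive closure as an increasing union of iterates in the spirit of Lemma \ref{lem:closure}. Writing $R^0 := \triangleright_D \cup \succ_D$ and letting $R^i$ be obtained from one round of the $D$-transitivity inference $(X \; R^{i-1} \; x,\; Y \; R^{i-1} \; y,\; x \in Y) \Longrightarrow D(y, (X \cup Y) \setminus \{x,y\}) \; R^i \; y$, the closure is $\bigcup_i R^i$, and both target properties pass to unions of chains, so it suffices to induct on $i$. The base case is a direct check: $\succ_D$ contributes only singleton menus and $\triangleright_D$ only nonempty menus (an empty menu would force $b \in c(\{b\})$, which is false), so $R^0$ is proper; and $R^0$ is $D$-monotone because $\succ_D \subseteq R^0$ by construction, while every pair $X \; R^0 \; y$ has $y \notin X$ and $\neg(y \succ_D x)$ for all $x \in X$ (for $\triangleright_D$ this is the defining $D$-condition; for $\succ_D$ it is asymmetry).

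$D$-monotonicity is the easy half of the inductive step. Part 1 can never be lost, since the closure only \emph{adds} pairs. For part 2, the only new pairs at stage $i$ have the form $(Z,y)$ with $Z = D(y,(X\cup Y)\setminus\{x,y\})$; by the very definition of $D(y,\cdot)$ every $z \in Z$ satisfies $\neg(y \succ_D z)$, and $z \neq y$ because $y$ was removed, so $\neg(y \succsim_D z)$ for all $z \in Z$. Thus every newly inferred pair already obeys part 2, and $D$-monotonicity propagates.

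Properness is the real work, and I would first pin down the rigid shape of any bad inference, i.e. one producing $Z = \emptyset$. Applying $D$-monotonicity of $R^{i-1}$ to $Y \; R^{i-1} \; y$ gives $y \notin Y$ and $\neg(y \succ_D b)$ for every $b \in Y$, so $Y \setminus \{x\} \subseteq Z$; hence $Z = \emptyset$ forces $Y = \{x\}$ and, since $x \notin X$, also $y \succ_D a$ for every $a \in X \setminus \{y\}$. In that case $D$-monotonicity part 1 gives $\{y\} \; R^{i-1} \; a$ for each such $a$, and repeatedly feeding these into $X \; R^{i-1} \; x$ through $D$-transitivity collapses the left menu to $\{y\} \; R^+ \; x$ (the intermediate $y$ survives each trimming: if it did not, $x \succ_D y$ together with $y \succ_D a$ for some $a \in X$ would violate $D$-monotonicity of $X \; R^{i-1} \; x$, and the degenerate case $X = \{y\}$ is an outright two-cycle). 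Combined with $\{x\} \; R^+ \; y$ from $Y = \{x\}$, this is a singleton two-cycle, and one further application of $D$-transitivity to $\{x\} \; R^+ \; y$ and $\{y\} \; R^+ \; x$ produces $D(x,\emptyset) \; R^+ \; x$, i.e. $\emptyset \; R^+ \; x$. So properness is equivalent to the absence of singleton two-cycles $\{x\} \; R^+ \; y$, $\{y\} \; R^+ \; x$.

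The hard part will be ruling these out, and here I would mimic the tree argument of Lemma \ref{lem:no_cycle}. The idea is to represent any derivation of $\emptyset \; R^+ \; w$ as a $D$-transitivity tree (the analogue of the $Q$-tree of Definition \ref{def:tree}) whose leaves are base pairs from $\triangleright_D$ or $\succ_D$, and then trace the collapse to the empty menu back through the $D(\cdot,\cdot)$-trimmings. Because a trimming only ever deletes elements strictly $\succ_D$-dominated by the current target, a full collapse to $\emptyset$ must be paid for by a chain of $\succ_D$-comparisons layered on top of the base exclusion facts; pushing this bookkeeping through the tree should force a contradiction of one of three kinds—a cycle in $\succ_D$ (impossible by asymmetry and transitivity), two incompatible exclusion statements about a single binary menu (impossible because $c$ selects one nonempty set from each menu), or a clash with strict $D$-monotonicity of $\succsim$ along a $\succ_D$-chain. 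Making the trimmed tree genuinely \emph{force} one of these contradictions is the delicate step, precisely because $\triangleright_D$-pairs point ``downward'' in $\succsim$ while $\succ_D$-pairs point ``upward,'' so (as completeness of $\succsim$ makes the naive true-preference invariant vacuous) no monotone potential is available and the accounting must be purely combinatorial.
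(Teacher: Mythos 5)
Your $D$-monotonicity argument is fine (it matches what the paper asserts in one line), and your reduction of improperness to a singleton two-cycle $\{x\} \; R^+ \; y$, $\{y\} \; R^+ \; x$ is a correct and even elegant observation. But the proof has a genuine gap exactly where you flag ``the delicate step'': you never actually rule out these two-cycles, and the route you sketch for doing so---purely combinatorial bookkeeping on a trimmed tree, terminating in a cycle in $\succ_D$, or incompatible exclusions on a single binary menu, or a clash with strict monotonicity of $\succsim$---cannot succeed as stated. Properness of the closure is not a combinatorial fact about $\triangleright_D \cup \succ_D$ in isolation: the exclusion facts entering a derivation live on \emph{different} menus, and nothing in the definitions of $\triangleright_D$, $\succ_D$, or in single-menu properties of $c$ (nonemptiness, Optimization) links them. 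Base-level two-cycles are indeed killed by your second contradiction type, but derived two-cycles require an axiom that transfers choice information across menus, and that axiom is ISA.

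The paper's proof supplies precisely this missing ingredient, and it is short: represent any derivation of $\emptyset \; R^+ \; w$ as a tree whose edges are \emph{base} pairs (a node is either excluded by its parent set via $\triangleright_D$ or dominated by its single parent via $\succ_D$, with top nodes being duplicates of, or dominated by, their descendants), then take the menu $B$ consisting of \emph{all} nodes of the tree. Every element of $B$ is then in $S(B)$---either dominated by some element of $B$ or excluded by a proper subset of $B$---so ISA with $A = B$ gives $c(B) = c(\emptyset)$, contradicting nonemptiness of $c$. In other words, the contradiction is not one of your three types; it is ``the DM must choose something from the menu of all items appearing in the derivation, but ISA says every item is removable.'' If you want to salvage your write-up, keep the monotonicity induction and the two-cycle reduction, but replace the final step with this ISA-on-the-whole-tree argument applied to the two derivations witnessing $\{x\} \; R^+ \; y$ and $\{y\} \; R^+ \; x$ (or, more simply, applied directly to the tree witnessing $\emptyset \; R^+ \; w$, which makes the two-cycle reduction unnecessary).
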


\begin{proof}
We can show a result analogous to Lemma \ref{lem:closure} for $D$-transitivity, so anything in the $D$-transitive closure is obtained via repeated application of $D$-transitivity. Clearly, application of $D$-transitivity will not lead to a violation of $D$-monotonicity. As before, we check properness via a tree. The relevant construction is very similar to Definition \ref{def:tree}. The only difference is as follows. In Definition \ref{def:tree}, a node becomes a top node if and only if it is in $W$ or is identical to one of its descendants. Now, a node becomes a top node if and only if it satisfies one of those conditions \textit{or} it is dominated by one of its descendants. 

Suppose the $D$-transitive closure of $\triangleright_D \cup \succ_D$ is improper, so we have a tree starting at $w$ and ending at $\emptyset$. Consider a menu that consists of all the nodes in the tree. By assumption, each item in the menu is excluded by its parents or dominated by its parent. By ISA, all these items can be removed without changing choice. Since some item must be chosen, this is a contradiction. 
\end{proof}

Denote the $D$-transitive closure of $\triangleright_D \cup \succ_D$ by $R_0$. We can now use $R_0$ to define $H^D$ and $L^D$, the analogues of $H$ and $L$ in (\ref{eq:LH}). Let
\begin{align*}
L^D &:= A \cup \{x \in \mathcal{A}: \exists A' \subseteq A \text{ s.t. } A' \; R_0 \; x\} \\
H^D &:= \mathcal{A} \setminus L
\end{align*}
For $l \in L^D$ and $h \in H^D$, we will impose $h \succ_m l$. We will have $b \succ_m A$ provided $b \in H^D$. Suppose $b \in L$, so $A' \; R_0 \; b$ for some $A' \subseteq A$. Then we have a tree starting at $b$ and ending at $A'$. Consider a menu that consists of all the nodes in the tree as well as $A \setminus A'$. Each item in the menu that is not identical to some element of $A$ is sub-maximal. In particular, $b$ is sub-maximal, so it cannot be chosen from this menu. Now remove all the sub-maximal elements of the menu except $b$ and elements of $A$. By ISA, $b$ is still not chosen: $b \notin c(A \cup \{b\})$. This is a contradiction. 
Since $R_0$ is $D$-transitive, there cannot be $L' \subset L^D, h \in H^D$ such that $L' \triangleright_D h$. Similarly, there cannot be $l \in L^D, h \in H^D$ such that $l \succ_D h$. 

Now we need to find a way of ordering the items within $H$ and $L$. We want to find a preference $\succ_H$ over $H$ that is consistent with $\triangleright_D$ and $\succ_D$. For any $X \in \mathcal{F}(H)$ and $y \in H$, we must have
\[X \triangleright_D y \Longrightarrow (\exists x \in X \text{ s.t. } x \succ_H y).\]
Also, for any $x, y \in H$, we must have
\[x \succ_D y \Longrightarrow x \succ_H y.\]

This part of the argument is almost the same as Theorem \ref{thm:main}. The idea is to extend $R_0$ in the same way that we extended $\triangleright$. $R_0$ is already consistent with $\succ_D$, so we just have to worry about $\triangleright_D$. 

\begin{lemma}
\label{lem:D-extension}
Fix a $D$-monotone, $D$-transitive and proper menu-item relaton $R$. Fix distinct $x, y \in \mathcal{A}$ such that $\neg(\{y\} \; R \; x)$. The $D$-transitive closure of $R \cup (\{x\}, y)$ is $D$-monotone and proper.
\end{lemma}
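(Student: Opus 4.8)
The plan is to prove Lemma \ref{lem:D-extension} by mirroring the argument for Lemma \ref{lem:no_cycle}, adapted to the $D$-machinery. There are two things to verify about the $D$-transitive closure of $R' := R \cup (\{x\}, y)$: that it is $D$-monotone, and that it is proper. I would dispatch $D$-monotonicity first, since it is comparatively routine, and reserve the tree argument for properness.

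For $D$-monotonicity, I would first check that $R'$ itself is $D$-monotone. The first clause ($a \succ_D b \Rightarrow \{a\}\,R'\,b$) is inherited from $R$ because we only add pairs. For the second clause it suffices to examine the new pair $\{x\}\,R'\,y$: I must rule out $y \succsim_D x$. Since $x \ne y$, this reduces to $\neg(y \succ_D x)$, which holds because $y \succ_D x$ would force $\{y\}\,R\,x$ by the first clause of $D$-monotonicity for $R$, contradicting the hypothesis $\neg(\{y\}\,R\,x)$. Next I would argue that $D$-monotonicity survives the closure. A single application of $D$-transitivity produces a pair of the form $D(y, X\cup Y\setminus\{x,y\})\,R\,y$; every surviving element $v$ of $D(y, X\cup Y \setminus\{x,y\})$ satisfies $\neg(y \succ_D v)$ by definition of $D(y,\cdot)$, and $v \ne y$ because $y$ was explicitly deleted, so $\neg(y \succsim_D v)$ and the second clause is preserved. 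The first clause is untouched since the closure only adds pairs. An induction using the $D$-transitivity analogue of Lemma \ref{lem:closure} then gives $D$-monotonicity for the full closure.

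The substance is properness, and here I would run the tree argument of Lemma \ref{lem:no_cycle} using $D$-trees in place of ordinary trees --- that is, the tree of Definition \ref{def:tree} with the augmented top-node rule from the proof of Lemma \ref{lem:D-closure}, under which a node also becomes a top node when it is dominated by one of its descendants. Suppose for contradiction the $D$-transitive closure is improper; then there is a $D$-tree for $R'$ starting at some $w$ and ending at $\emptyset$. As before, this tree must use the new edge $(\{x\}, y)$, for otherwise it is a $D$-tree for $R$ and contradicts properness of $R$. I would then prune exactly as in Lemma \ref{lem:no_cycle}: wherever $x$ is the sole parent of $y$, delete $x$ together with all of its ancestors, so that the pruned tree uses only $R$-edges. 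Collapsing the pruned tree from the top down by repeated $D$-transitivity yields $\{y\}\,R\,w$, and more generally $\{y\}\,R\,v$ for every node $v$ lying below $y$ on a pruned branch; finally, collapsing each pruned-off subtree (each with bottom node $x$, using only $R$-edges) and feeding in these $\{y\}\,R\,v$ facts delivers $\{y\}\,R\,x$, contradicting $\neg(\{y\}\,R\,x)$.

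The main obstacle is precisely the $D(y,\cdot)$ filter that $D$-transitivity inserts into every collapsed parent set: unlike ordinary transitivity, the conclusion menu is trimmed of elements strictly dominated by the node being collapsed against, so I must ensure the collapse never loses an element it needs. The point to nail down is that this trimming is exactly mirrored by the augmented top-node rule --- an element removed by the filter because it is dominated is an element the $D$-tree has already declared a top node on that branch --- so the collapse terminates on the same node set as in the plain case and the $\{y\}\,R\,v$ bookkeeping goes through unchanged. Verifying this correspondence between the $D$-filter and domination-induced top nodes, and confirming that the $D$-transitivity analogue of Lemma \ref{lem:closure} legitimizes the tree representation, is the heart of the argument.
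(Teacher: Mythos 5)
Your proposal is correct and follows essentially the same route as the paper's proof: a routine check that $D$-monotonicity survives the closure (your verification that $R' := R \cup (\{x\}, y)$ is itself $D$-monotone, using $\neg(\{y\}\,R\,x)$ together with the first clause of $D$-monotonicity to rule out $y \succ_D x$, is a point the paper leaves implicit), followed by the same pruned-tree argument---the tree must use the new edge, prune where $x$ is the sole parent of $y$, establish $\{y\}\,R\,v$ for nodes below $y$, collapse the cut subtrees, and conclude $\{y\}\,R\,x$ for a contradiction. The one step you flag as the ``heart'' but do not fully execute---handling the domination-induced top nodes---is resolved in the paper exactly along the lines you anticipate: $D$-monotonicity gives $\{y\}\,R\,u$ whenever $y \succ_D u$, and $D$-monotonicity combined with $D$-transitivity gives $\{y\}\,R\,u$ whenever $v \succ_D u$ for a pruned-branch descendant $v$ with $\{y\}\,R\,v$, so these top nodes are absorbed before the final application of $D$-transitivity.
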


\begin{proof}
Let $R' := R \cup (\{x\}, y)$, and let $R^+$ denote the $D$-transitive closure of $R'$. 

The proof is very similar to that of Lemma \ref{lem:no_cycle}. Any pair $(W, w)$ in $R^+$ can be obtained by repeated application of $D$-transitivity. Clearly, application of $D$-transitivity will not lead to a violation of $D$-monotonicity. For properness, we need to make sure there is no tree starting at $w$ and ending at $\emptyset$. Recall from the proof of Lemma \ref{lem:D-closure} that a top node in such a tree must be identical to one of its descendants or dominated by one of its descendants. 

Suppose there is a tree starting at $w$ and ending at $\emptyset$. As before, we can show the tree must use $\{x\} \; R' \; y$. Then we cut each branch where $x$ is the sole parent of $y$, and show that $\{y\} \; R \; v$ for any descendant $y$ along a pruned branch. Next, we work backward through the subtrees cut from the original tree. We produce 1-level $R$-trees in which $x$ is the bottom node and the top nodes may include $y$, descendants of $y$ along a pruned branch, and items dominated by $y$ or the aforementioned descendants. Since $R$ is $D$-monotone, $\{y\} \; R \; v$ for any item $v$ that $y$ dominates. We have already seen that $\{y\} \; R \; v$ for any item $v$ that descends from $y$ along a pruned branch. Finally, since $R$ is $D$-monotone and $D$-transitive, $v \succ_D u$ and $\{y\} \; R \; v$ implies $\{y\} \; R \; u$. Putting all this together and applying $D$-transitivity, we get $\{y\} \; R \; x$, a contradiction. 
\end{proof}

In Theorem \ref{thm:main}, we saw that $\triangleright$ could be extended to a transitive, irreflexive and proper relation $\triangleright^+$ such that
\begin{align*}
\forall x, y \in \mathcal{A} \quad &\{x\} \; \triangleright^+ y \text{ or } \{y\} \; \triangleright^+ x \\
\text{and } \forall X \in \mathcal{F}(\mathcal{A}), y \in \mathcal{A} \quad &X \triangleright y \Longrightarrow (\exists x \in X \text{ s.t. } \{x\} \triangleright^+ y).
\end{align*}
We can use exactly the same arguments to show that $R_0$ can be extended to a $D$-monotone, $D$-transitive and proper relation $R^+$ that satisfies the same properties (with $R^+$ in place of $\triangleright^+$ and $R_0$ in place of $\triangleright$). We can use this to define $\succ_H$:
\[\forall x, y \in H \quad \{x\} \; R^+ \; y \Longrightarrow x \succ_H y.\]
$\succ_H$ is complete and antisymmetric. For transitivity, suppose $\{a\} \; R^+ \; b$ and $\{b\} \; R^+ \; c$. Since $R^+$ is $D$-transitive, we will have $\{a\} \; R^+ \; c$ provided $\neg(c \succ_D a)$. Suppose $c \succ_D a$. Since $R^+$ is $D$-monotone, $\{c\} \; R^+ \; a$. Applying $D$-transitivity, we have $\{c\} \; R^+ \; b$. Applying $D$-transitivity again, we have $\emptyset \; R^+ \; c$. This contradicts the properness of $R^+$. 

We can use the same arguments to obtain an appropriate $\succ_L$. As in Theorem \ref{thm:main}, we use $\succ_L$ and $\succ_H$ to obtain the desired total order $\succ_m$ over $\mathcal{A}$. 

\subsection{Proof of Theorem \ref{thm:EU}}
First, we show necessity. It is well known that the first parts of Continuity and Independence are necessary (and sufficient) for $\succsim$ to have an EU representation. Consider the second part of Continuity. Take any convergent sequence $(q_n)$ such that each $q_n \in NB(p)$. By definition, $q_n \succsim p$ and $p \in c(\{q_n, p\})$ for all $n$. Since $\succsim$ is continuous, $q \succsim p$. For each $n$, we have $\mathbb{E}_p m_n \geq \mathbb{E}_{q_n} m_n$ for some $m_n \in \mathcal{M}^{EU}$. Since $\mathcal{M}^{EU}$ is compact, some subsequence of $m_n$ has a limit $m \in \mathcal{M}^{EU}$. We will have $\mathbb{E}_p m \geq \mathbb{E}_q m$, so $p \in c(\{q, p\})$. We conclude that $q \in NB(p)$, so $NB(p)$ is closed.

Now consider Convexity. Suppose that $A$ excludes $p$. By Lemma \ref{lem:exc_from_below}, $p \succsim \underline{A}$ and $p \notin c(\underline{A} \cup \{p\})$. For each $m \in \mathcal{M}^{EU}$, we have some $a \in \underline{A}$ such that $\mathbb{E}_p m < \mathbb{E}_a$. It is without loss to assume that $\mathbb{E}_p m = 0$ for all $m \in \mathcal{M}^{EU}$. We want to find a set of weights $\alpha$ such that $p \succsim \sum_{a \in \underline{A}} \alpha(a)\delta_a$ and 
\[\sum_{a \in \underline{A}}\alpha(a)\mathbb{E}_a m > 0\]
for all $m \in \mathcal{M}^{EU}$. The first part is easy---it will hold for any $\alpha$ since $\succsim$ is EU---so we focus on the second. For each $m \in \mathcal{M}^{EU}$, let $m_A := (\mathbb{E}_a m)_{a \in \underline{A}}$. Let $M_A$ be the set of the $m_A$. Like $\mathcal{M}^{EU}$, $M_A$ is nonempty, compact and convex. Let $N := \mathbb{R}^{|\underline{A}|}_-$, which is nonempty, closed and convex. Notice that no element of $M_A$ can be weakly negative (otherwise, some $m \in M$ would rank $p$ weakly higher than each member of $\underline{A}$). Thus, $M_A$ and $N$ are disjoint, and we can apply the Separating Hyperplane Theorem. This delivers a nonzero $\alpha \in \mathbb{R}^{|\underline{A}|}$ and $c \in \mathbb{R}$ such that $\alpha' n < c < \alpha' m_A$ for all $n \in N, m_A \in M_A$. Since the zero vector belongs to $N$, we must have $c > 0$. Suppose the $i$th element of $\alpha$ is strictly negative. By choosing $n$ with a sufficiently negative number in $i$th position and zeros elsewhere, we get $\alpha' n > c$, a contradiction. Thus, each element of $\alpha$ is weakly positive. If we rescale $\alpha$ to a unit sum, we still have $\alpha' m_A > 0$ for all $m_A \in M_A$. We can rewrite this as $\sum_{a \in \underline{A}} \alpha(a)\mathbb{E}_a m > 0$ for all $m \in M$, which is exactly what we needed. 

For the other direction of Convexity, suppose $p \in c(A \cup \{p\})$. For some $m \in \mathcal{M}^{EU}$, we have $\mathbb{E}_p m \geq \mathbb{E}_a m$ for all $a \in A$. For this $m$, we clearly have $\mathbb{E}_p m \geq \mathbb{E}_a m$ for all $a \in \text{co}(A)$. No $a \in \text{co}(A)$ can belong to $B(p)$. 

Consider the second part of Monotonicity. Suppose that $p >_{FOSD} q$. Take any menu $A \ni p, q$. Consider how $M(A)$ compares to $M(A \setminus \{q\})$. Since the justifiable preferences can be weakly FOSD-monotone, $q$ might be in $M(A)$. However, no item other than $q$ can be in $M(A)$ but not $M(A \setminus \{q\})$. Since the true preference is strictly FOSD-monotone, $q$ would not be chosen anyway, so it doesn't matter whether it is present. Now suppose some item $r$ is in $M(A \setminus \{q\})$ but not in $M(A)$. Then, some $m \in \mathcal{M}^{EU}$ must rank $q$ strictly above $r$, and $r$ weakly above everything else in $A$. Since $m$ must be weakly FOSD-monotone and $p \in A$, this is impossible. We must have $M(A) \setminus \{q\} = M(A \setminus \{q\})$. 

Now we show sufficiency. Recall the definition of $B(p)$ in (\ref{eq:Bp}). Similarly, let
\[W(p) := \{q \in \Delta(Z): q \succsim p \text{ and } q \notin c(\{p, q\})\}.\]
\begin{lemma}
$B(p)$ and $W(p)$ are convex cones.
\end{lemma}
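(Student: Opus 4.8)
The plan is to recast $B(p)$ and $W(p)$ in terms of the exclusion-from-below relation $\triangleright$ and then exploit its transitivity. Since the first parts of Independence and Continuity give $\succsim$ an expected-utility representation $u$, every comparison of the form ``$p \succsim \cdot$'' below reduces to a linear inequality, so the ``weakly worse/better than $p$'' clauses are automatically preserved under the mixtures I form. Unwinding definitions, for distinct $p,q$ one has $q \in B(p) \iff \{q\} \triangleright p$ (i.e.\ $p \succsim q$ and $p \notin c(\{p,q\})$) and $q \in W(p) \iff \{p\} \triangleright q$. The only property of $\triangleright$ I will need is transitivity, which was established (from IUA) in the proof of Theorem \ref{thm:main}.

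First, the cone property. Fix $q \in B(p)$, so $c(\{p,q\}) = \{q\}$. For $\alpha \in (0,1)$, the second part of Independence applied to $\{p,q\}$ mixed with the point $p$ gives $c(\{p,\, \alpha q + (1-\alpha)p\}) = \{\alpha q + (1-\alpha)p\}$, so $B(p)$ is closed under shrinking toward $p$; writing $q = \alpha q^{\lambda} + (1-\alpha)p$ and running the same identity with $\{p, q^{\lambda}\}$ forces $c(\{p, q^{\lambda}\}) = \{q^{\lambda}\}$, giving closure under expansion away from $p$ as far as feasibility allows. Thus $B(p)$ is radial from $p$, and the identical argument (using $c(\{p,q\}) = \{p\}$) shows the same for $W(p)$. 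Combined with the convex half-spaces $\{q : p \succsim q\}$ and $\{q : q \succsim p\}$, this is the cone structure with apex $p$.

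The substantive step is convexity. Take $q, q' \in B(p)$ and $\alpha \in (0,1)$, and set $t := \alpha p + (1-\alpha) q'$ and $m := \alpha q + (1-\alpha) q'$. Mixing the binary menu $\{p,q\}$, whose choice is $\{q\}$, with the single lottery $q'$, the second part of Independence yields $c(\{t, m\}) = \alpha\, c(\{p,q\}) + (1-\alpha)\{q'\} = \{m\}$, so $\{m\} \triangleright t$. By the cone property $t \in B(p)$, i.e.\ $\{t\} \triangleright p$, and transitivity of $\triangleright$ then gives $\{m\} \triangleright p$, that is $m \in B(p)$; hence $B(p)$ is convex. For $W(p)$ the same mixture (now with $c(\{p,q\}) = \{p\}$) gives $c(\{t,m\}) = \{t\}$, so $\{t\} \triangleright m$; since $t \in W(p)$ means $\{p\} \triangleright t$, transitivity delivers $\{p\} \triangleright m$, i.e.\ $m \in W(p)$.

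The main obstacle is exactly this convexity step, and the temptation to ``chain'' the two pairwise comparisons ($m$ defeats $t$ and $t$ defeats $p$) directly is a trap: pairwise defeat is not transitive in this model---cyclic pairwise choice is its defining feature. The resolution is to notice that, once the weak-preference clauses are in place (guaranteed by the EU representation and by $t$ lying on the segment $[p,q']$), each defeat is an instance of exclusion from below, and $\triangleright$---unlike raw pairwise defeat---is transitive by IUA. Independence is used only to manufacture the intermediate point $t$ and to transport choices across mixtures; notably, the Convexity axiom is not needed here. The remaining care is routine bookkeeping for degenerate cases ($q = q'$, or $q$ or $q'$ equal to $p$), where the claim is immediate.
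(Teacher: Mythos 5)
Your proof is correct and takes essentially the same route as the paper's: the identical cone argument via the second part of Independence, and the identical convexity step built on the intermediate mixture $t = \alpha p + (1-\alpha)q'$, recognized as a chain of exclusions-from-below. The only cosmetic difference is that you invoke the transitivity of $\triangleright$ (established in the proof of Theorem \ref{thm:main}) where the paper inlines the same two IUA applications that prove that transitivity.
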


\begin{proof}
First, suppose that $q \in B(p)$, so $p \succsim q$ and $\{q\} = c(\{p, q\})$. By $\succsim$-Independence, $p \succsim \alpha p + (1-\alpha) q$ for all $\alpha \in (0, 1)$. By $c$-Independence, $\{\alpha p + (1-\alpha)q\} = c(\{p, \alpha p + (1-\alpha)q\})$, so $\alpha p + (1-\alpha)q \in B(p)$. Similarly, suppose that $\alpha p + (1-\alpha)q \in B(p)$, so $p \succsim \alpha p + (1-\alpha)q$ and $\{\alpha p + (1-\alpha)q\} = c(\{\alpha p + (1-\alpha)q, p\})$. By $\succsim$-Independence, $p \succsim q$. By $c$-Independence, $\{q\} = c(\{q, p\})$, so $q \in B(p)$. Now suppose that $q, r \in B(p)$, so $p \succsim q, r$ and $\{q\} = c(\{p, q\})$, $\{r\} = c(\{p, r\})$. By $\succsim$-Independence, $p \succsim \alpha q + (1-\alpha)r$ for all $\alpha \in (0, 1)$. By $\succsim$- and $c$-Independence, 
\begin{align*}
\alpha p + (1-\alpha)r &\succsim \alpha q + (1-\alpha)r \\
\{\alpha q + (1-\alpha)r\} &= c(\{\alpha q + (1-\alpha)r, \alpha p + (1-\alpha)r\}),
\end{align*}
so $\alpha q + (1-\alpha)r$ excludes $\alpha p + (1-\alpha)r$ from below. Also by $\succsim$- and $c$-Independence, 
\begin{align*}
p &\succsim \alpha p + (1-\alpha)r \\
\{\alpha p + (1-\alpha)r\} &= c(\{p, \alpha p + (1-\alpha)r\}),
\end{align*}
so $\alpha p + (1-\alpha)r$ excludes $p$ from below. By IUA, we can add $p$ to a set containing $\alpha p + (1-\alpha)r$ without affecting choice, so 
\[\{\alpha q + (1-\alpha)r\} = c(\{\alpha p + (1-\alpha)r, \alpha q + (1-\alpha)r, p\}).\]
Also by IUA, we can remove $\alpha p + (1-\alpha)r$ from a set containing $\alpha q + (1-\alpha)r$ without affecting choice, so
\[\{\alpha q + (1-\alpha)r\} = c(\{\alpha q + (1-\alpha)r, p\})\]
so $\alpha q + (1-\alpha)r \in B(p)$. Parallel arguments suffice to show that $W(p)$ is a cone. 
\end{proof}

\begin{lemma}
\label{lem:shift}
Fix $p, p', q, q' \in \Delta(Z)$ such that $q - p = q' - p'$. If $q \in B(p)$, then $q' \in B(p')$. If $q \in W(p)$, then $q' \in W(p')$. 
\end{lemma}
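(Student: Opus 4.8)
The plan is to prove both claims purely from Independence, exploiting the fact that the pairs $(p,q)$ and $(p',q')$ are opposite sides of a parallelogram. The hypothesis $q-p=q'-p'$ is equivalent to $q+p'=q'+p$, so the two diagonals share a common midpoint $m:=\tfrac12(q+p')=\tfrac12(p+q')$. Writing $\bar p:=\tfrac12(p+p')$, the key observation is that the single two-element menu $\{\bar p,m\}$ can be expressed \emph{both} as $\tfrac12\{p,q\}+\tfrac12\{p'\}$ \emph{and} as $\tfrac12\{p',q'\}+\tfrac12\{p\}$. This lets me transport the known behavior on $\{p,q\}$ to behavior on $\{p',q'\}$ through the shared menu, using the second part of Independence as an exact equation in both directions. (The degenerate case $p=q$, which forces $q'=p'$, is trivial.)

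Concretely, for the $B(p)$ case I take $q\in B(p)$, so $p\succsim q$ and $\{q\}=c(\{p,q\})$. Mixing the menu $\{p,q\}$ with the fixed lottery $p'$ at ratio $\tfrac12$, the second part of Independence gives $c(\{\bar p,m\})=\tfrac12\{q\}+\tfrac12\{p'\}=\{m\}$. I then reinterpret the \emph{same} menu as $\tfrac12\{p',q'\}+\tfrac12\{p\}$; applying Independence again yields $c(\{\bar p,m\})=\tfrac12\,c(\{p',q'\})+\tfrac12\{p\}$. Equating the two expressions gives $\tfrac12\,c(\{p',q'\})+\tfrac12\{p\}=\{\tfrac12 q'+\tfrac12 p\}$, and since $x\mapsto\tfrac12 x+\tfrac12 p$ is injective this forces $c(\{p',q'\})=\{q'\}$. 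For the preference requirement I use that $\succsim$ already carries an expected-utility representation $\mathbb{E}u$ (from the first parts of Continuity and Independence): because $q-p=q'-p'$, linearity gives $\mathbb{E}_p u-\mathbb{E}_q u=\mathbb{E}_{p'}u-\mathbb{E}_{q'}u$, so $p\succsim q$ immediately yields $p'\succsim q'$. Together these show $q'\in B(p')$.

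The $W(p)$ case is identical in structure, with $p$ and $q$ exchanging roles in the choice conclusion. If $q\in W(p)$, then $q\succsim p$ and $c(\{p,q\})=\{p\}$. Mixing $\{p,q\}$ with $p'$ gives $c(\{\bar p,m\})=\tfrac12\{p\}+\tfrac12\{p'\}=\{\bar p\}$, and rewriting the menu as $\tfrac12\{p',q'\}+\tfrac12\{p\}$ forces $\tfrac12\,c(\{p',q'\})+\tfrac12\{p\}=\{\bar p\}=\{\tfrac12 p'+\tfrac12 p\}$, hence $c(\{p',q'\})=\{p'\}$, i.e. $q'\notin c(\{p',q'\})$. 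The preference part $q'\succsim p'$ again follows from EU linearity, so $q'\in W(p')$.

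I do not expect a serious obstacle. The one thing to check is that the parallelogram algebra lines up, i.e. that $m=\tfrac12(q+p')=\tfrac12(p+q')$ and $\bar p=\tfrac12(p+p')$ genuinely make $\{\bar p,m\}$ simultaneously the two required half-mixtures; this is precisely the content of $q-p=q'-p'$ and carries no feasibility caveats, since every lottery appearing is an average of members of $\Delta(Z)$ and therefore lies in $\Delta(Z)$. Notably, this argument needs neither Continuity nor the cone structure of $B(p)$ and $W(p)$ from the preceding lemma; it is driven entirely by the exactness of the choice-Independence equation together with the already-established linearity of $\succsim$.
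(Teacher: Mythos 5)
Your proof is correct and is essentially the paper's own argument: both exploit the parallelogram identity $\tfrac12 p + \tfrac12 q' = \tfrac12 p' + \tfrac12 q$ and apply the choice part of Independence twice (once to mix $\{p,q\}$ with $p'$, once in reverse to read the same menu as a mixture of $\{p',q'\}$ with $p$). The only cosmetic difference is in the preference half, where the paper applies $\succsim$-Independence forward and backward while you invoke the EU representation's linearity; both are available in context (though note this means your closing remark that Continuity is not needed is slightly overstated, since the EU representation you cite rests on it).
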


\begin{proof}
We have
\[\frac{1}{2}p + \frac{1}{2}q' = \frac{1}{2}p' + \frac{1}{2}q.\]
Since $q \in B(p)$, $p \succsim q$ and $\{q\} = c(\{p, q\})$. By $c$-Independence, 
\begin{align*}
\left\{\frac{1}{2}q + \frac{1}{2}p'\right\} &= c\left(\left\{\frac{1}{2}q + \frac{1}{2}p', \frac{1}{2}p + \frac{1}{2}p'\right\}\right) \\
\left\{\frac{1}{2}p + \frac{1}{2}q'\right\} &= c\left(\left\{\frac{1}{2}p + \frac{1}{2}q', \frac{1}{2}p + \frac{1}{2}p'\right\}\right) \\
\{q'\} &= c(\{p', q'\}).
\end{align*}
Similarly, by $\succsim$-Independence, 
\begin{align*}
\frac{1}{2}p + \frac{1}{2}p' &\succsim \frac{1}{2}q + \frac{1}{2}p' \\
\frac{1}{2}p + \frac{1}{2}p' &\succsim \frac{1}{2}p + \frac{1}{2}q' \\
p' &\succsim q'.
\end{align*}
We conclude that $q' \in B(p')$. Parallel arguments suffice to show that $q \in W(p)$ and $p - q = p' - q'$ imply $q' \in W(p')$. 
\end{proof}

Now we establish the relationship between $B(p)$ and $W(p)$. Suppose that $p = \alpha q + (1-\alpha)r$ where $q \in B(p)$, so $\alpha q + (1-\alpha)r \succsim q$ and $\{q\} = c(\{\alpha q + (1-\alpha)r, q\})$. By $\succsim$-Independence, $r \succsim q$, so $r \succsim \alpha q + (1-\alpha)r$. By $c$-Independence, $\{q\} = c(\{q, r\})$, so $\{\alpha q + (1-\alpha)r\} = c(\{\alpha q + (1-\alpha)r, r\})$. We conclude that $r \in W(p)$. Parallel arguments suffice to show that $r \in W(p)$ implies $q \in B(p)$. 

We are now ready to define $\mathcal{M}^{EU}$. Take $p$ in the interior of $\Delta(Z)$. Take any supporting hyperplane $H$ of $B(p)$ that passes through some boundary point $b$ of $B(p)$ with $p \succ b$. Since $B(p)$ is a cone with vertex $p$, $H$ will also pass through $p$. Since $B(p)$ is open in $\{q \in \Delta(Z): p \succsim q\}$, $H$ cannot include any point in $B(p)$. Let $\mathcal{H}$ be the set with generic member $H$. For each $H$, take a unit-norm $m \in \mathbb{R}^{|Z|}$ such that $m'h = 0$ for all $h \in H$ (including $p$) and $m'q > 0$ for all $q \in B(p)$. Collect these $m$, and take the closed convex hull. This is $\mathcal{M}^{EU}$. 

We show that everything in $\mathcal{M}^{EU}$ is weakly $D$-monotone. It suffices to show that any preference with an indifference curve in $\mathcal{H}$ that has $B(p) \succ p$ is weakly $D$-monotone. (All the other preferences are combinations and/or limits of these, so will inherit weak $D$-monotonicity.) Suppose that some such preference has $q \succ q'$ even though $q' >_{FOSD} q$. By moving $q, q'$ closer together along the segment connecting them, and shifting both by a common factor, we can always get $q \in B(p)$ and $q' \in NB(p)$. Since $q' \in NB(p)$, $p \in c(\{p, q'\})$. Since $q \in B(p)$, $p \notin c(\{p, q, q'\})$. Since $q' >_{FOSD} q$, $c(\{p, q, q'\}) = c(\{p, q'\})$ by IDA. But the only possibility for $c(\{p, q, q'\})$ is $q'$, and $c(\{p, q'\}) = p$---contradiction. 

We now show
\begin{equation}
\label{eq:intersect}
B(p) = \bigcap_{m \in \mathcal{M}^{EU}}\{q \in \Delta(Z): m'q > m'p\} \cap \{q \in \Delta(Z): p \succsim q\}.
\end{equation}
This could be violated in two ways. First, there could be a boundary point $b$ of $B(p)$ that doesn't belong to $B(p)$, but isn't a limit of boundary points that are strictly worse than $p$. Clearly, $b \sim p$. Moreover, any sufficiently small perturbation $\tilde{b}$ of $b$ such that $\tilde{b} \prec p$ must belong to $B(p)$. Said another way, for any $q \prec p$, there must be some $\alpha$ sufficiently close to 1 such that
\[\alpha b + (1-\alpha) q \in B(p).\]
Take $r$ such that $p >_{FOSD} r$. ($p$ is interior, so some such $r$ must exist.) Since the true preference is strictly FOSD-monotone, we have $r \prec p$. We also have $\alpha \in (0, 1)$ such that
\[\alpha b + (1-\alpha) r \in B(p).\]
That is, $p \notin c(\{p, \alpha b + (1-\alpha)r\})$ even though $p \succsim \alpha b + (1-\alpha)r$. By Convexity, $p \notin c(\{p, b, r\})$. By Monotonicity and $p >_{FOSD} r$, $p \notin c(\{p, b\})$. This contradicts $b \notin B(p)$. 

Second, there could be a sequence $\{H\}_{n=1}^\infty$ of hyperplanes in $\mathcal{H}$ converging to $\{q \in \Delta(Z): p \sim q\}$ even though $B(p) \cap \{q \in \Delta(Z): p \sim q\}$ is nonempty. Recall that each hyperplane in $\mathcal{H}$ passes through some boundary point of $B(p)$ that is strictly worse than $p$. Take the sequence of such points corresponding to $\{H\}_{n=1}^\infty$. Passing to a subsequence if necessary, let $b$ be the limit of this sequence of points. For any sufficiently small perturbation $\tilde{b}$ of $b$ with $\tilde{b} \prec p$, we must have $\tilde{b} \in B(p)$. (Otherwise, $\{q \in \Delta(Z): p \sim q\}$ could not be the limit of $\{H\}_{n=1}^\infty$.) We can now apply the argument in the previous paragraph. Take any $r$ such that $p>_{FOSD} r$. We must have $\alpha b + (1-\alpha)r \in B(p)$ for some $\alpha \in (0, 1)$. Applying Convexity, $b \notin c(\{p, b, r\})$. Applying IDA, $b \notin c(\{p, b\})$, so $b \in B(p)$, so $b \notin NB(p)$. $b$ is a limit point of $NB(p)$, so this contradicts Continuity. 

Notice that it does not matter which $p$ we use to define $\mathcal{M}^{EU}$, since Lemma \ref{lem:shift} ensures that we will get the same set of utilities (up to an irrelevant additive constant) for any interior $p$. To finish the proof, we have to show that $\mathcal{M}^{EU}$ satisfies two conditions. First, no utility in $\mathcal{M}^{EU}$ would justify choosing an item that the DM doesn't choose, but likes as much as anything he does choose. Second, for any item the DM chooses, some utility in $\mathcal{M}^{EU}$ justifies it.

Consider the first part. Suppose $A$ excludes $q$. By Lemma \ref{lem:exc_from_below}, $q \notin c(\underline{A} \cup \{q\})$ and $q \succsim \underline{A}$. By Convexity, we can find $a^* \in \text{co}(\underline{A})$ such that $q \notin c(\{q, a^*\})$. Since $q \succsim a^*$, $a^* \in B(q)$. By (\ref{eq:intersect}), $m'a^* > m'q$ for all $m \in \mathcal{M}^{EU}$. For each $m \in \mathcal{M}^{EU}$, we must have $a \in \underline{A}$ such that $m'a > m'q$. This is exactly what we needed. 

For the second part, suppose $q \in c(A \cup \{q\})$. To start, suppose $q \succsim A$. Suppose that we cannot find $m \in \mathcal{M}^{EU}$ so that $m'q \geq m'a$ for all $a \in A$. Recall the argument we used to show necessity of Convexity. Since $\mathcal{M}^{EU}$ is compact and convex, we can use the same argument to find an $a^* \in \text{co}(A)$ such that
\[a^* \in \bigcap_{m \in \mathcal{M}^{EU}}\{r \in \Delta(Z): m'r > m'q\} \cap \{r \in \Delta(Z): q \succsim r.\}\]
By (\ref{eq:intersect}), $a^* \in \text{co}(A) \cap B(q)$. By Convexity, $q \notin c(A \cup \{q\})$, a contradiction. 

Now we relax the assumption that $q \succsim A$. We know $q \in c(\underline{A} \cup \{q\})$. (Suppose not. Then $q \notin c(A \cup \{q\})$ by IUA, a contradiction.) By the previous argument, we can find $m \in \mathcal{M}^{EU}$ such that $m'q \geq m'a$ for all $a \in \underline{A}$. Now take any item $\bar{a} \in \bar{A} := A \setminus \underline{A}$. By definition of $\bar{A}$, $\bar{a} \succsim \{q\} \cup \underline{A}$. By Lemma \ref{lem:exc_from_below}, $\bar{a} \notin c(\{\bar{a}, q\} \cup \underline{A})$. Thus, there is no $m \in \mathcal{M}^{EU}$ such that $m'\bar{a} \geq m'q$ and $m'\bar{a} \geq m'a$ for all $a \in \underline{A}$. By the same argument we used to show necessity of Convexity, we have $a^* \in \text{co}(\{q\} \cup \underline{A})$ such that $m'a^* > m'\bar{a}$ for all $m \in \mathcal{M}^{EU}$. By (\ref{eq:intersect}) and the relationship between $B$ and $W$, 
\[W(a^*) = \bigcap_{m \in \mathcal{M}^{EU}}\{r \in \Delta(Z): m'a^* > m'r\} \cap \{r \in \Delta(Z): r \succsim a^*\}.\]
Since $\bar{a} \succsim a^*$, we have $\bar{a} \in W(a^*)$. Since $m'q \geq m'a$ for all $a \in \underline{A} \cup \{q\}$, $m'q \geq m'a^*$. By definition of $W(a^*)$, $m'a^* > m'r$ for any $r \in W(a^*)$. Putting these two facts together, $m'q > m'r$ for all $r \in W(a^*)$. In particular, $m'q > m'\bar{a}$ as well as $m'q \geq m'a$ for all $a \in \underline{A}$. Since the same argument applies for all $\bar{a} \in \bar{A}$, we have $m'q \geq m'a$ for all $a \in A$. Thus, we have $m \in \mathcal{M}^{EU}$ that justifies the selection of $q$ from $A \cup \{q\}$. 

\subsection{Proof of Corollary \ref{cor:EU_unique}}
For the minimal set, recall the construction in the proof of Theorem \ref{thm:EU}. We start with $p$ in the interior of $\Delta(Z)$. Then we take the supporting hyperplanes of $B(p)$ that pass through some boundary point of $B(p)$ that is strictly worse than $p$. $\mathcal{H}$ is the set of such hyperplanes. Let $\bar{co}(\mathcal{H})$ be the closed convex hull of $\mathcal{H}$. The preferences in $\mathcal{M}^{EU}_\succ$ are precisely the EU preferences that have indifference curves in $\bar{co}(\mathcal{H})$ and that prefer $B(p)$ to $p$. 

Now we show that every set of justifiable preferences must contain all the preferences in $\mathcal{M}^{EU}_\succ$. Let $\mathcal{G}$ be a strict subset of $\mathcal{H}$, and take the closed convex hull $\bar{co}(\mathcal{G})$. If $\mathcal{H} \subset \bar{co}(\mathcal{G})$, we are back where we started---so assume that $H \in \mathcal{H}$ but $H \notin \bar{co}(\mathcal{G})$. Recall that $H$ passes through some boundary point $b$ of $B(p)$ that is strictly worse than $p$. Since $B(p)$ is open in $\{q \in \Delta(Z): p \succsim q\}$, $b \notin B(p)$. It is without loss to assume that $b$ does not belong to any member of $\bar{co}(\mathcal{G})$. (Suppose that each boundary point of $B(p)$ that is strictly worse than $p$ belongs to some member of $\bar{co}(\mathcal{G})$. Then, convexity of $\bar{co}(\mathcal{G})$ implies $\mathcal{H} \subset \bar{co}(\mathcal{G})$, a contradiction.) Take $\mathcal{N}^{EU}_\succ$ to be the set of EU preferences that have indifference curves in $\bar{co}(\mathcal{G})$ and prefer $B(p)$ to $p$. For all $\succsim_m \in \mathcal{N}^{EU}_\succ$, we have $b \succ_m p$. But since $p \in c(\{b, p\})$, any set of justifiable EU preferences must contain $\succsim_m$ such that $p \succsim_m b$. Thus, $\mathcal{N}^{EU}_\succ$ cannot be a set of justifiable EU preferences. The same argument applies to any compact, convex set of EU preferences that excludes something in $\mathcal{M}^{EU}_\succ$. 

Now we turn to the maximal set. We will modify the construction in the proof of Theorem \ref{thm:EU}. As before, take $p$ in the interior of $\Delta(Z)$. The preferences in the maximal set are the EU preferences that (1) have an indifference curve through $p$ that is a supporting hyperplane of $B(p)$, (2) prefer $B(p)$ to $p$, and (3) are weakly $D$-monotone. The set of all such preferences is closed and convex. Call it $\mathcal{M}^{EU}_{max}$. $\mathcal{M}^{EU}_{max}$ satisfies (\ref{eq:intersect}), so we can use the argument from the proof of Theorem \ref{thm:EU} to show that it works. It is maximal because any preference outside it must violate $D$-monotonicity or prefer $p$ to something in $B(p)$, which is impossible. 

\subsection{Proof of Corollary \ref{cor:comp_stat_EU}}
If $B_1(p) \supset B_2(p)$, the set of supporting hyperplanes of $B_1(p)$ is strictly smaller than the set of supporting hyperplanes of $B_2(p)$. Thus, the set of EU preferences that prefer $B_1(p)$ to $p$ is smaller than the set of EU preferences that prefer $B_2(p)$ to $p$. This is still true when we remove all the preferences that violate weak $FOSD$-monotonicity. We end up with the maximal sets of EU preferences, so the maximal set for DM 1 must be smaller than that for DM 2. The converse is obvious. 

\subsection{Proof of Proposition \ref{prop:unobserved_EU}}
Suppose that $(\succsim, c)$ has an EU justifiability representation for some preference $\succsim$. Suppose that the minimal set of justifiable preferences for $(\succsim, c)$ is not a singleton. Then, $\{q \in \Delta(Z): p \in c(\{p, q\})\}$ will not be a half-space. We show there is a unique EU preference such that $\{p, q\} = c(\{p, q\})$ implies $p \sim q$, and $B(p)$ is convex. First, consider the set of lotteries $q$ such that $\{p, q\} = c(\{p, q\})$. For at least one such $q$---call it $q^*$---there must be $m_1, m_2 \in \mathcal{M}^{EU}$ such that
\[m_1'q^* > m_1'p \text{ and } m_2'q^* < m_2'p.\]
Suppose not. Then all $m \in \mathcal{M}^{EU}$ must agree on the subset of $\{q \in \Delta(Z): p \sim q\}$ that is strictly worse than $p$. But then there is some $m^*$ that minimizes (in the set-inclusion sense) the subset of $\{q \in \Delta(Z): p \succsim q\}$ that is strictly worse than $p$. Any decision justified by some $m \in \mathcal{M}^{EU}$ will also be justified by $m^*$, so every justifiable preference but $m^*$ can be discarded. This contradicts the assumption that $\mathcal{M}^{EU}_\succ$ is non-singleton. 

Now consider perturbing $q^*$ to $\tilde{q}$ along $\{q \in \Delta(Z): p \sim q\}$. If $\tilde{q}$ is sufficiently close to $q^*$, $m_1$ and $m_2$ will continue to disagree, so we will have $\{\tilde{q}, p\} = c(\{\tilde{q}, p\})$ as well as $\tilde{q} \sim p$. By perturbing $q^*$ in each possible direction, we obtain a set of perturbations that pins down a unique hyperplane. This hyperplane is the indifference curve of the true preference. 

Now, we just need to resolve the direction of preference. Recall from the proof of Theorem \ref{thm:EU} that $B(p)$ is convex. Now consider
\[\{q \in \Delta(Z): p \precsim q \text{ and } p \notin c(\{p, q\})\}.\]
This set is obtained by reversing the direction of preference in the definition of $B(p)$. We need to show that it is \textit{not} convex. It suffices to show there are $q_1, q_2 \succ p$ and $\alpha \in (0, 1)$ such that $\{q_1\} = c(\{p, q_1\})$ and $\{q_2\} = c(\{p, q_2\})$ but $\{p\} = c(\{p, \alpha q_1 + (1-\alpha)q_2\})$. That is, $q_1, q_2 \notin W(p)$ but $\alpha q_1 + (1-\alpha)q_2 \in W(p)$. Take any two points $q_1, q_2 \succ p$ that lie on the boundary of $W(p)$, but don't lie on the same face of $W(p)$. (What if there are no such points? Then $W(p)$ must have only one face. This only happens if the minimal set of justifiable preferences is a singleton, which we have already ruled out.) Any linear combination of $q_1, q_2$ must belong to $W(p)$. This is exactly what we needed. The direction of preference is indeed pinned down by convexity of $B(p)$. Pinning down an indifference curve and the direction of preference is enough to pin down an EU preference, so we are done. 

\subsection{Proof of Theorem \ref{thm:unknown_true_pref}}
First, we need a bit of notation for cycles and chains. If $(a, b, d)$ is a cycle, write $a \; C \; b \; C \; d$. For any chain $(x_1, \ldots, x_k)$, we have $x_1 \; C \; x_2 \; \cdots \; x_{k-1} \; C \; x_k$. 

\begin{lemma}
\label{lem:construct_true_pref}
Define a binary relation $\succ$ as follows. First, say $a \succ b$ if $(a, b)$ is in the transitive closure of $C$. Second, if $a$ and $b$ have not yet been ranked, say $a \succ b$ if $a = c(\{a, b\})$, and $b \succ a$ otherwise. The result is a strict preference. 
\end{lemma}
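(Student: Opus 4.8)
The plan is to split the claim into two cleanly separated properties and then put all the real work into one of them. The key reduction is the observation that a \emph{complete} and \emph{acyclic} binary relation is automatically a strict preference. Acyclicity forbids loops $a \succ a$ and two-cycles, so $\succ$ is irreflexive and asymmetric; and it forces transitivity, since if $a \succ b$ and $b \succ d$ but $\neg(a \succ d)$, completeness would give $d \succ a$ and hence the cycle $a \succ b \succ d \succ a$. Thus it suffices to prove that the relation defined in the statement is complete and acyclic.

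Completeness is immediate. Step~1 orders exactly the pairs in the transitive closure of $C$, and step~2 is applied precisely to the pairs it leaves unranked; since $c$ is a choice function, $c(\{a,b\})$ is a single element of $\{a,b\}$, so step~2 always assigns a direction. Hence every pair of distinct alternatives is ordered in at least one direction, and the two steps never act on the same pair, so $\succ$ is well defined and complete. That no pair is ordered in \emph{both} directions is subsumed by acyclicity.

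Everything therefore rests on acyclicity, which is where IEA enters and which I expect to be the main obstacle. I would argue by contradiction from a \emph{minimal} cycle $x_1 \succ x_2 \succ \cdots \succ x_n \succ x_1$ with distinct $x_i$; set $X := \{x_1,\dots,x_n\}$ and $x_k := c(X)$. The target is to show that the cycle forces $x_k$ to be revealed excluded by a subset of $X$. Once that is established, IEA applied with $A = \{x_k\}$ and $B = X$ gives $c(X) = c(X \setminus \{x_k\})$, which is impossible because $x_k \notin X \setminus \{x_k\}$. So the whole difficulty is to manufacture a revealed exclusion of the specific element $x_k$ out of the edges of the cycle.

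The ingredients for this are available but must be assembled carefully. Each step-1 edge $x_i \, C \, x_{i+1}$ comes from a three-element cycle, which directly produces revealed exclusions (the top element is revealed excluded by the bottom, via Definition~\ref{def:rev_exc} and Proposition~\ref{prop:rev_exc}) and pins the true-preference direction through the inclusion $C \subseteq P$: unwinding the definition of the revealed-preference relation $P$ on a cycle $(a,b,d)$, deleting $d$ from $\{a,b,d\}$ changes choice and yields $b \, P \, d$, while deleting $b$ changes choice and yields $a \, P \, b$. Each step-2 edge, by contrast, is a genuine pairwise defeat. The hard part is that chain order ($C^+$) and pairwise defeat need not coincide, so a \emph{mixed} cycle alternating between the two kinds of edges does not obviously collapse to a $P$-cycle or to an almost-WARP set with $x_k$ at the top. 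I would therefore first show that IEA implies $P$ is acyclic (ruling out every pure $C$-cycle outright, and justifying an appeal to Corollary~\ref{cor:unknown_true_pref}), and then treat the mixed case by using minimality of the cycle to confine the relevant chain and almost-WARP structure inside $X$ and convert the alternating edges around $x_k$ into a single revealed-exclusion statement. Carrying out that last conversion is the delicate step I expect to cost the most effort.
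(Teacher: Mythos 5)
Your reduction (completeness plus acyclicity suffices for a strict preference) and your choice of contradiction mechanism (manufacture a revealed exclusion inside the cycle, then let IEA delete a chosen element) are both sound, and they match the skeleton of the paper's own proof, which likewise notes completeness is immediate and then rules out cycles via IEA. But there is a genuine gap exactly where you flag it: the ``conversion'' of a mixed cycle into a revealed exclusion is never carried out, and that conversion is the entire content of the lemma. The paper's proof spends all of its effort there: it first shows some adjacent pair of the cycle must lie in $\text{tr}(C)$; then it replaces each $\text{tr}(C)$-edge by its underlying chain of direct $C$-edges, \emph{inserting new elements} $y_1, \ldots, y_k$ with $x_i \; C \; y_1 \; C \cdots y_k \; C \; x_{i+1}$ and matching pairwise choices; then it deletes the pairwise-choice-only edges one at a time, verifying at each deletion that the cycle does not break (the argument involving $x_{i-1}$, $x_i$, $y_1$). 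Only after this surgery does it reach a configuration---every adjacent pair a direct $C$-edge consistent with pairwise choice---in which every element of the expanded menu is revealed excluded by a singleton, so that IEA empties the menu and yields the contradiction.

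Moreover, your specific plan for the missing step runs against a structural obstacle. You want minimality of the cycle to ``confine the relevant chain and almost-WARP structure inside $X$'' and to show that $c(X)$ itself is revealed excluded by a subset of $X$. But revealed exclusion by a singleton $\{b\}$ requires the excluded item to come before $b$ in a chain, and chains are built from three-element cycles whose third elements (and, for $\text{tr}(C)$-edges, whose intermediate links) generally lie \emph{outside} $X$; nor does the cycle hypothesis pin down any choice data on subsets of $X$ beyond its own edges, so an almost-WARP set inside $X$ cannot be certified either. This is why the paper's argument moves in the opposite direction from yours: it \emph{expands} the menu beyond $X$ until the exclusions become visible, and derives the contradiction on the expanded menu rather than on $X$. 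Two smaller points: the detour through ``IEA implies $P$ is acyclic'' plus Corollary \ref{cor:unknown_true_pref} is not circular (that corollary rests on Theorem \ref{thm:main}), but it is itself unproven, of essentially the same difficulty as the lemma, and would still leave mixed cycles untreated; and you should not lean on Proposition \ref{prop:rev_exc}, since it presupposes that a justifiability representation exists, which is what this lemma is helping to establish.
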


\begin{proof}
Clearly, the result is complete. Suppose we have a cycle in $\succ$. We can write it $(x_1, \ldots, x_n)$ where $x_1 = x_n$. For each adjacent pair $(x_i, x_{i+1})$ there are two possibilities: (1) $(x_i, x_{i+1}) \in \text{tr}(C)$, or (2) $(x_i, x_{i+1}), (x_i,x_{i+1}) \notin \text{tr}(C)$ and $x_i = c(\{x_i, x_{i+1}\})$. 

There will be some $(x_i, x_{i+1}) \in \text{tr}(C)$. Otherwise, all the adjacent pairs would be ranked by in accordance with pairwise choice, so pairwise choice would have to be cyclic. But if pairwise choice is cyclic, there is a cycle involving three elements, at least two of which ($x_i$ and $x_{i+1}$) are adjacent. Suppose that $(x_{i+1}, x_i) \in \text{tr}(C)$, but $(x_i, x_{i+1}) \notin \text{tr}(C)$. This is inconsistent with $x_i \succ x_{i+1}$. We conclude that $(x_i, x_{i+1}) \in \text{tr}(C)$. We must have a sequence $(y_1, \ldots, y_k)$ such that $x_i \; C \; y_1 \; C \; \cdots \; y_k \; C \; x_{i+1}$ as well as $x_i = c(\{x_i, y_1\})$, $y_j = c(\{y_j, y_{j+1}\})$ and $y_k = c(\{y_k, x_{i+1}\})$. We can expand the cycle by adding $(y_1, \ldots, y_k)$ between $x_i$ and $x_{i+1}$. 

We have $x_i \; C \; y_1$. Suppose that $(x_{i-1}, x_i) \notin \text{tr}(C)$. If $y_1 = c(\{x_{i-1}, y_1\})$, we have a cycle involving $x_{i-1}, x_i, y_1$, so $x_{i-1}$ and $x_i$ must be ranked by $\text{tr}(C)$. But we have assumed that $(x_{i-1}, x_i) \notin \text{tr}(C)$. Given this, $(x_i, x_{i-1}) \in \text{tr}(C)$ is inconsistent with $x_{i-1} \succ x_i$. We conclude that $x_{i-1} = c(\{x_{i-1}, y_1\})$. Consider deleting $x_i$. We need to make sure $x_{i-1} \succ y_1$. Suppose we only have $y_1 \succ x_{i-1}$. This is consistent with $x_{i-1} = c(\{x_{i-1}, y_1\})$ only if $(y_1, x_{i-1}) \in \text{tr}(C)$. But since $(x_i, y_1) \in C$, this implies $(x_i, x_{i-1}) \in \text{tr}(C)$. We have already ruled this out. We have $x_{i-1} \succ y_1$ as desired: deleting $x_i$ didn't break the cycle. 

By proceeding in this way, we can get rid of all the adjacent pairs in the cycle that don't belong to $\text{tr}(C)$. The procedure will finish at some point because we started with a finite number of such pairs, and we aren't introducing any new ones. (We are only introducing pairs of items ranked by $C$.) We end up with a cycle in which (1) $x_i \; C \; x_{i+1}$ and (2) $x_i = c(\{x_i, x_{i+1}\})$ for every adjacent pair $(x_i, x_{i+1})$. By (1), every pair of items in the cycle (adjacent or not) is in $\text{tr}(C)$. In particular, $(x_{i+1}, x_i) \in \text{tr}(C)$ for every adjacent pair $(x_i, x_{i+1})$. Since $x_i = c(\{x_i, x_{i+1}\})$, $x_{i+1}$ is revealed excluded by $x_i$. Consider the menu consisting of all the items in the expanded cycle. By IEA, all the items in the menu can be removed (at once) without affecting choice. Since something must be chosen from the menu, we have a contradiction. 
\end{proof}

We let $\mathcal{M}$ be the set of strict preferences consistent with revealed exclusion. That is, $\succ_m$ belongs to $\mathcal{M}$ if and only if
\[a \text{ is revealed excluded by } B \quad \Longrightarrow \quad b \succ_m a \text{ for some } b \in B.\]

It remains to show that $(\succ, \mathcal{M})$ deliver the correct predictions. First, suppose that $c(A) \notin M(A)$: there is no $\succ_m$ such that $c(A) \succ_m A \setminus \{c(A)\}$. To see why this doesn't happen, recall the tree construction from the proof of Theorem \ref{thm:main}.

\begin{lemma}
\label{lem:exc2tree}
If no $\succ_m \in \mathcal{M}$ has $y \succ_m X$, then there is a revealed-exclusion-tree starting at $y$ and ending at $X$. 
\end{lemma}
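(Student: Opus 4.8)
The plan is to prove the contrapositive: assuming there is no revealed-exclusion-tree from $y$ to $X$, I construct a strict preference $\succ_m \in \mathcal{M}$ with $y \succ_m X$. This mirrors the sufficiency construction in Theorem \ref{thm:main} almost verbatim, with the revealed-exclusion menu-item relation---write $B \; RE \; a$ when $a$ is revealed excluded by $B$---playing the role that exclusion-from-below $\triangleright$ played there. The one structural difference is that $RE$ need not be transitive, so, exactly as in the proof of Proposition \ref{prop:dominance}, I pass to its transitive closure, which by Lemma \ref{lem:closure} is precisely what revealed-exclusion-trees (the $Q$-trees of Definition \ref{def:tree} with $Q = RE$) encode.

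First I would record the key properness fact: the transitive closure of $RE$ admits no tree ending at $\emptyset$. This is the cycle-elimination argument already carried out in Lemma \ref{lem:construct_true_pref}. A tree from some node to $\emptyset$ produces a finite menu, each element of which is revealed excluded by (a subset consisting of) its parents; IEA then removes the whole menu at once, contradicting nonemptiness of $c$. This is what guarantees that the order-extension below never runs into a cycle.

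Next the partition. Set
\[L := X \cup \{z \in \mathcal{A} : \text{there is a revealed-exclusion-tree from } z \text{ to a subset of } X\}, \qquad H := \mathcal{A} \setminus L,\]
and declare $h \succ_m l$ for every $h \in H$, $l \in L$. Since $y \notin X$ and there is no tree from $y$ to $X$, we have $y \in H$ while $X \subseteq L$, so $y \succ_m X$ is forced once the global ordering is in place. The essential observation is that no constraint crosses upward: if $B \; RE \; a$ with $B \subseteq L$, then splicing the tree of each $b \in B$ (to a subset of $X$) onto $a$---taking $B$ as the parent set of the new bottom node $a$---yields a tree from $a$ to $X$, so $a \in L$. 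Consequently every $RE$-constraint with $a \in L$ either lies wholly inside $L$ or has a parent in $H$ (already above $a$), while every constraint with $a \in H$ has its parent set meeting $H$. On each block the relevant constraints are then captured by $RE$ restricted to that block (for $a \in H$, replace the menu $B$ by the nonempty set $B \cap H$); this restriction is proper and, by the first step, has proper transitive closure, so I extend it to a total order exactly as in Theorem \ref{thm:main}---the Szpilrajn-style Zorn's-lemma argument together with Lemma \ref{lem:no_cycle} to rule out a cycle when adjoining each missing comparison. Concatenating the orders on $H$ (above) and $L$ (below) gives a total order $\succ_m$ respecting every revealed-exclusion constraint, hence $\succ_m \in \mathcal{M}$, with $y \succ_m X$.

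The main obstacle is the properness step. I will need to confirm that ``tree ending at $X$'' genuinely means that all top nodes lie in $X$, rather than being spurious repeated-descendant (cyclic) leaves---which is exactly what the IEA-based no-$\emptyset$-tree fact secures---and that this properness survives restriction to the blocks $H$ and $L$, so that the order extension never stalls. Everything else is a transcription of the arguments already established for Theorem \ref{thm:main} and Proposition \ref{prop:dominance}.
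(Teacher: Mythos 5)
Your overall architecture is sound and genuinely different from the paper's: you never invoke the true preference $\succ$, instead running the full Szpilrajn/Zorn machinery of Theorem \ref{thm:main} on the revealed-exclusion relation within each block. But there is a concrete gap at precisely the step you flag as the main obstacle, and the justification you sketch for it is wrong. You claim that the transitive closure of the restricted relation on $H$ --- the relation whose pairs are $(B \cap H, a)$ for $a \in H$ --- is proper ``by the first step,'' i.e., by the fact that the full relation $RE$ admits no tree ending at $\emptyset$. That inference fails: truncating menus to $B \cap H$ can create cycles that are not $RE$-cycles. An $RE_H$-tree from some $w \in H$ to $\emptyset$ lifts, once you restore the discarded parents lying in $L$, to an $RE$-tree from $w$ ending at a subset of $L$ --- not at $\emptyset$ --- so the IEA-based no-$\emptyset$-tree fact says nothing about it. (For the $L$ block there is no issue, since the constraints you keep there are unmodified $RE$-pairs, and an $RE_L$-tree to $\emptyset$ really is an $RE$-tree to $\emptyset$.)

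The gap is fillable with a tool you already stated. Given an $RE_H$-tree from $w \in H$ to $\emptyset$, restore the $L$-parents to obtain an $RE$-tree from $w$ ending at some $W \subseteq L$; each element of $W$ is in $X$ or has an $RE$-tree to a subset of $X$ (by definition of $L$), so splicing --- exactly your ``essential observation'' --- produces an $RE$-tree from $w$ to a subset of $X$, forcing $w \in L$ and contradicting $w \in H$. With that patch your proof goes through. You should also know that the paper avoids all of this machinery: it exploits the fact that revealed exclusion is aligned with the true preference constructed in Lemma \ref{lem:construct_true_pref} (if $B$ reveals $a$ excluded, then $a \succ B$), so ordering each block by \emph{reverse} $\succ$ satisfies every within-block constraint automatically --- no transitive closure, no properness argument, no Zorn's lemma. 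Your route is heavier but establishes the result without appealing to that alignment; the paper's is two lines once the alignment is observed.
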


\begin{proof}
We construct a candidate $\succ_m$ such that $y \succ X$. Let $B_0 = X$. Let $B_i$ be the union of $B_{i-1}$ and the set of items revealed excluded by subsets of $B_{i-1}$. Eventually, we reach $I$ such that $B_I = B_{I-1}$. Call this final set $B$. Rank all the items in $B$ by reverse-$\succ$ order. Let $T = \mathcal{A} \setminus B$. Rank all the items in $T$ by reverse-$\succ$ order. Finally, impose $t \succ_m b$ for all $t \in T, b \in B$. 

Notice that $y \in T$ unless there is a revealed-exclusion-tree starting at $y$ and ending at $X$. It remains to show that $\succ_m \in \mathcal{M}^L$. Suppose that $t \in T$ is revealed excluded by $T' \cup B'$, where $T' \subset T$ and $B' \subseteq B$. We must have $t \succ T'$, which implies $T' \succ_m t$. There is no problem unless $T'$ is empty. But then $t$ is revealed excluded by a subset of $B$, so must be in $B$---contradiction. Now suppose that $b \in B$ is revealed excluded by $B' \subset B$. We must have $b \succ B'$, which implies $B' \succ_m b$. There is no problem. $\succ_m$ is consistent with revealed exclusion. 
\end{proof}

By Lemma \ref{lem:exc2tree}, there must be a tree starting at $c(A)$ and ending at $A' \subseteq A \setminus \{c(A)\}$ in which each node that isn't in $A'$ is revealed excluded by its parents. Consider the set consisting of all the nodes in the tree, plus anything else in $A$. Since $c(A)$ is revealed excluded by its parents, IEA implies that $c(A)$ is not chosen from this set. Now remove everything that is not in $A$. IEA says that choice is unchanged. That is, $c(A)$ can't be chosen from $A$---contradiction. 

Only one more thing could go wrong: $a, c(A) \in M(A)$ and $a \succ c(A)$. 

\begin{lemma}
\label{lem:exc2pref}
If $a \in A$ is not revealed excluded by any subset of $A$, and if $a \neq c(A)$, then $c(A) \succ a$.
\end{lemma}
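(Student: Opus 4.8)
The plan is to prove the contrapositive by strong induction on $|A|$: assuming $a \neq c(A)$ and $a \succ c(A)$ (write $c := c(A)$ and use completeness of $\succ$ to read $\neg(c(A)\succ a)$ as $a \succ c$), I will show that $a$ must be revealed excluded by some subset of $A$, contradicting the hypothesis. The split is on the pairwise comparison between $a$ and $c$, which is exactly the dichotomy built into the definition of $\succ$ in Lemma \ref{lem:construct_true_pref}.

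First I would handle the case $c = c(\{a,c\})$, i.e.\ $c$ pairwise-defeats $a$. Since the second clause defining $\succ$ can deliver $a \succ c$ only when $a = c(\{a,c\})$, the relation $a \succ c$ must instead come from the first clause, so $(a,c) \in \mathrm{tr}(C)$; equivalently $a$ comes before $c$ in a chain. Together with $c = c(\{a,c\})$ this is precisely clause 2 of Definition \ref{def:rev_exc}, so $a$ is revealed excluded by $\{c\} \subseteq A$, the desired contradiction. This reduces everything to the case $a = c(\{a,c\})$, so that $a$ pairwise-defeats $c = c(A)$.

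In that remaining case the binary menu $\{a,c\}$ chooses $a$ while $A$ chooses $c$, so WARP fails somewhere in $\mathcal{F}(A)$. I would take a minimal (by inclusion) subset $A' \subseteq A$ on which WARP fails. By minimality WARP holds on every proper subset of $A'$, so $A'$ is either a cycle or an almost-WARP set. Either way some $x \in A'$ is revealed excluded by a subset of $A'$: if $A'$ is a cycle $(x_1,x_2,x_3)$ then $x_1$ is revealed excluded by $\{x_3\}$; if $A'$ is almost-WARP then its chosen item $c(A')$ is pairwise-defeated (the content of the WARP violation, as noted after the almost-WARP definition), and any such defeater $x$ is revealed excluded by $A' \setminus \{x\}$ via clause 1. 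If $x = a$, this already contradicts the hypothesis. Otherwise $x \neq a$, and IEA lets me delete $x$ without changing choice, so $c(A \setminus \{x\}) = c$. Now $A \setminus \{x\}$ is strictly smaller, still contains $a$, and $a \neq c = c(A \setminus \{x\})$; since $a \succ c$ still holds, the inductive hypothesis forces $a$ to be revealed excluded by a subset of $A \setminus \{x\} \subseteq A$, again a contradiction.

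The main obstacle is this last case: engineering the right element to strip away so the induction closes. The key moves are (i) recognising that a minimal WARP-violating subset is always a cycle or almost-WARP and hence always exhibits a concrete revealed-excluded element, and (ii) using IEA to remove that element while preserving $c(A)$, which keeps $a \succ c(A)$ intact and passes the problem to a smaller menu. A secondary point to verify carefully is that in an almost-WARP set the chosen item is genuinely pairwise-defeated, guaranteeing a defeater to serve as the revealed-excluded item; the base cases $|A| \le 2$ are immediate, since there $c(A)$ is the pairwise choice and only the first (chain) case can arise.
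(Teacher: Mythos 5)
Your proof is correct, and it relies on the same two key facts as the paper's: that a minimal WARP-violating subset must be a cycle or an almost-WARP set and therefore contains an item that is revealed excluded by a subset of it, and that once $c(A)$ pairwise-defeats $a$, the relation $a \succ c(A)$ could only come from $\mathrm{tr}(C)$, which together with the pairwise defeat is exactly clause 2 of Definition \ref{def:rev_exc}. The organization is genuinely different, however. The paper performs a single bulk purge: it removes \emph{all} items revealed excluded by subsets of $A$ at once (forming $A^*$, with $c(A^*)=c(A)$ by one application of IEA), then runs an induction over subset cardinality to show that choice on $\mathcal{F}(A^*)$ satisfies WARP outright, so that $c(A)=c(\{c(A),a\})$ follows because $a, c(A)\in A^*$; the chain case is handled last. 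You instead split on the pairwise comparison between $a$ and $c(A)$ at the outset and, in the bad case, peel off one revealed-excluded item at a time, closing a strong induction on $|A|$. The paper's route buys a stronger intermediate fact---WARP holds on the entire purged menu---which is the template it reuses in the proofs of Proposition \ref{prop:rev_exc} and Proposition \ref{prop:double}; your route is more local, tracking only the pair $(a, c(A))$ and letting the induction do the bookkeeping. One small point you should make explicit: the item $x$ you delete might a priori be $c(A)$ itself, but IEA rules this out immediately, since $c(A)=c(A\setminus\{x\})$ would then force $c(A)$ to lie in $A\setminus\{c(A)\}$; with that observation the deletion preserves both $a$ and $c(A)$, and your inductive step goes through.
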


\begin{proof}
We start by removing the items in $A$ that are revealed excluded by a subset of $A$. Call the resulting set $A^*$. By assumption, $a$ is not removed. By IEA, choice is unchanged: $c(A) = c(A^*)$ Take any $A' \subseteq A^*$ such that $|A'| = k$. Suppose that choice on the proper subsets of $A'$ satisfies WARP. (This is trivially true if $|A'| = 3$. If choice on $A'$ violates WARP, then $A'$ is an almost-WARP set or a cycle. Either way, some item in $A'$ is revealed excluded by a subset of $A'$, which contradicts the definition of $A^* \supseteq A$. We conclude that choice on $A^*$ satisfies WARP, so $c(A) = c(A^*)$ is chosen whenever it is available. In particular, $c(A) = c(\{c(A), a\})$. This implies $c(A) \succ a$ unless $(a, c(A)) \in \text{tr}(C)$. But then $a$ is revealed excluded by $c(A) \in A$, which contradicts the assumption about $a$. Thus, $c(A) \succ a$.
\end{proof}

If $a \in M(A)$, then $a$ is not revealed excluded by any subset of $A$. The Lemma delivers a contradiction. 

\subsection{Proof of Corollary \ref{cor:maximal}}
The proof of Theorem \ref{thm:unknown_true_pref} constructs precisely this representation. It is obvious that $\mathcal{M}$ is maximal. If any representation has a justifiable preference that is not in $\mathcal{M}$, it has a justifiable preference that violates revealed exclusion. That is, $a \succ_m B$ even though $a$ is revealed excluded by $B$. We know that $a \succ B$ in every representation, so this representation predicts $c(B \cup \{a\}) = a$. By IEA, this cannot be the case. 

For uniqueness of $\succ$, consider some representation $(\succ', \mathcal{M})$ such that $a \succ b$ but $b \succ' a$. We know that all representations agree on pairs of items in $\text{tr}(C)$. Thus, $(a, b), (b, a) \notin \text{tr}(C)$, and $c(\{a, b\}) = a$. To get the right prediction, the second representation cannot have any justifiable preference $\succ_m$ such that $b \succ_m a$. Notice that $\mathcal{M}$ contains a preference $\succ_{bad}$ that is exactly opposite $\succ$. $\succ_{bad}$ is consistent with revealed exclusion because $a \succ B$, so $B \succ_{bad} a$, whenever $a$ is revealed excluded by $B$. Thus, $\mathcal{M}$ contains at least one $\succ_m$ such that $a \succ_m b$. $(\succ', \mathcal{M})$ is not a representation---contradiction. 

\subsection{Proof of Proposition \ref{prop:rev_exc}}
Suppose there is no justifiable preference in any representation that ranks $a$ over $B$. Thus, there is no justifiable preference in the canonical representation that ranks $a$ over $B$. By Lemma \ref{lem:exc2tree}, there is a revealed-exclusion-tree starting at $a$ and ending at $A \subseteq B$. Since $y$ is revealed excluded by $X$ only if $y \succ X$, each node must be ranked strictly above its parents. Thus, we have $a \succ A$ as well as $a \neq c(\{a\} \cup A)$. 

We can require $A$ to be minimal, meaning $a = c(\{a\} \cup A')$ for any proper subset $A'$ of $A$. We show that $a$ is revealed excluded by $A$. 

First, choice on all the proper subsets of $A \cup \{a\}$ must satisfy WARP. Suppose it doesn't. Then, we can find some cycle or almost-WARP set $A' \subset A \cup \{a\}$. We conclude that some item in $A \cup \{a\}$ is revealed excluded by some subset of $A \cup \{a\}$. Since $a = c(\{a\} \cup A')$ for every $A' \subset A$, $a$ cannot be revealed excluded. Suppose $a' \neq a$ is revealed excluded. By IEA, removing $a'$ from $\{a\} \cup A$ has no effect on choice. Since $a \neq c(\{a\} \cup A)$, $a \neq c(\{a\} \cup A \setminus \{a'\})$. This contradicts minimality of $A$. We conclude that $A \cup \{a\}$ is almost-WARP. Since $a$ is chosen from every proper subset containing it, $a$ is revealed excluded by $A$. 

Now suppose that $A = \{a'\}$. Recall that $a \succ a'$ and $c(\{a, a'\}) = a'$. This happens in the canonical representation only if $(a, a') \in \text{tr}(C)$. In that case, $a$ is revealed excluded by $a'$. 

\subsection{Proof of Proposition \ref{prop:double}}

Since $c_L$ satisfies IEA, we construct $\succ$ in accordance with Lemma \ref{lem:construct_true_pref}. We then let $\mathcal{M}^L$ be the set of strict preferences consistent with revealed exclusion in $L$. That is, $\succ_m \in \mathcal{M}^L$ if and only if
\[a \text{ is revealed excluded by } B \text{ in } L \quad \Longrightarrow \quad b \succ_m a \text{ for some } b \in B.\]
For $\mathcal{M}^H$, we need to define a new relation $R$ that captures replacement as well as revealed exclusion.

\begin{definition}[Relation $R$]
Say that $Z \; R \; z$ if either of the following holds:
\begin{enumerate}
\item $z$ is revealed excluded in $L$ by $Z$.
\item $z$ is replaced in $Z \cup \{z\}$, and no item in $Z$ is revealed excluded in $L$ by any subset of $Z \cup \{z\}$.
\end{enumerate}
\end{definition}

Let $\mathcal{M}^H$ be the set of strict preferences consistent with $R$. That is, $\succ_m \in \mathcal{M}^H$ if and only if
\[B \; R \; a \quad \Longrightarrow \quad b \succ_m a \text{ for some } b \in B.\]
Notice that each $\succ_m \in \mathcal{M}^H$ is consistent with replacement:
\[a \text{ is replaced in B} \quad \Longrightarrow \quad b \succ_m a \text{ for some } b \in B \setminus \{a\}.\]
To see why, suppose $a = c_L(B) \neq c_H(B)$. Remove any items in $B$ that are revealed excluded in $L$ by some subset of $B$. Call the result $B^*$. Since $c_L$ satisfies IEA and $c_H$ satisfies IREA, we have $a = c_L(B) = c_L(B^*)$ and $c_H(B) = c_H(B^*)$. Thus, $a$ is replaced in $B^*$ as well as $B$. For $\succ_m \in \mathcal{M}^H$, there will be some $b \in B^* \subseteq B$ such that $b \succ_m a$. Since no item in $B^*$ is revealed excluded in $L$ by any subset of $B^*$, this is exactly what we needed. 

It remains to show that $(\succ, \mathcal{M}^L)$ and $(\succ, \mathcal{M}^H)$ deliver the correct predictions. Suppose $c_L(A) \notin M_L(A)$. The argument from the proof of Theorem \ref{thm:unknown_true_pref} rules this out---just replace ``revealed excluded'' with ``revealed excluded in $L$.'' Now suppose $c_H(A) \notin M_H(A)$. A similar argument rules this out, but we need to modify the tree structure. Instead of requiring each node to be revealed excluded in $L$ by its parents, we require each node's parents to stand in relation $R$ to it. We need a result analogous to Lemma \ref{lem:exc2tree}. 

\begin{lemma}
If no $\succ_m \in \mathcal{M}^H$ has $y \succ_m X$, then there is an $R$-tree starting at $y$ and ending at $X$. 
\end{lemma}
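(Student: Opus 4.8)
The plan is to copy the construction from the proof of Lemma~\ref{lem:exc2tree} almost verbatim, replacing ``revealed excluded'' by the relation $R$ and $\mathcal{M}$ by $\mathcal{M}^H$. First I would set $B_0 := X$ and, for $i > 0$, let $B_i$ be the union of $B_{i-1}$ with the set of items $z$ such that $Z \; R \; z$ for some $Z \subseteq B_{i-1}$; let $B := \bigcup_{i \geq 0} B_i$ be the resulting closure. By construction $z \in B$ if and only if there is an $R$-tree starting at $z$ and ending at $X$, and $X = B_0 \subseteq B$. I would then put $T := \mathcal{A} \setminus B$, rank $B$ and $T$ internally by reverse-$\succ$ order, and impose $t \succ_m b$ for every $t \in T$ and $b \in B$; this defines a total order $\succ_m$. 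If I can show $\succ_m \in \mathcal{M}^H$, the lemma follows: were $y \in T$, then $y \succ_m x$ for every $x \in X \subseteq B$, so $y \succ_m X$, contradicting the hypothesis; hence $y \in B$, which is exactly the existence of the desired $R$-tree.

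The whole argument rests on the fact that $Z \; R \; z$ implies $z \succ Z$, and establishing this for the replacement clause of $R$ is the step I expect to be the main obstacle. For the first clause it is immediate: if $z$ is revealed excluded in $L$ by $Z$, then $z$ is the top element of a cycle or an almost-WARP set, so $z \succ Z$ by the construction of $\succ$ in Lemma~\ref{lem:construct_true_pref}. For the second clause, suppose $z$ is replaced in $Z \cup \{z\}$ (so $z = c_L(Z \cup \{z\})$) and no item of $Z$ is revealed excluded in $L$ by any subset of $Z \cup \{z\}$. Fix $w \in Z$. Since $w$ is not revealed excluded by any subset of $(Z \cup \{z\}) \setminus \{w\}$, Proposition~\ref{prop:rev_exc} and maximality of the canonical set $\mathcal{M}^L$ (Corollary~\ref{cor:maximal}) give some $\succ_m \in \mathcal{M}^L$ ranking $w$ above $(Z \cup \{z\}) \setminus \{w\}$, i.e.\ $w \in M_L(Z \cup \{z\})$. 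Because $(\succ, \mathcal{M}^L)$ represents $c_L$, we have $z = c_L(Z \cup \{z\}) = \argmax(M_L(Z \cup \{z\}), \succ)$; as $\succ$ is antisymmetric and $w \in M_L(Z \cup \{z\})$, this forces $z \succ w$. Since $w \in Z$ was arbitrary, $z \succ Z$.

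With $z \succ Z$ established, consistency of $\succ_m$ with $R$ follows by the same case split as in Lemma~\ref{lem:exc2tree}. Given $Z \; R \; z$, I must exhibit $b \in Z$ with $b \succ_m z$. If $z \in T$, then $Z \not\subseteq B$ (otherwise $z$ would have entered $B$ at the next iteration), so any $b \in Z \cap T$ has $b, z \in T$ and $z \succ b$, whence $b \succ_m z$ by reverse ordering. If $z \in B$ and $Z \cap T \neq \emptyset$, any $b \in Z \cap T$ satisfies $b \succ_m z$ because $T$ is ranked above $B$. If $z \in B$ and $Z \subseteq B$, then reverse ordering within $B$ together with $z \succ Z$ again yields $b \succ_m z$ for every $b \in Z$. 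In every case $\succ_m$ respects $R$, so $\succ_m \in \mathcal{M}^H$, and the argument is complete.
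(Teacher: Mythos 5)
Your proof is correct, and its scaffolding is the paper's own: the same iterative closure ($B_0 = X$, then repeatedly add items that some subset of the current set stands in relation $R$ to), the same split into $B$ and $T = \mathcal{A}\setminus B$ with reverse-$\succ$ orderings inside each block and $T$ placed above $B$, and the same exhaustive case analysis verifying that the constructed order respects $R$. Where you genuinely depart is in the step the paper itself flags as the only new work relative to Lemma \ref{lem:exc2tree}: showing that the replacement clause of $R$ implies $z \succ Z$. The paper does this by hand: for $z' \in Z$ it rules out $z' = c_L(\{z, z'\})$ via an induction showing that $c_L$ satisfies WARP on every subset of $Z \cup \{z\}$ (any violation would produce a cycle or almost-WARP set, hence a revealed exclusion inside $Z \cup \{z\}$, contradicting the second clause of $R$), and it rules out a chain reversal because that would make $z'$ revealed excluded by $\{z\}$. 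You argue top-down instead: each $w \in Z$, being unexcluded, is top-ranked in $Z \cup \{z\}$ by some member of $\mathcal{M}^L$ (Proposition \ref{prop:rev_exc} together with Corollary \ref{cor:maximal}), so $w \in M_L(Z \cup \{z\})$, and since $(\succ, \mathcal{M}^L)$ is a justifiability representation of $c_L$, the chosen item $z = c_L(Z\cup\{z\})$ must strictly $\succ$-beat $w$. Both routes are sound; yours is shorter and reuses established machinery, while the paper's is self-contained and independent of Proposition \ref{prop:rev_exc}. The one point you should make explicit is where the representation property of $(\succ, \mathcal{M}^L)$ comes from: within the paper's proof of Proposition \ref{prop:double}, that property is only fully verified (via Lemma \ref{lem:exc2pref}) \emph{after} the present lemma, so you must import it as an instance of Theorem \ref{thm:unknown_true_pref} and Corollary \ref{cor:maximal} applied to $c_L$ --- legitimate, since $c_L$ satisfies IEA and those results precede Proposition \ref{prop:double} --- rather than as a fact already established in the ongoing proof. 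With that citation made precise, there is no circularity and your argument goes through.
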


\begin{proof}
The proof is similar to that of Lemma \ref{lem:exc2tree}. Now we let $B_i$ be the union of $B_{i-1}$ and the set of items $b$ such that $B' \; R \; b$ for some $B' \subseteq B_{i-1}$. The rest of the construction is as before. So is the proof that $\succ_m \in \mathcal{M}^H$, with one exception. We have to establish that $z \succ Z$ whenever $Z \; R \; z$. We already know this is true if $z$ is revealed excluded by $Z$. Suppose that $z$ is replaced in $Z \cup \{z\}$, so $z = c_L(Z \cup \{z\}) \neq c_H(Z \cup \{z\})$. Suppose further that no item in $Z$ is revealed excluded in $L$ by any subset of $Z \cup \{z\}$. 

Fix $z' \in Z$. There are two ways to get $z' \succ z$. The first is $c_L(\{z', z\}) = z$ and $(z', z) \in \text{tr}(C_L)$. In this case, $z'$ is revealed excluded in $L$ by $z$, which contradicts our assumption about $Z$. The second is $c_L(\{z', z\}) = z'$. We show this doesn't happen: $c_L(\{z, z'\}) = c_L(Z \cup \{z\}) = z$. Take $Z' \subseteq Z \cup \{z\}$ such that $|Z'| = k$. Suppose that the restriction of $c_L$ to the proper subsets of $Z'$ satisfies WARP. (This is trivially true if $|Z'| = 3$.) If the restriction of $c_L$ to $Z'$ violates WARP, then $Z'$ is an almost-WARP set or a cycle. In either case, some item in $Z'$ is revealed excluded in $L$ by the rest of $Z'$, which contradicts our assumption about $Z$. We conclude that the restriction of $c_L$ to $Z \cup \{z\}$ satisfies WARP, so $z = c_L(\{z\} \cup Z)$ is chosen whenever it is available. In particular, $c_L(\{z, z'\}) = z$ for all $z' \in Z$. This completes the proof that $Z \; R \; z$ implies $z \succ Z$. The argument from the proof of Lemma \ref{lem:exc2tree} goes through from here. 
\end{proof}

We must have an $R$-tree starting at $c_H(A)$ and ending at $A' \subset A \setminus \{c_H(A)\}$. Consider the set consisting of all the nodes in the tree, plus anything else in $A$. IREA implies that $c_H(A)$ is not chosen from this set under $H$. Now remove everything that is not in $A$. IREA says that choice under $H$ is unchanged. That is, $c_H(A)$ can't be chosen from $A$ under $H$---contradiction. 

Now suppose $a, c_L(A)$ in $M_L(A)$ and $a \succ c_L(A)$. Lemma \ref{lem:exc2pref} rules this out just as before. Finally, suppose $a, c_H(A)$ in $M_H(A)$ and $a \succ c_H(A)$. We can remove all the items in $A$ that are revealed excluded in $L$ by, or replaced in, a subset of $A$. Call the resulting set $A^*$. By IREA, choice in $H$ is unchanged: $c_H(A) = c_H(A^*)$. Notice that $a \in A^*$: otherwise, $a$ could not be in $M_H(A)$. Notice also that $c_L$ and $c_H$ agree on $A^*$. (Otherwise, some items in $A^*$ would be replaced in a subset of $A$, contradicting the definition of $A^*$.) We can now apply Lemma \ref{lem:exc2pref} to get $c_L(A^*) \succ a$. Since $c_L(A^*) = c_H(A^*)$ and $c_H(A^*) = c_H(A)$, we have $c_H(A) \succ a$---contradiction.

\subsection{Proof of Corollary \ref{cor:WARP_for_cL}}
Suppose that $c_L$ satisfies WARP, so it maximizes a unique preference $\succ$. If $c_H$ has a justifiability representation, then $(c_L, c_H)$ satisfies the conditions in Proposition \ref{prop:double}. We verify that $c_H$ will also satisfy IUA conditional on $\succ$. It suffices to show that $c_H$ has a justifiability representation with true preference $\succ$. But since there are no cycles under $L$, this is precisely the representation constructed in Proposition \ref{prop:double}.

\section{Additional Results}

\subsection{Case excluded from Proposition \ref{prop:unobserved_EU}}

Unlike Proposition \ref{prop:unobserved_EU}, the first part of Proposition \ref{prop:edge_case} is completely general. It applies whenever $c$ has an EU justifiability representation, even if the minimal set of justifiable EU preferences is a singleton. 

\begin{proposition}
\label{prop:edge_case}
Fix $c$ and $p \in \text{int}(\Delta(Z))$. 
\begin{enumerate}
\item Either $c$ lacks a justifiability representation, or it has one for precisely those EU preferences such that
\begin{align}
\label{eq:indiff}
&\{p, q\} = c(\{p, q\}) \text{ implies } p \sim q \\
\label{eq:direction}
\text{and } &\{q \in \Delta(Z): p \in c(\{p, q\}) \text{ and } p \succsim q\} \text{ is closed}.
\end{align}
\item Unless $\{q \in \Delta(Z): p \in c(\{p, q\})\}$ is a half-space (restricted to the simplex), there is at most one EU preference that satisfies (\ref{eq:indiff}) and (\ref{eq:direction}).
\end{enumerate}
\end{proposition}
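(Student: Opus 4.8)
The plan is to establish Part 1 by proving the two directions of the characterization and then to read off Part 2 from Part 1 together with Proposition \ref{prop:unobserved_EU}. Throughout I take $c$ to admit at least one EU justifiability representation, since otherwise the first branch of the dichotomy in Part 1 holds and there is nothing to identify. Necessity is immediate from Theorem \ref{thm:EU}: in any representation $(\succsim, \mathcal{M})$, Optimization forces $\{p,q\} = c(\{p,q\})$ to imply $p \sim q$, which is (\ref{eq:indiff}); and the set displayed in (\ref{eq:direction}) is literally $NB(p)$, whose closedness is the second part of Continuity and hence necessary.

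For sufficiency I would fix one representation $(\succsim_0, \mathcal{M}_0)$ of $c$ and show that, for any strictly $D$-monotone EU preference $\succsim$ obeying (\ref{eq:indiff}) and (\ref{eq:direction}), the pair $(\succsim, \mathcal{M}_0)$ is again a representation. Because $M(A) = \bigcup_{m \in \mathcal{M}_0}\argmax(A,m)$ does not depend on the tiebreaker, it is enough to verify $\argmax(M(A),\succsim) = \argmax(M(A),\succsim_0) = c(A)$ for every $A$, and for this it suffices that $\succsim$ and $\succsim_0$ rank every pair of co-justifiable lotteries—pairs $q,q'$ lying in a common $M(A)$—the same way. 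The useful fact here is that such a pair has $M(\{q,q'\}) = \{q,q'\}$, so $c(\{q,q'\})$ reports the $\succsim_0$-comparison of $q$ and $q'$ directly, with no filtering by $\mathcal{M}_0$.

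The engine of the argument is to propagate the single-point conditions to arbitrary pairs. Given $c(\{q,q'\}) = \{q,q'\}$, I would use the second part of Independence to mix both lotteries with a common $r$ in proportion $\alpha$, choosing $\alpha$ small and $r$ so that $\alpha q + (1-\alpha)r = p$—possible since $p$ is interior—to obtain $c(\{p, \alpha q'+(1-\alpha)r\}) = \{p, \alpha q'+(1-\alpha)r\}$; then (\ref{eq:indiff}) yields $p \sim \alpha q' + (1-\alpha)r$ and $\succsim$-Independence returns $q \sim q'$. Thus $\succsim$ inherits exactly the revealed-indifference relation of $\succsim_0$, so the two EU preferences share indifference curves wherever those curves are determined. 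The strict co-justifiable comparisons are then pinned down by the direction information in (\ref{eq:direction}), transported off $p$ using Lemma \ref{lem:shift}, which moves closedness of $NB(p)$ to every interior point. I expect this last step—ruling out that $\succsim$ reverses a strict co-justifiable comparison of $\succsim_0$ using only the conditions at $p$ plus linearity—to be the main obstacle.

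Part 2 is then short. Suppose $\{q : p \in c(\{p,q\})\}$ is not a half-space. In the case (covered by the second branch of Part 1) where $c$ admits a representation, the EU preferences satisfying (\ref{eq:indiff}) and (\ref{eq:direction}) are exactly those for which $c$ has an EU justifiability representation; by Proposition \ref{prop:unobserved_EU} there is at most one of the latter when the set is not a half-space. Hence at most one EU preference satisfies (\ref{eq:indiff}) and (\ref{eq:direction}), as claimed.
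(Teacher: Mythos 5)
The paper itself states Proposition \ref{prop:edge_case} without proof, so there is no official argument to compare against; judged on its own merits, your proposal has the right skeleton but a genuine gap at its center. What is correct: necessity is exactly as you say (Optimization gives (\ref{eq:indiff}); the set in (\ref{eq:direction}) is literally $NB(p)$, and the second part of Continuity gives closedness); the reduction of sufficiency to ``$\succsim$ and $\succsim_0$ rank all co-justifiable pairs the same way'' is valid because $M(\cdot)$ does not involve the tiebreaker; and the mixing argument that transports revealed indifference to $p$ and back via choice-Independence and $\succsim$-Independence is sound.

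The gap is the step you flag and defer: that (\ref{eq:direction}) pins down the \emph{strict} co-justifiable comparisons. That step is where all the content of the proposition lives, and it cannot be completed ``using only the conditions at $p$ plus linearity.'' Consider the edge case the proposition exists for: the minimal set of justifiable preferences is a singleton $\{m\}$, so co-justifiable pairs are pairs on a common hyperplane $H_m = \{q : m'q = m'p\}$ and binary choice is lexicographic ($m$ first, then $\succsim_0$). There, (\ref{eq:indiff}) plus linearity (your Step A) only forces the candidate utility into the two-parameter family $u = \lambda u_0 + \mu m$, and the signs must then be sorted out. Closedness of $NB(p)$ does rule out $\lambda < 0$ and the ray $\lambda = 0, \mu > 0$; but the reversed-justification ray $u = \mu m$ with $\mu < 0$ satisfies \emph{both} (\ref{eq:indiff}) and (\ref{eq:direction}): its $NB(p)$ equals $\{q : m'q = m'p \text{ and } u_0'q \le u_0'p\}$, a hyperplane intersected with a closed half-space, hence closed. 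This candidate flattens rather than preserves the co-justifiable comparisons, so your agreement claim fails for it, and it is eliminated only by strict $D$-monotonicity (using the standing assumption that two prizes are dominance-ranked) --- a hypothesis your sketch states but never uses. A correct completion therefore needs (i) a case split keyed to exactly the half-space dichotomy of Part 2: in the non-half-space case, the perturbation argument from the proof of Proposition \ref{prop:unobserved_EU} shows (\ref{eq:indiff}) already forces $u = \lambda u_0$, and closedness gives $\lambda > 0$; in the half-space case, the family above; and (ii) an explicit appeal to monotonicity to kill the $\lambda = 0, \mu < 0$ ray. (Also note Lemma \ref{lem:shift} is stated for $B(p)$ and $W(p)$, so the analogous translation property for $NB(p)$ needs its own one-line argument.) Your Part 2 is fine as conditional logic --- Part 1 plus Proposition \ref{prop:unobserved_EU} --- but it inherits whatever is missing from Part 1.
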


\subsection{Comparing minimal sets of justifiable EU preferences}

\begin{corollary}
Suppose that $(\succsim, c_1)$ and $(\succsim, c_2)$ have EU justifiability representations $(u, \mathcal{M}_1^{EU})$ and $(u, \mathcal{M}_2^{EU})$ respectively. If $B_1(p) \supseteq B_2(p)$ for some $p \in \text{int}(\delta(Z))$, then $(u, \text{co}(\mathcal{M}^{EU}_1 \cup \mathcal{M}^{EU}_2))$ is an EU justifiability representation of $(\succsim, c_2).$
\end{corollary}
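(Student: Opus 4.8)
The plan is to exploit the fact that, by equation (\ref{eq:intersect}), the cone $B_i(p)$ is exactly the set of lotteries weakly below $p$ that every utility in $\mathcal{M}^{EU}_i$ strictly prefers to $p$; consequently the hypothesis $B_1(p) \supseteq B_2(p)$ translates into a containment of utility sets, and the enlarged set $\text{co}(\mathcal{M}^{EU}_1 \cup \mathcal{M}^{EU}_2)$ will be squeezed between $\mathcal{M}^{EU}_2$ and the maximal representation of $c_2$, both of which already represent $c_2$. Throughout I would write $\mathcal{M}^{EU}_{2,\max}$ for the maximal set of justifiable utilities for $(\succsim, c_2)$ constructed in the proof of Corollary \ref{cor:EU_unique}, recalling that it is compact, convex, is itself an EU justifiability representation of $(\succsim, c_2)$, and consists of exactly the weakly $D$-monotone utilities whose indifference curve through $p$ supports $B_2(p)$ with $B_2(p)$ on the strictly preferred side.

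First I would show $\mathcal{M}^{EU}_1 \cup \mathcal{M}^{EU}_2 \subseteq \mathcal{M}^{EU}_{2,\max}$. For $m \in \mathcal{M}^{EU}_1$, applying (\ref{eq:intersect}) to $c_1$ gives $m'q > m'p$ for every $q \in B_1(p)$, hence for every $q \in B_2(p)$ since $B_2(p) \subseteq B_1(p)$; as $B_2(p)$ is a cone with vertex $p$, the indifference hyperplane of $m$ through $p$ then supports $B_2(p)$, and $m$ is weakly $D$-monotone because it lies in a valid representation, so $m \in \mathcal{M}^{EU}_{2,\max}$. The identical three checks applied to $m \in \mathcal{M}^{EU}_2$, now using (\ref{eq:intersect}) for $c_2$, give $\mathcal{M}^{EU}_2 \subseteq \mathcal{M}^{EU}_{2,\max}$; the degenerate case $B_2(p) = \emptyset$ is immediate, since then $\mathcal{M}^{EU}_{2,\max}$ contains every weakly $D$-monotone normalized utility. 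Since $\mathcal{M}^{EU}_{2,\max}$ is closed and convex, taking the closed convex hull yields $\mathcal{M}^{EU}_2 \subseteq \mathcal{N} \subseteq \mathcal{M}^{EU}_{2,\max}$, where $\mathcal{N} := \text{co}(\mathcal{M}^{EU}_1 \cup \mathcal{M}^{EU}_2)$.

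The final step is a sandwich argument. For any finite menu $A$, monotonicity of the map $\mathcal{M} \mapsto M(A) = \bigcup_{m \in \mathcal{M}} \argmax_{a \in A}\mathbb{E}_a m$ in set inclusion gives $M_{\mathcal{M}^{EU}_2}(A) \subseteq M_{\mathcal{N}}(A) \subseteq M_{\mathcal{M}^{EU}_{2,\max}}(A)$, each a nonempty finite subset of $A$. Because $(u, \mathcal{M}^{EU}_2)$ and $(u, \mathcal{M}^{EU}_{2,\max})$ both represent $c_2$, the $u$-argmax over the outer two sets equals $c_2(A)$; the elementary fact that $\argmax_S f = \argmax_T f$ together with $S \subseteq U \subseteq T$ forces $\argmax_U f$ to coincide then pins the $u$-argmax over $M_{\mathcal{N}}(A)$ to $c_2(A)$ as well. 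Combined with the observations that $\mathcal{N}$ is compact, convex, weakly $D$-monotone, and that $u$ represents $\succsim$, this shows $(u, \mathcal{N})$ is an EU justifiability representation of $(\succsim, c_2)$. I expect the only real obstacle to be the translation in the second paragraph --- verifying through (\ref{eq:intersect}) that cone inclusion delivers genuine membership in $\mathcal{M}^{EU}_{2,\max}$ rather than merely a weaker relation between the sets; the sandwich step is then routine once one notes that enlarging the utility set can only enlarge each $M(A)$.
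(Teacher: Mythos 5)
Your proof is correct, and while it starts from the same key translation as the paper, it closes the argument by a genuinely different route. Both arguments hinge on the observation that, by (\ref{eq:intersect}) applied to $(\succsim, c_1)$, every $m \in \mathcal{M}^{EU}_1$ satisfies $m'q > m'p$ on $B_1(p)$, hence on $B_2(p) \subseteq B_1(p)$. From there, the paper shows directly that the enlarged set $\text{co}(\mathcal{M}^{EU}_1 \cup \mathcal{M}^{EU}_2)$ itself satisfies the identity (\ref{eq:liberal}): a failure would require some $m \in \mathcal{M}^{EU}_1$ to cut off a point of $B_2(p)$, contradicting the half-space characterization of $B_1(p)$; it then declares this ``all we need,'' since (\ref{eq:intersect}) is the sufficient statistic that the sufficiency argument in the proof of Theorem \ref{thm:EU} consumes. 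You instead sandwich: each element of $\mathcal{M}^{EU}_1 \cup \mathcal{M}^{EU}_2$ lies in the maximal set $\mathcal{M}^{EU}_{2,\max}$ of Corollary \ref{cor:EU_unique} (and hence so does its convex hull, because the defining conditions---weak $D$-monotonicity and strict preference for $B_2(p)$ over $p$---are preserved under convex combinations of utilities), and since both $\mathcal{M}^{EU}_2$ and $\mathcal{M}^{EU}_{2,\max}$ generate $c_2$, monotonicity of $\mathcal{M} \mapsto M(A)$ plus your argmax-sandwich lemma (which is valid: nested sets with equal argmax at the extremes force equality in the middle) pins the intermediate set to $c_2$ as well. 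Your route is more modular---it treats Theorem \ref{thm:EU} and Corollary \ref{cor:EU_unique} as black boxes and never re-runs the geometric construction---at the price of relying on the claim in Corollary \ref{cor:EU_unique} that the maximal set is itself a representation; the paper's route is self-contained given Theorem \ref{thm:EU} alone. One caveat applies to both proofs: the statement concerns \emph{arbitrary} representations $(u, \mathcal{M}^{EU}_i)$, whereas (\ref{eq:intersect}) is established in the paper only for the set constructed in the proof of Theorem \ref{thm:EU}. You should note (as the paper also should) that the identity holds for any EU justifiability representation, by unpacking choice on binary menus: for $q \neq p$ with $p \succsim q$, we have $\{q\} = c_i(\{p,q\})$ if and only if no $m \in \mathcal{M}^{EU}_i$ has $m'p \geq m'q$, since any such $m$ would place $p$ in $M_i(\{p,q\})$ and $u'p \geq u'q$ would then force $p \in c_i(\{p,q\})$.
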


\begin{proof}
Construct the justifiable Bernoulli utilities for each DM as in Theorem \ref{thm:EU}. Then take the convex hull of the union. We will have
\begin{equation}
\label{eq:liberal}
B_2(p) = \bigcap_{m \in \text{co}(\mathcal{M}^{EU}_1 \cup \mathcal{M}^{EU}_2)}\{q \in \Delta(Z): m'q > m'p\} \cap \{q \in \Delta(Z): p \succsim q\}
\end{equation}
In the proof of Theorem \ref{thm:EU}, we saw that (\ref{eq:liberal}) holds with $\mathcal{M}^{EU}_2$ in place of $\text{co}(\mathcal{M}^{EU}_1 \cup \mathcal{M}^{EU}_2)$. Thus, (\ref{eq:liberal}) fails only if there exist $m \in \mathcal{M}^{EU}_1$ and $q \in B_2(p)$ such that $m'p \geq m'q$. Since 
\[B_1(p) = \bigcap_{m \in \mathcal{M}^{EU}_1}\{q \in \Delta(Z): m'q > m'p\} \cap \{q \in \Delta(Z): p \succsim q\},\]
$q$ cannot belong to $B_1(p)$. This contradicts $B_1(p) \supseteq B_2(p)$, so (\ref{eq:liberal}) holds. This is all we need to establish that $\text{co}(\mathcal{M}^{EU}_1 \cup \mathcal{M}^{EU}_2)$ is a set of justifiable Bernoulli utilities for DM 2. 
\end{proof}

\end{document}